\documentclass[a4paper,twocolumn,11pt]{quantumarticle}
\pdfoutput=1
\usepackage[utf8]{inputenc}
\usepackage[english]{babel}
\usepackage[T1]{fontenc}

\usepackage{amsmath,amsfonts,amssymb,bm}
\usepackage{graphicx,color}
\usepackage{enumerate}
\definecolor{darkblue1}{rgb}{0.18,0.19,0.57}
\usepackage{hyperref}

\usepackage{theorem}

\newtheorem{proposition}{Proposition}
\newtheorem{lemma}[proposition]{Lemma}
\newtheorem{theorem}[proposition]{Theorem}

\newenvironment{proof}{\noindent \textbf{{Proof~} }}{\hfill $\blacksquare$}

\usepackage{tikz}

\newenvironment{mytikz2}{\begin{tikzpicture}[x=0.4pt,y=0.4pt,yscale=-1,xscale=1,baseline={([yshift=+0ex]current bounding box.center)}]}{\end{tikzpicture}}


\newcommand{\nc}{\newcommand}
\nc{\ket}[1]{|#1\rangle}
\nc{\bra}[1]{\langle#1|}
\nc{\ketbra}[2]{|#1\rangle\!\langle#2|}
\nc{\braket}[2]{\langle#1|#2\rangle}
\nc{\braoprket}[3]{\langle#1|#2|#3\rangle}
\nc{\opr}[1]{\operatorname{#1}}
\nc{\avg}[1]{\langle#1\rangle}
\nc{\ketbrasame}[1]{|#1\rangle\!\langle#1|}
\nc{\tr}{\operatorname{tr}}
\nc{\E}{\mathbb{E}}
\nc{\var}{\operatorname{Var}}

\usepackage{quantikz}

\nc{\hk}[1]{\textcolor{violet}{\textbf{[hk: #1]}}}
\nc{\ch}[1]{\textcolor{blue}{\textbf{[ch: #1]}}}
\usepackage[normalem]{ulem}
\nc{\hknew}[1]{\textcolor{violet}{#1}}

\begin{document}
\title{Predicting quantum learnability from landscape fluctuation}

\author{Hao-Kai Zhang}
\email{zhk20@mails.tsinghua.edu.cn}
\affiliation{These two authors contributed equally to this work}

\affiliation{Institute for Advanced Study, Tsinghua University, Beijing 100084, China}

\author{Chenghong Zhu}
\affiliation{These two authors contributed equally to this work}
\affiliation{Thrust of Artificial Intelligence, Information Hub, The Hong Kong University of Science and Technology (Guangzhou), Guangdong 511453, China}

\author{Xin Wang}
\email{felixxinwang@hkust-gz.edu.cn}
\affiliation{Thrust of Artificial Intelligence, Information Hub, The Hong Kong University of Science and Technology (Guangzhou), Guangdong 511453, China}


\begin{abstract}
The conflict between trainability and expressibility is a key challenge in variational quantum computing and quantum machine learning. Resolving this conflict necessitates designing specific quantum neural networks (QNN) tailored for specific problems, which urgently needs a general and efficient method to predict the learnability of QNNs without costly training. In this work, we demonstrate a simple and efficient metric for learnability by comparing the fluctuations of the given training landscape with standard learnable landscapes. This metric shows surprising effectiveness in predicting learnability as it unifies the effects of insufficient expressibility, barren plateaus, bad local minima, and overparametrization. Importantly, it can be estimated efficiently on classical computers via Clifford sampling without actual training on quantum devices. We conduct extensive numerical experiments to validate its effectiveness regarding physical and random Hamiltonians. We also prove a compact lower bound for the metric in locally scrambled circuits as analytical guidance. Our findings enable efficient predictions of learnability, allowing fast selection of suitable QNN architectures for a given problem without training, which can greatly improve the efficiency especially when access to quantum devices is limited. 
\end{abstract}

\maketitle

\section{Introduction}

The rapid development of quantum computing~\cite{Preskill2018, Arute2019, Zhong2020, Bluvstein2024} has paved the way for groundbreaking advancements in various fields, with quantum machine learning (QML) emerging as a promising next-generation artificial intelligence technology. A central goal in QML is learning unknown quantum states, the quantum analog of learning probability distributions in classical machine learning~\cite{Kearns1994}. Many practical problems can be abstracted into state learning such as ground state preparation of many-body systems in condensed matter physics and chemistry~\cite{Zheng2017, Qin2020, Xu2024, Kandala2017}, quantum tomography~\cite{Huang2020, Huang2024a, Huang2024}, and combinatorial optimization~\cite{Farhi2014, Zhou2020}. Quantum neural networks (QNN) have emerged as a natural and powerful approach for such problems, generally referring to parametrized quantum circuits (PQC) equipped with classical optimizers. Training QNNs to minimize task-dependent cost functions has been a paradigm as one of the most promising paths to practical quantum advantages on near-term quantum computers~\cite{Bharti2022, Cerezo2021a}.

However, this hybrid quantum-classical paradigm still faces great challenges because of its trainability and the tradeoff with expressibility. The most studied trainability issue is the so-called barren plateau phenomenon~\cite{McClean2018, Cerezo2021, Uvarov2021a, Pesah2021, Arrasmith2021, Holmes2021, Larocca2022, Liu2024, Zhang2024a, Cerezo2023, Larocca2024, Zhang2024, Liu2023a_z, Ragone2024}, which means that the cost gradient vanishes exponentially with the system size for deep QNNs under certain conditions. Another notorious issue is the bad local minimum problem~\cite{Anschuetz2022, Bittel2021, Zhang2023}, which exists extensively in shallow circuits like quantum convolutional neural networks~\cite{Cong2019} and brickwall circuits of moderate constant depth. These issues impose severe constraints on the expressive power of trainable QNNs. On the other hand, the learning will also fail if the expressive power is too weak to contain the learning target. The learnability, defined by the overall capability of a QNN to learn a given target and quantified by the final error to the ground truth, is adequate only when both expressibility and trainability are good enough to contain the target and navigate smoothly toward it. Resolving this conflict necessitates devising problem-specific QNNs. For example, for states with entanglement area law, finite local-depth circuits can be used to avoid barren plateaus~\cite{Zhang2024}. One can also use special gates instead of universal gates to construct ansatzes, e.g., Hamiltonian variational ansatzes (HVA)~\cite{Wiersema2020, Larocca2023}. Devising problem-specific QNNs systematically urgently needs a method to diagnose which QNNs are good at learning which targets, i.e., to predict the learnability of a QNN regarding a given target. The direct way is to run an actual training procedure and see the error, but this is inefficient and depends on the details of classical optimizers, and in practice it can be hard to get the exact ground truth for benchmarks. This motivates us to seek a general and efficient metric to predict learnability.

\begin{figure*}
    \centering
    \includegraphics[width=0.99\linewidth]{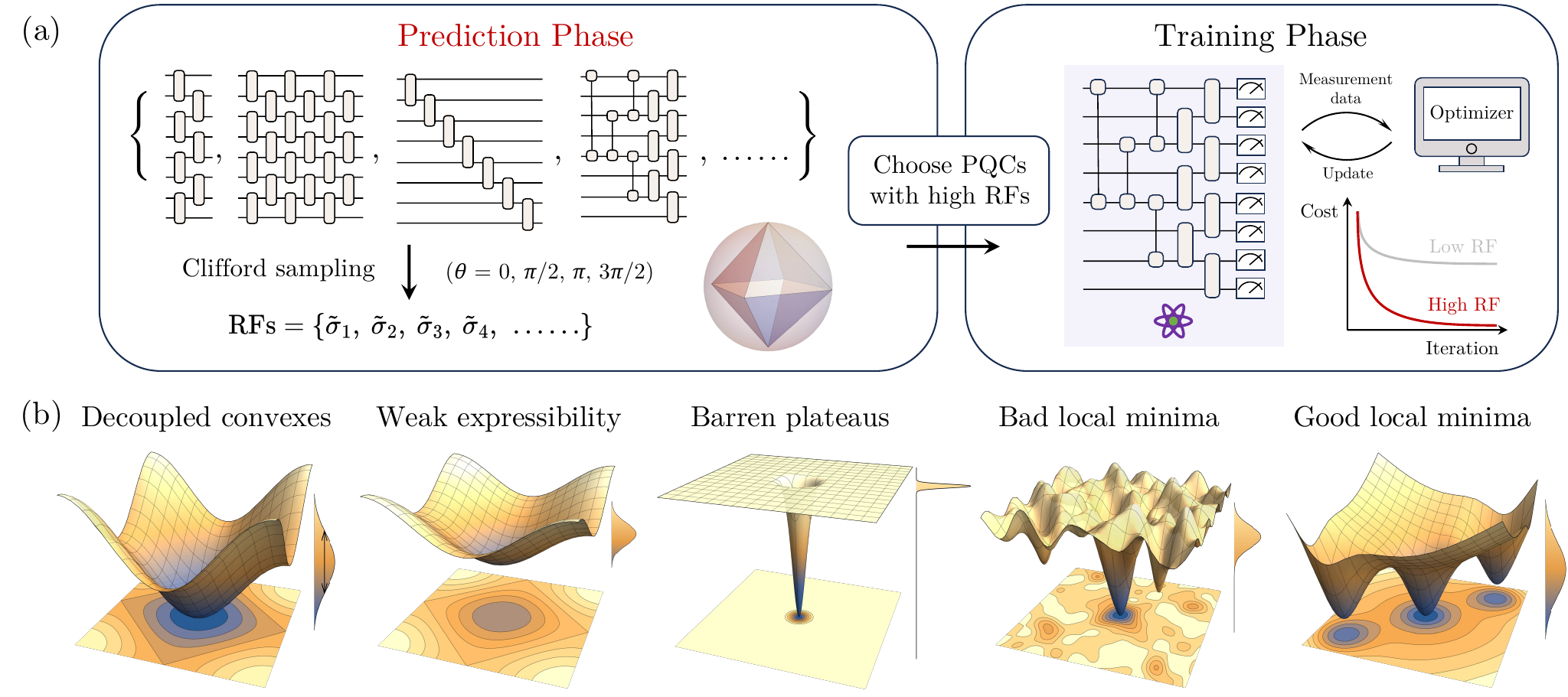}
    \caption{(a) Schematic workflow for variational quantum algorithms with learnability prediction. Prediction phase: For each architecture candidate, the relative fluctuation (RF) $\tilde{\sigma}$ of the cost function can be efficiently estimated on classical computers by Clifford sampling typically. High RF heralds better learnability and smaller training errors. Training phase: Select the architectures with high RFs and conduct training on real quantum devices. (b) Cartoons for typical training landscapes. The density of states in cost value is sketched on the right side. The spread indicates the landscape fluctuation $\sigma$. The parameter space is very high-dimensional in practice but here is only 2D for convenience of visualization.}
    \label{fig:workflow_landscape}
\end{figure*}

In this work, we demonstrate a simple and efficient metric for the learnability of QNNs termed ``relative fluctuation'' (RF). This metric emerges from comparing the fluctuations of the given training landscape and standard learnable landscapes, which is a set of decoupled convex functions. We conduct extensive numerical experiments to validate its effectiveness in predicting learnability and find this metric can detect all pivotal natures of QNNs, such as expressibility (reachability), barren plateaus, bad local minima, and overparametrization. Hence RF offers a unified picture for us to understand various crucial concepts in variational quantum algorithms. We also prove a lower bound for RF in circuits composed of local $2$-designs as analytical guidance. Notably, RF is highly efficient as it does not require actual training and can be estimated classically via Clifford sampling. Therefore, RF enables fast selections of suitable problem-specific QNN architectures, as in Fig.\,\ref{fig:workflow_landscape}(a), which can further serve as an indicator or even an objective function for automatic circuit design and improve the efficiency of future QML and variational quantum computing.

\section{Basic setup}
We will introduce the basic setup by focusing on the variational quantum eigensolver (VQE) for the ground state preparation problem. Other problems with different backgrounds can be described under a similar framework. VQE aims to prepare the ground state of an $N$-qubit system with the Hamiltonian $H$ given by $H=\sum_j\lambda_jh_j$ where $h_j$ is a Pauli string and $\lambda_j$ is a real coefficient. VQE uses a PQC $\mathbf{U}(\bm{\theta}) = \prod_{\mu} U_\mu(\theta_\mu)$ to construct the ansatz $\ket{\psi(\bm{\theta})}=\mathbf{U}(\bm{\theta})\ket{\bm{0}}$ involving $M$ tunable parameters with $\ket{\bm{0}}=\ket{0}^{\otimes N}$. $U_\mu(\theta_\mu)$ represents an elementary quantum gate depending on parameter $\theta_\mu$. A standard workflow of VQE involves running the PQC on a quantum device, measuring the cost function $C(\bm{\theta})=\braoprket{\psi(\bm{\theta})}{H}{\psi(\bm{\theta})}$, and updating the tunable parameters iteratively to minimize the cost.

The expressibility discussed below refers to not simply the expressive space size, as measured by frame potential~\cite{Holmes2021}. Instead, it depends on specific Hamiltonians, referring to the closeness of the lowest energy state within the expressive space to the true ground state as quantified by the reachability deficit~\cite{Bharti2022}. The difference compared to learnability is that the former does not care about whether the minimization process can be efficiently achieved by training. Thus, insufficient expressibility, barren plateaus, and bad local minima will all lead to poor learnability. Here bad local minima mean the cost values of most local minima concentrate far from the global minimum~\cite{Anschuetz2022}. Good local minima mean they are close, which may correspond to physical metastable states, such as the competing ordered states in high-temperature superconductors~\cite{Zheng2017, Qin2020, Xu2024}. By contrast, bad local minima should be seen as a severe trainability issue that needs to be fixed. The detailed setup is elaborated in Appendix~\ref{appendix:setup}.

\section{Relative fluctuation}
The basic information we used is the landscape fluctuation, i.e., the standard deviation of the (approximately) normalized cost function
\begin{equation}
    \sigma = \frac{\sqrt{\var_{\Theta} [C]} }{ \|\bm{\lambda}\|_1 },
\end{equation}
where $\Theta$ represents the uniform ensemble of parameter points. $\|\bm{\lambda}\|_1$ is the $1$-norm of the coefficient vector $\bm{\lambda}$ of the Pauli decomposition of the Hamiltonian. Physically, $\sigma$ can be seen as the energy fluctuation in the state ensemble induced by $\Theta$. As shown in Fig.\,\ref{fig:workflow_landscape}(b), characterizing learnability using $\sigma$ is natural and intuitive as the landscapes with either barren plateaus, insufficient expressibility, or bad local minima all tend to have their cost values concentrated around the averages, resulting in small $\sigma$, albeit to varying degrees. Especially, the exponential cost concentration is equivalent to the exponentially small variance of cost derivatives $\var_{\Theta} [\partial_\mu C]$~\cite{Arrasmith2021, Miao2024, Perez-Salinas2024}, which defines barren plateaus. However, as we will see below, besides exponential scaling implying barren plateaus, the specific values of $\sigma$ also encode unexpected useful information about expressibility, bad local minima, and overparametrization.

The landscape fluctuation $\sigma$ alone is not enough to characterize learnability, not simply as in Fig.\,\ref{fig:workflow_landscape}(b). Consider an extreme case: for a fully parametrized global unitary operator, the learnability should be strong while $\sigma$ is exponentially small. This contradiction arises because $\sigma$ does not correctly address the role of the expressive space dimension, resulting in an unfair comparison of ansatzes with different amounts of parameters. Therefore, we define the so-called relative fluctuation (RF)
\begin{equation}
    \tilde{\sigma} = \frac{\sigma}{\sigma_0},
\end{equation}
where $\sigma_0$ is the landscape fluctuation of some ``standard learnable'' function $C_0(\bm{\theta})$. The form of $\sigma_0$ can be determined heuristically through several fundamental assumptions. (i) To ensure good trainability, $C_0(\bm{\theta})$ should be convex. (ii) To ensure sufficient expressibility, there should exist a point $\bm{\theta}^*$ such that $C_0(\bm{\theta}^*)=-1$ (after normalization). (iii) $C_0(\bm{\theta})$ should have a similar form in arbitrary dimensions. A simple method is to define them recursively. Combined with (i), this recursive operation can be chosen as the addition since it preserves the convexity for functions of independent variables. This leads to  ``decoupled convex functions''
\begin{equation}
    C_0(\bm{\theta})=\sum_\mu f_\mu(\theta_\mu),
\end{equation}
where $f_\mu$ is a single-variable convex function.  According to the central limit theorem, the distribution of the cost value of $C_0(\bm{\theta})$ will converge to Gaussian asymptotically with $\sigma_0\sim 1/\sqrt{M}$ given similar forms of $f_\mu$. (iv) To conform with the parameter-shift rule~\cite{Cerezo2021a}, $f_\mu$ should be sine-shaped which is not strictly convex but is equally trainable. Combined with (ii), the scaling coefficient can be further determined to $1/\sqrt{2}$. To incorporate overparametrization, the parameter count that should go into $\sigma_0$ is the effective dimension $M_{\text{eff}}$ of the expressive space (manifold) instead of the plain dimension $M$, i.e.,
\begin{equation}
    \sigma_0 = \frac{1}{\sqrt{2M_{\text{eff}}}},
\end{equation}
where $M_{\text{eff}}$ is defined by the maximum rank of the quantum Fisher information (QFI) matrix~\cite{Haug2021a, Larocca2022, Larocca2023}
\begin{equation}
\begin{aligned}
    \mathcal{F}_{\mu\nu} &= 4\opr{Re}[\braket{\partial_\mu \psi}{\partial_\nu \psi} - \braket{\partial_\mu \psi}{\psi} \braket{\psi}{\partial_\nu \psi} ],
\end{aligned}
\end{equation}
over the parameter space. $M_{\text{eff}}$ can be efficiently estimated by random sampling. Importantly, it holds that $M_{\text{eff}}\leq M$ and $M_{\text{eff}}$ can usually be replaced by $M$, especially in underparametrization cases. One is referred to Appendix~\ref{appendix:details_RF} for elaboration. 

The numerical experiments below will show RF is indeed an effective metric with $\tilde{\sigma}=1$ signifying the computational phase transition or crossover between strong and weak learnability. Before that, we first give a lower bound of compact form as analytical guidance.




\begin{theorem}\label{thm:1}
For a PQC composed of blocks forming independent local $2$-designs and an $r$-local Hamiltonian, the landscape fluctuation is lower bounded by
\begin{equation}
    \sigma \geq 2^{-r\chi\beta} \frac{\|\bm{\lambda}\|_2}{\|\bm{\lambda}\|_1},
\end{equation}
where $\chi$ is the maximum local depth and $\beta$ is the maximum block size of the circuit.
\end{theorem}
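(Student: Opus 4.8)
The plan is to expand the cost into its Pauli components and reduce the variance to a covariance matrix that the local $2$-design structure renders diagonal. Writing $C(\bm\theta)=\sum_j\lambda_j C_j$ with $C_j=\langle\bm0|\mathbf U^\dagger h_j\mathbf U|\bm0\rangle$, I would first observe that each block, being in particular a $1$-design, averages every nontrivial Pauli string to zero, so $\E_\Theta[C_j]=0$ and hence $\var_\Theta[C]=\sum_{j,k}\lambda_j\lambda_k\,\E_\Theta[C_jC_k]$. The whole problem then becomes controlling the covariance matrix $\Sigma_{jk}=\E_\Theta[C_jC_k]$, which is a degree-$2$ moment in each block and is therefore computed exactly by the $2$-design through the projector onto $\mathrm{span}\{\mathbb I,\mathrm{SWAP}\}$ on the two replicas.

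Next I would show that $\Sigma_{jk}=0$ whenever $j\neq k$. Two distinct Pauli strings differ on at least one qubit $q$; integrating the blocks layer by layer from the $H$-side, consider the first block $B$ containing $q$. By construction no previously integrated block has touched $q$, so the replicated operator still carries the bare factors $h_j|_q$ and $h_k|_q$ on the two copies of $q$. The $2$-design average over $B$ weights the result only by $\tr[M]$ and $\tr[M\,\mathrm{SWAP}_B]$ of the incoming operator $M$, and both factor through qubit $q$ as $\tr[h_j|_q]\tr[h_k|_q]$ and $\tr[h_j|_q\,h_k|_q]$ respectively. Since $h_j|_q\neq h_k|_q$ are distinct single-qubit Paulis, each of these vanishes regardless of what earlier integrations did to the other qubits of $B$. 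Thus both the $\mathbb I$ and $\mathrm{SWAP}$ weights die, $\Sigma$ is diagonal, and $\var_\Theta[C]=\sum_j\lambda_j^2\,\var_\Theta[C_j]$.

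It then remains to bound each diagonal term $\var_\Theta[C_j]=\E_\Theta[C_j^2]$ from below. Running the same replica calculation on $h_j\otimes h_j$, the blocks overlapping $\mathrm{supp}(h_j)$ convert the observable into the $\mathrm{SWAP}$ sector with a weight $\sim 2^{-\beta}$ per block, while deeper blocks propagate the $\mathrm{SWAP}$ domain wall toward the $|\bm0\rangle$ boundary; what survives is a sum over $\mathbb I/\mathrm{SWAP}$ configurations supported on the backward causal cone $\mathcal L$ of $h_j$. Because the circuit has maximum local depth $\chi$ and maximum block size $\beta$, this cone contains at most $O(r\chi)$ blocks, each contributing a factor no smaller than $\sim 2^{-\beta}$. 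Lower-bounding the configuration sum by its dominant term then yields $\var_\Theta[C_j]\ge 2^{-2r\chi\beta}$, and assembling the pieces gives $\var_\Theta[C]\ge 2^{-2r\chi\beta}\|\bm\lambda\|_2^2$, i.e. $\sigma\ge 2^{-r\chi\beta}\|\bm\lambda\|_2/\|\bm\lambda\|_1$.

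The main obstacle is this last step. The $2$-design second moment produces sign-indefinite weights: the identity sector enters with a negative coefficient $\sim -2^{-2\beta}$, so lower-bounding the configuration sum by a single dominant $\mathrm{SWAP}$ configuration is not automatic and requires showing that the subleading, alternating-sign contributions cannot overwhelm the leading domain wall. I expect to control this either by a domain-wall (Ising-type) reformulation in which the cancellations are organized geometrically, or by bounding the contributions outside the dominant sector against the manifestly nonnegative total $\E_\Theta[C_j^2]\ge0$. The remaining bookkeeping, namely pinning the causal-cone block count to $O(r\chi)$ through the local-depth and block-size constraints so that the exponent comes out exactly $2r\chi\beta$, is routine once the domain-wall picture is in place.
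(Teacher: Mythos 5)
Your setup --- the Pauli decomposition $C=\sum_j\lambda_j C_j$, the vanishing mean, and the diagonality of the covariance matrix $\Sigma_{jk}$ via the first block that touches a qubit where $h_j$ and $h_k$ differ --- is sound and matches the paper's cross-term elimination $\var_\mathbb{U}[C]=\sum_j\lambda_j^2\var_\mathbb{U}[\langle h_j\rangle]$ (the third case of the paper's Lemma S1). But the heart of the theorem is the lower bound on the diagonal terms, and there your proposal has a genuine gap that you yourself flag: in the $\mathbb{I}/\mathrm{SWAP}$ (Weingarten) expansion the per-block weights are sign-indefinite, so ``keep the dominant domain-wall configuration and drop the rest'' is not a valid lower-bounding step. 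Neither of your proposed repairs closes it. Comparing the discarded contributions against $\E_{\Theta}[C_j^2]\geq 0$ yields only the trivial bound $\var_{\Theta}[C_j]\geq 0$, not a positive one; and an Ising-type reformulation with provably nonnegative weights is not ``routine bookkeeping'' --- it is essentially the missing idea. The paper resolves exactly this point by a change of basis: instead of permutation sectors, it tracks doubled Pauli strings. Its Lemma S1 shows that a local $2$-design twirl maps $\sigma^{\otimes2}$ to a \emph{uniformly weighted, non-negative} sum of doubled non-trivial Pauli strings (weights summing to one), so the back evolution is a stochastic process in this basis. That positivity is what legitimizes discarding terms: the paper inserts ``selection channels'' keeping only strings that propagate along a chosen path set, obtains the intermediate bound $\var_\mathbb{U}[C]\geq\sum_j\sum_{P_j}\lambda_j^2\,2^{-2l(P_j)-w(P_j)}$ (Theorem S2), and then derives Theorem 1 by selecting the single path set running along the $r$ support wires of $h_j$.

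A secondary inaccuracy: the backward causal cone of $h_j$ does not contain $O(r\chi)$ blocks. In a depth-$D$ brickwall circuit the cone of even a single-site observable contains $\Theta(D^2)$ blocks, since it widens going backward. What contains at most $r\chi$ blocks is the set of blocks acting directly on the support qubits of $h_j$ (the ``straight wires''), and that is precisely the path set the paper selects; restricted to it, the per-edge factors $4^{-l}$ and the head factor $2^{-w}$ accumulate to $4^{-r\chi\beta}=2^{-2r\chi\beta}$, giving $\sigma\geq 2^{-r\chi\beta}\|\bm{\lambda}\|_2/\|\bm{\lambda}\|_1$. So your exponent bookkeeping targets the right configuration, but the object you attribute it to (the full cone) is wrong, and, more importantly, the positivity structure needed to isolate that configuration is absent in your formulation.
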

The proof is left to Appendix~\ref{appendix:proof} where a tighter bound is proven with the picture of information scrambling through paths across the circuit. Here a ``block'' represents a grouped continuous series of elementary gates as a unit constructing a PQC, such as the $\mathrm{SU}(4)$ Cartan decomposition. The local depth refers to the number of blocks acting on a qubit and $\chi$ refers to its maximum value over all qubits~\cite{Zhang2024}. $\chi$ will reduce to the conventional depth $D$ for common circuits with repeated layers. $\beta$ is the maximum support size of all blocks. For circuits composed of two-qubit blocks, we just have $\beta=2$. By use of Theorem~\ref{thm:1}, one can roughly estimate landscape fluctuations and RFs without numerical simulations for locally scrambled circuits of arbitrary shapes and geometries. For example, for constant local-depth circuits and extensive Hamiltonians, Theorem~\ref{thm:1} gives $\sigma\in\Omega(1/\sqrt{N})$ consistent with the $1/\sqrt{M}$ scaling of $\sigma_0$ as $M$ scales linearly with $N$.

RF is classically simulable via Clifford sampling for standard PQCs with stabilizer initial states since RF only involves second-order moments. Instead of $[0,2\pi)$, sampling the Pauli rotation angles from the discrete set $\{0,\pi/2,\pi,3\pi/2\}$ is enough, which is efficient by the stabilizer formalism. The consistency of the two sampling methods is verified in Appendix~\ref{appendix:additional}. In the following numerical simulations, we focus on uniform distributions of tunable parameters but we remark that the equivalence of the two sampling methods can still hold even for non-uniform distributions~\cite{Heyraud2023}. Thus, RF is also applicable in scenarios with non-uniformly random initialization.

\begin{figure}
    \centering
    \includegraphics[width=0.99\linewidth]{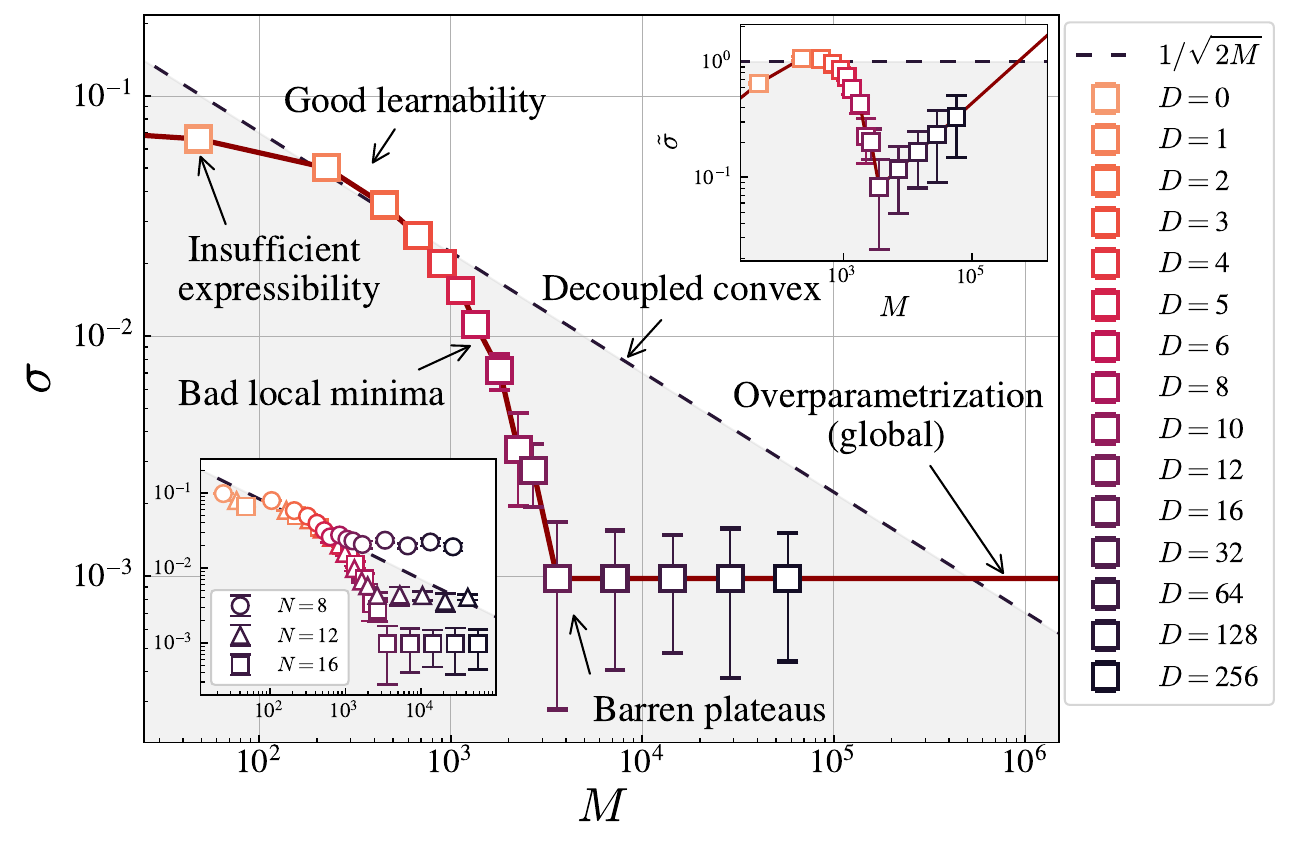}
    \caption{The landscape fluctuation $\sigma$ vs the number of tunable parameters $M$ in 1D brickwall circuits of depth $D$ and $N=16$ for the 1D cluster model, where $D=0$ stands for a layer of single-qubit rotation gates. Each block is a two-qubit random Clifford gate. The dashed line represents a reference boundary partitioning learnable and unlearnable phases. The upper inset shows the same data but in terms of the relative fluctuation $\tilde{\sigma}$. The lower inset compares the data for $N=8,12,16$.}
    \label{fig:depth_evolving}
\end{figure}

\section{Numerical experiments}
\subsection{Learnability evolution}
A typical behavior of $\sigma$ evolving with circuit depth $D$ is shown in Fig.\,\ref{fig:depth_evolving}. We use 1D brickwall circuits like the first two circuits in Fig.\,\ref{fig:workflow_landscape}(a) representing $D=1,3$. The Hamiltonian is chosen as the 1D transverse-field cluster model
\begin{equation}
    H_{ZXZ} = -\sum_j Z_{j-1} X_{j} Z_{j+1} - h \sum_j X_j,
\end{equation}
whose ground state at $|h|<1$ is a 1D symmetry-protected topological (SPT) state~\cite{Chen2011} that can be prepared with a layer of Hadamard gates followed by a layer of nearest-neighbor CZ gates. $\sigma$ decreases exponentially with $D$, consistent with Theorem~\ref{thm:1}, until the circuit forms a global $2$-design at the linear scale and converges to $\|\bm{\lambda}\|_2/\sqrt{2^N+1}\|\bm{\lambda}\|_1$.

Nontrivial information emerges from the comparison with $\sigma_0$ (cf. the inset of Fig.\,\ref{fig:depth_evolving}). At first, the circuit is too weak to include the ground state ($\tilde{\sigma}<1$). Then the expressive power becomes sufficient ($\tilde{\sigma}\geq 1$) for $D=1,2$ without causing bad local minima at least for $N = 16$. Next, the landscapes start to have extensive bad local minima and even become almost flat with barren plateaus at $D\in \Omega(N)$ ($\tilde{\sigma}\ll 1$). Finally, RF turns to increase and again enters the learnable phase ($\tilde{\sigma}\geq 1$) because the circuit is overparametrized by exponential parameters, resulting in the existence of a finite-gradient direction corresponding to the unitary rotation towards the ground state. This example shows RF can provide a unified description of the typical learnability evolution with circuit depth. 


\begin{figure}[t]
    \centering
    \includegraphics[width=0.98\linewidth]{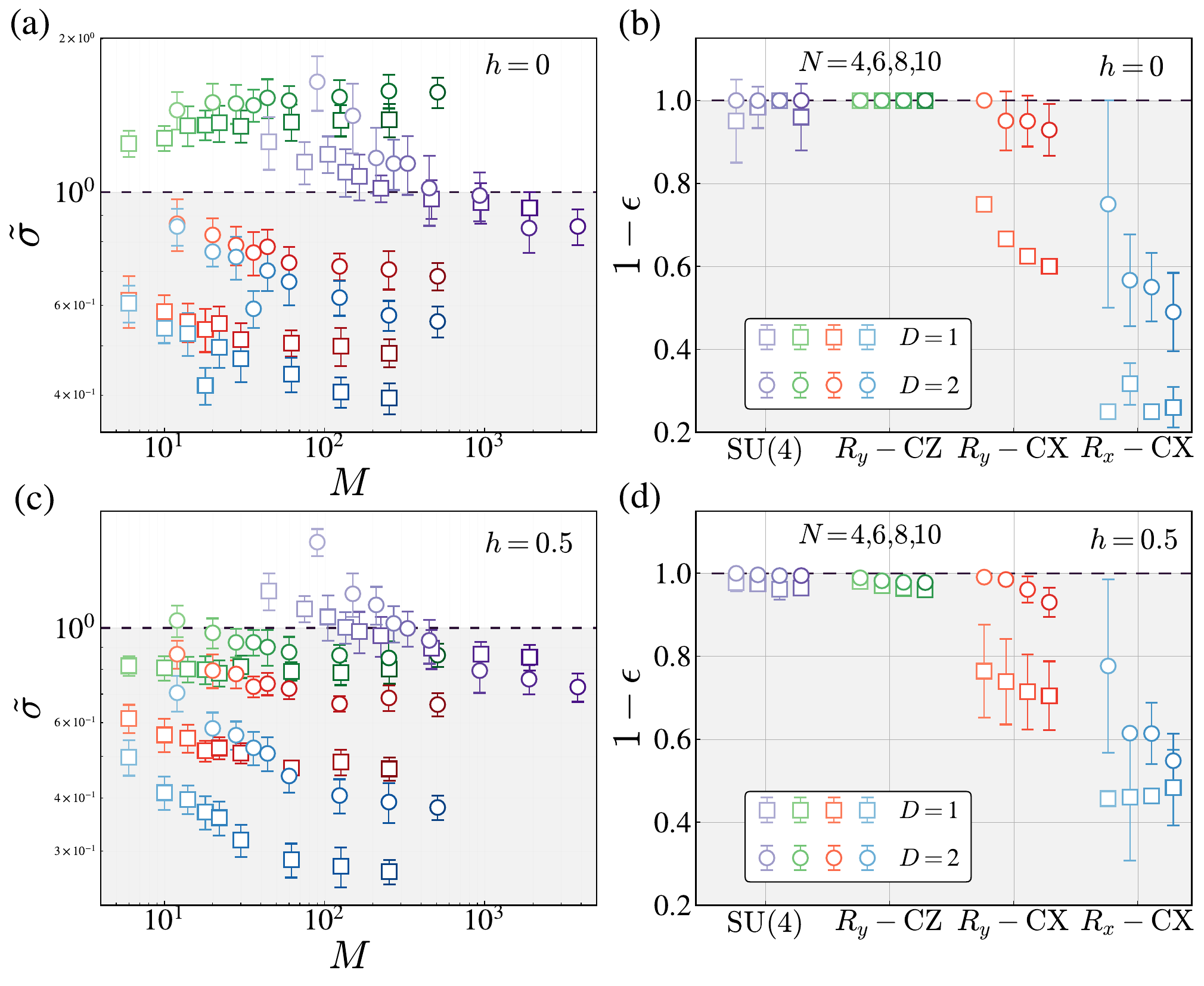}
    \caption{Detecting insufficient expressibility. (a) The RF $\tilde{\sigma}$ vs parameter count $M$ for 1D brickwall circuits with different depths $D$ and elementary gate templates regarding $H_{ZXZ}$. The colors with increasing intensity correspond to $N = 4, 6, 8, 10, 12, 16, 32, 64, 128$. (b) The actual training performance. $\epsilon$ is the training error after convergence. ``$R_y$-$\opr{CZ}$'' represents each two-qubit block is composed of two $R_y$ rotation gates followed by a $\opr{CZ}$ gate. Others are defined similarly. $\opr{SU}(4)$ represents the full parametrization of the two-qubit block. (c) and (d) The results obtained similarly to (a) and (b) but with a transverse field $h=0.5$.}
    \label{fig:express}
\end{figure}

\subsection{Gates and expressibility}
RF can also reflect which kinds of elementary gates are suitable for the given Hamiltonian. As shown in Fig.\,\ref{fig:express}, we choose different templates for the two-qubit blocks in brickwall circuits, compute their RFs, and test their real training performances regarding $H_{ZXZ}$. We find there is an explicit correlation between RFs and actual training errors. Higher RFs herald smaller errors. At $h=0$, $\opr{SU}(4)$ and $R_y$-$\opr{CZ}$ with $\tilde{\sigma}\geq 1$ indeed contain the exact solution and lead to almost zero errors, while $R_y$-$\opr{CX}$ and $R_x$-$\opr{CX}$ with $\tilde{\sigma}< 1$ have insufficient expressibility and relatively large errors. If an external field term is introduced, we get $\tilde{\sigma}<1$ for $R_y$-$\opr{CZ}$ as it no longer necessarily contains the exact ground state, as shown in Figs.~\ref{fig:express}(c) and (d). Further supporting data are left in Appendix~\ref{appendix:additional}. It is worth noticing that RF of $\opr{SU}(4)$ converges below $1$ while RF of $R_y$-$\opr{CZ}$ stays above in Figs.~\ref{fig:express}(a) and (b). This may be attributed to the bad local minima in locally underparametrized circuits as discussed below.

\begin{figure}[t]
    \centering
    \includegraphics[width=0.99\linewidth]{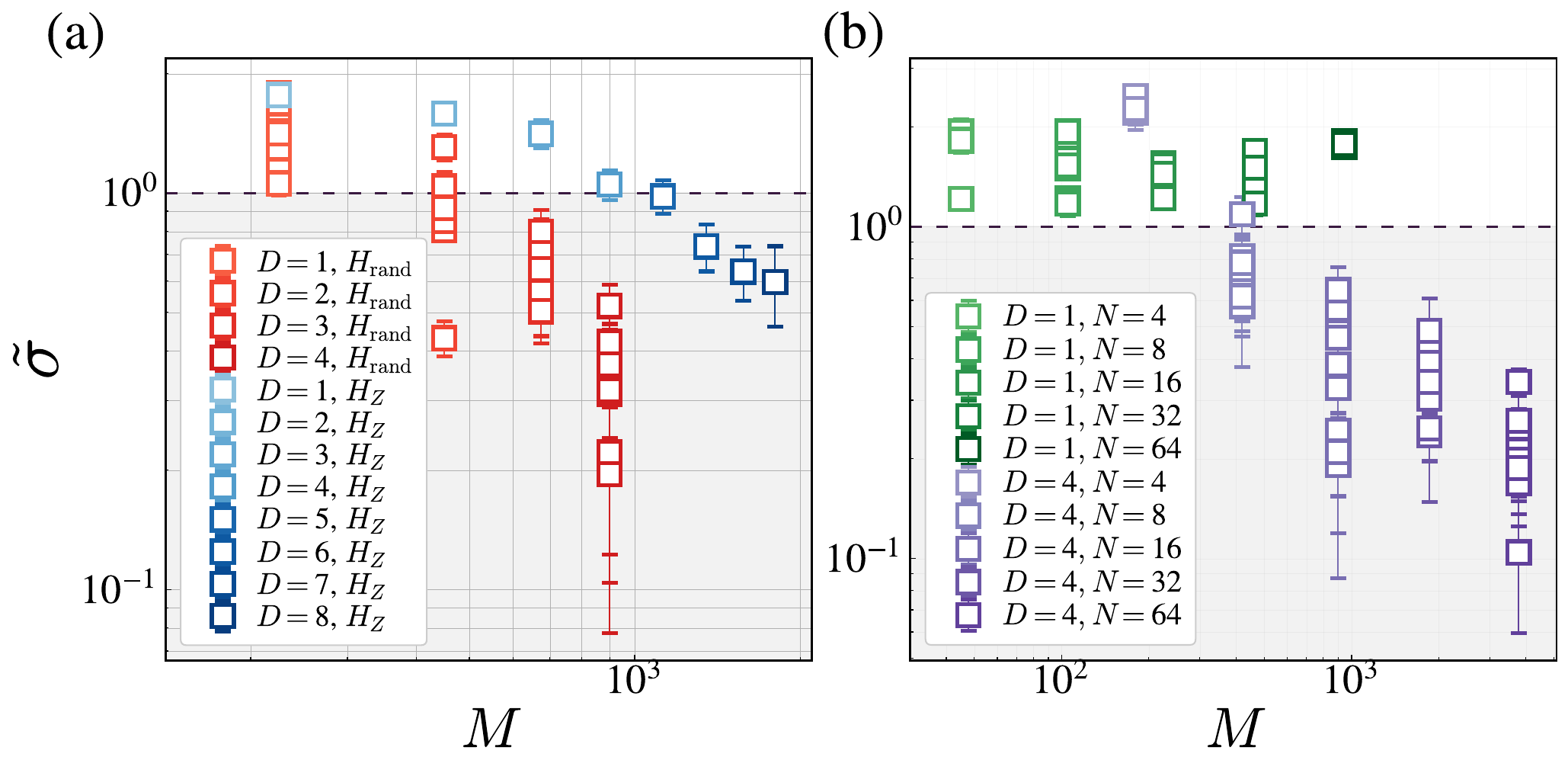}
    \caption{Detecting bad local minima. (a) The RF $\tilde{\sigma}$ vs parameter count $M$ for random Hamiltonians $H_{\text{rand}}$ with different brickwall circuit depth $D$ at $N=16$. (b) The same setting as in (a) but varying the number of qubits. Different points with the same markers correspond to different samples of $H_{\text{rand}}$. The results for $H_Z$ are also depicted in (a) for comparison.}
    \label{fig:localmin}
\end{figure}

\subsection{Transition to bad local minima}
The existence of bad local minima in circuits composed of locally scrambled blocks is determined by the local overparametrization ratio~\cite{Anschuetz2022}
\begin{equation}
    \gamma = \frac{M_{\text{cone}}}{ 2d_{\text{cone}}},
\end{equation}
where $M_{\text{cone}}$ is the number of tunable parameters within the backward causal cone of an observable and $d_{\text{cone}}$ is the Hilbert space dimension of the qubits within the cone. $\gamma\ll 1$ signifies local underparametrization and extensive bad local minima as $N$ increases. $\gamma$ decays exponentially in $D$, suggesting a typical depth $D_c$ separating phases of good and bad local minima. This can be witnessed by RFs. We choose the same setting as in Ref.~\cite{Anschuetz2022}. The Hamiltonians are generated from random back evolution
\begin{equation}
    H_{\text{rand}} = \mathbf{V}^\dagger H_{Z} \mathbf{V},
\end{equation}
where $H_{Z}=-\sum_j Z_j$ and $\mathbf{V}$ is a random 1D (reversed) brickwall circuit of depth $D$. Such a setting ensures sufficient expressibility for brickwall circuits of the same depth as ansatzes. As shown in Fig.\,\ref{fig:localmin}, for circuits of depth $D=1$ ($\gamma\gtrsim 1$), RF stays above $1$ even for large $N$, while for deeper circuits such as $D=4$ ($\gamma\ll 1$) RF gradually falls below $1$, in agreement with the results in Ref.~\cite{Anschuetz2022}. On the other hand, in Appendix~\ref{appendix:additional}, we also show that RF is insensitive to good local minima such as the symmetry-breaking states compared to the GHZ-type true ground states in the ferromagnetic phase of the transverse-field Ising (TFI) model.

\begin{figure}[b]
    \centering
    \includegraphics[width=0.98\linewidth]{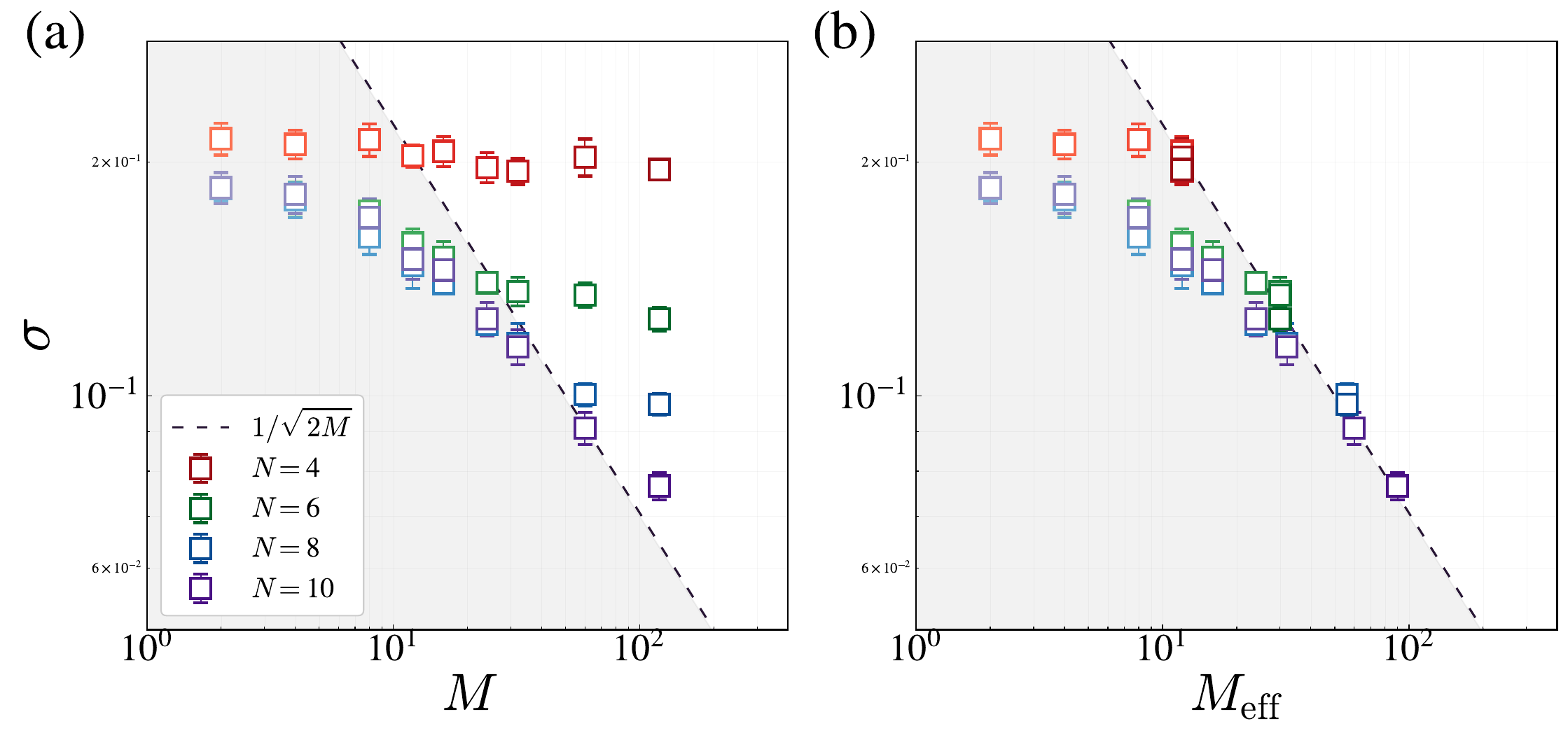}
    \caption{Detecting overparametrization with polynomial parameters. (a) The $\sigma$-$M$ diagram for the 1D HVA with only Pauli $X$ and nearest-neighbor $ZZ$ rotations and the 1D TFI model at $h=0.5$. The colors with increasing intensity correspond to depth $D= 1,2,4,6,8,12,16,30,60$. (b) The same data as in (a) but in terms of the effective dimension $M_{\text{eff}}$.}
    \label{fig:qaoa}
\end{figure}

\subsection{Polynomial overparametrization}
It is known that certain PQCs can be overparametrized with only a polynomial number of parameters in the sense that the effective dimension of the expressive space $M_{\text{eff}}$ saturates when increasing the number of parameters $M$, e.g., those PQCs with dynamical Lie algebras (DLA) of polynomial dimensions~\cite{Larocca2023}. The training errors are shown to be almost zero after reaching overparametrization, signifying high learnability. This can also be captured by RFs. We use the 1D TFI model 
\begin{equation}
    H_{\text{TFI}} = -\sum_{j} Z_jZ_{j+1} - h\sum_{j} X_j,
\end{equation}
at $h=0.5$ with open boundary conditions and a 1D HVA circuit. Each layer consists of a layer of $R_x(\theta)$ and a layer of nearest-neighbor $R_{zz}(\theta)$ gates. The initial state is $\ket{+}^{\otimes N}$. See Appendix~\ref{appendix:setup} for detailed information. Fig.\,\ref{fig:qaoa} shows that as $D$ increases, overparametrization ($M_{\text{eff}}$ saturation) occurs once the $(M, \sigma)$ points touch the line $\sigma=1/\sqrt{2M}$. This again indicates that this line can serve as a boundary between learnable and unlearnable phases since polynomial overparametrization results in perfect learnability. This is also consistent with the conjecture and results in Refs.~\cite{Larocca2022} and \cite{Ragone2024} since here the cost variance $\sigma^2$ and the inverse of the saturated $M_{\text{eff}}$ are identified on the phase boundary.

\section{Discussion}
In this work, we introduce the relative fluctuation as a metric of the problem-specific learnability of QNNs, which compares the degrees of cost concentration of given landscapes with standard learnable landscapes. Our results demonstrate that RF can effectively detect both insufficient expressibility and poor trainability, where the ability to detect bad local minima was rarely addressed before. Conceptually, RF serves as a bridge connecting key concepts in QML and variational quantum algorithms. Practically, RF is a highly efficient training-free metric independent of optimizer details such as learning rate and can be classically computed via Clifford sampling. This enables fast selection of suitable QNN architectures and possible advanced automatic search~\cite{Ostaszewski2019, Zhang2020b_z, Du2020a_z, Zhang2021_z}.

Here we directly focus on the second-order moment because the first moment usually vanishes. For example, given a traceless Hamiltonian, one layer of single-qubit rotations is enough to make the cost average become zero. The consequences of nonzero averages and higher-order moments are further discussed in Appendix~\ref{appendix:details_RF}. In addition, here we mainly focus on physical scenarios where the learning target is encoded in Hamiltonians. If the target is encoded in initial states of finite samples like in some QML algorithms, learnability should also incorporate overfitting and generalization issues~\cite{Du2021}.

\begin{acknowledgements}
We acknowledge the stimulating discussions with Shuo Liu and Shi-Xin Zhang. The numerical simulations are based on the TensorCircuit~\cite{Zhang2022_z} and PyClifford~\cite{Hu2023} packages.
This work was partially supported by the National Key R\&D Program of China (Grant No.~2024YFE0102500), the Guangdong Provincial Quantum Science Strategic Initiative (Grant No.~GDZX2303007), the Guangdong Provincial Key Lab of Integrated Communication, Sensing and Computation for Ubiquitous Internet of Things (Grant No.~2023B1212010007),  the Start-up Fund (Grant No.~G0101000151) from HKUST (Guangzhou), the Quantum Science Center of Guangdong-Hong Kong-Macao Greater Bay Area, and the Education Bureau of Guangzhou Municipality.
\end{acknowledgements}

\bibliographystyle{quantum}

\clearpage
\newpage
\onecolumn
\appendix

\renewcommand{\theproposition}{S\arabic{proposition}}
\setcounter{proposition}{0}
\renewcommand{\thedefinition}{S\arabic{definition}}
\setcounter{definition}{0}

\renewcommand{\thefigure}{S\arabic{figure}}
\setcounter{figure}{0}
\numberwithin{equation}{section}


\section{Preliminaries and basic setup}\label{appendix:setup}

In this section, we elaborate on the basic setup of our work. For clarity, we first introduce some preliminaries and background. Throughout this paper, we consider a $N$-qubit system $\mathcal{H}^{\otimes N}$ with local Hilbert space being the two-dimensional complex linear space $\mathcal{H}=\mathbb{C}^2=\opr{span}\{\ket{0},\ket{1}\}$. The results in this paper can be easily generalized to qudit systems with local Hilbert space dimension $d$ by replacing Pauli strings with other operator strings as an orthogonal basis. Here a Pauli string refers to a tensor product of single-qubit Pauli operators 
\begin{equation}
    X = \begin{pmatrix}
        0 & 1 \\ 1 & 0
    \end{pmatrix},\quad\quad
    Y = \begin{pmatrix}
        0 & -i \\ i & 0 \\
    \end{pmatrix},\quad\quad
    Z = \begin{pmatrix}
        1 & 0 \\ 0 & -1 \\
    \end{pmatrix}.
\end{equation}
We register the $N$ qubits as $\{q_1,q_2,\cdots,q_N\}$ and use $s(O)$ to denote the support of a linear operator $O$ on $\mathcal{H}^{\otimes N}$. Here the support refers to the subset of qubits that are acted by $O$ non-trivially. For example, $s(I\otimes X\otimes Y\otimes Z)=s(X_2 Y_3 Z_4)=\{q_2,q_3,q_4\}$. We use $|s|$ to denote the number of qubits in a subset of qubits $s$.

\textit{Ground state problem.---}For brevity, we will illustrate the setup using variational quantum eigensolver (VQE) and quantum physics problems as examples. Other variational quantum algorithms (VQA) can be described in a similar manner with some differences depending on the background of related tasks, such as quantum state learning, combinatorial optimization, quantum machine learning (QML), etc. VQE aims to prepare the ground state $\ket{G}$ of a given Hamiltonian $H$. The Hamiltonian is a Hermitian operator on $\mathcal{H}^{\otimes N}$ representing the total energy of the system. The Hamiltonian is usually given in the form of its Pauli string decomposition 
\begin{equation}
    H=\sum_j \lambda_j h_j,
\end{equation}
where $h_j$ denotes a Pauli string and $\lambda_j$ denotes the corresponding real coefficient. The ground state $\ket{G}$ is the eigenstate of $H$ associated with the lowest eigenvalue (ground state energy). Of course, there may exist multiple degenerate ground states for certain Hamiltonians. Here we just focus on the non-degenerate case and the degenerate case is similar. An important feature of a Hamiltonian is its locality. We say a Hamiltonian is $r$-local if the maximum size of the supports of its subterms is $r$, i.e., 
\begin{equation}
    r=\max_j|s(h_j)|.    
\end{equation}
Most realistic physical models only contain local interactions. That is to say, $r$ does not scale with the system size $N$. Note that here the term ``local'' actually means ``few-body'' by convention, distinguished from the spatial or geometrical locality which is more strict by requiring the qubits in the support of each subterm to be adjacent on the given lattice. Another common assumption on realistic physical models is that the Hamiltonian is extensive because energy is supposed to be an extensive quantity of physical systems. Here an extensive Hamiltonian refers to the case that the support of the local Hamiltonian covers mainly all the qubits, e.g., $s(H)=N$, which will result in the number of subterms in $H$ growing at least linearly in the system size.

As the entire Hilbert space dimension is exponentially large with the system size $N$, it will take exponential time and memory space to get the ground state if one chooses to diagonalize the Hamiltonian directly on classical computers (known as exact diagonalization, ED). For example, on a common modern classical computer, the ground state of a small-size system can be easily solved by ED but it will be hard when the system sizes become moderately large, such as around $20$ qubits. However, this does not mean these are all classical algorithms can do. Many successful classical algorithms have been developed in past decades based on insights into specific quantum systems~\cite{Ceperley1986, White1992, Verstraete2006, Vidal2007, Schollwock2011}. A large class of them belongs to the so-called \textbf{variational methods}, e.g., variational Monte Carlo (VMC), the density matrix renormalization group (DMRG) also known as the variational matrix product state (MPS) algorithm, and other tensor network (TN) algorithms. These methods limit themselves to a polynomially large sub-manifold of the entire Hilbert space and minimize the energy variationally as low as possible to approximate the ground state. Here the sub-manifold actually refers to a certain form of parametrized quantum states $\ket{\psi(\bm{\theta})}$ (also known as ansatz) that depends on polynomially many tunable parameters. If the ansatz can be efficiently manipulated by classical computers and the ansatz can contain the true ground state approximately by leveraging some prior information about the given system, these variational classical methods would have a fairly high probability of success in finding the ground state.

For example, most systems in quantum many-body physics admit low-entangled ground states, i.e., obeying the entanglement area law with at most some logarithmic corrections. Specifically, the area law, also known as the boundary law, refers to that the entanglement entropy $S_A$ of a subsystem $A$ only scales with the size of the boundary of the subsystem $|\partial A|$, i.e., 
\begin{equation}
    S_A\propto |\partial A|, 
\end{equation}
in stark contrast to typical Haar-random states whose entanglement entropy scales with the volume of the subsystem, i.e., obeying the volume law $S_A\propto |A|$. It is proved that the ground states of local gapped systems indeed satisfy the area law in one spatial dimension~\cite{Hastings2007} and it is empirically believed that the same also holds true in two and higher dimensions based on extensive instances. Therefore, leveraging this feature, one can design the so-called tensor network states such as projected entangled paired states (PEPS) and multiscale entanglement renormalization ansatz (MERA), which follows the entanglement area law with at most logarithmic corrections and hence can provide good approximations for the ground states of quantum many-body systems. For one-dimensional systems, the corresponding TN states, such as MPS (also known as 1D PEPS), will lead to small treewidth and can be contracted efficiently on classical computers~\cite{Markov2008}. Hence, it is known that 1D quantum systems are classically tractable. Nevertheless, for two and higher-dimensional systems, the corresponding TNs are in general hard to contract though there exist many heuristic contraction strategies without controllable errors~\cite{Schuch2007, Orus2019}. As a consequence, many two-dimensional systems of interest remain difficult to draw definitive conclusions, such as the 2D Fermi-Hubbard model and $t$-$J$ model that are relevant to the research of high-temperature superconductivity.

Of course, there also exist many successful non-variational classical methods for specific quantum systems, but these methods also face their respective difficulties when dealing with general systems of interest. For example, the quantum Monte Carlo (QMC) algorithms, which are designed to directly compute the expectation values of observables of interest without explicitly representing the wavefunction via certain mappings to classical statistical models, are very efficient for many systems such as the square-lattice Heisenberg model. However, QMC will encounter the so-called sign (phase) problem and become inefficient when dealing with frustrated quantum systems.

\textit{Variational quantum eigensolver.---}Recent rapid developments of noisy intermediate-scale quantum (NISQ) devices offer new hope for this ground-state preparation problem of quantum nature. Based on the above discussion of classical variational methods, a natural extension is to \textbf{use quantum circuits as variational ansatzes}. This is exactly the core idea of VQE. On quantum devices, one can run quantum circuits and read out the measurement results efficiently, which is believed to be exponentially hard on classical computers. Because of the limitations of current NISQ devices, textbook quantum algorithms that require very large-scale quantum computers with very high precision cannot be implemented in the near term. Thus, VQE with its variants in VQAs is expected to be one of the most promising routes toward practical quantum advantage in the NISQ era.

To be specific, VQE uses a parametrized quantum circuit (PQC)
\begin{equation}
    \mathbf{U}(\bm{\theta}) = \prod_{\mu}^{\leftarrow} U_\mu(\theta_\mu),
\end{equation}
to construct the variational ansatz $\ket{\psi(\bm{\theta})}=\mathbf{U}(\bm{\theta})\ket{\bm{0}}$ by acting on the initial state $\ket{\bm{0}}=\ket{0}^{\otimes N}$. The left arrow means that the index $\mu$ follows the increasing order from right to left in the product. $U_\mu(\theta_\mu)$ represents a single elementary quantum gate depending on the parameter $\theta_\mu$. Note that the quantum gates without parameter dependence such as CNOT are also included by considering them as gates with fixed parameters. The common form of parametrized gates is $U_\mu(\theta_\mu)=e^{-i\Omega_\mu\theta_\mu/2}$ where $\Omega_\mu$ is a Hermitian generator such as a Pauli string. In the context of quantum state learning or quantum machine learning, a PQC can be seen as a quantum neural network (QNN) when equipped with a classical optimizer. A typical PQC template is the so-called 1-dimensional brickwall circuit, e.g.,
\begin{equation}
\begin{mytikz2}
\draw [line width=0.75]    (50,141.93) -- (224.93,141.93) ;
\draw [line width=0.75]    (50,175.47) -- (224.93,175.46) ;
\draw [line width=0.75]    (50,108.39) -- (224.93,108.39) ;
\draw  [fill={rgb, 255:red, 244; green, 240; blue, 235 }  ,fill opacity=1 ][line width=0.75]  (66.72,105.02) .. controls (66.72,102.25) and (68.96,100) .. (71.73,100) -- (78.42,100) .. controls (81.19,100) and (83.44,102.25) .. (83.44,105.02) -- (83.44,145.29) .. controls (83.44,148.06) and (81.19,150.31) .. (78.42,150.31) -- (71.73,150.31) .. controls (68.96,150.31) and (66.72,148.06) .. (66.72,145.29) -- cycle ;
\draw [line width=0.75]    (50,242.54) -- (224.93,242.54) ;
\draw [line width=0.75]    (50,276.08) -- (224.93,276.07) ;
\draw [line width=0.75]    (50,209) -- (224.93,209) ;
\draw  [fill={rgb, 255:red, 244; green, 240; blue, 235 }  ,fill opacity=1 ][line width=0.75]  (91.93,138.56) .. controls (91.93,135.79) and (94.18,133.54) .. (96.95,133.54) -- (103.64,133.54) .. controls (106.41,133.54) and (108.65,135.79) .. (108.65,138.56) -- (108.65,178.83) .. controls (108.65,181.6) and (106.41,183.85) .. (103.64,183.85) -- (96.95,183.85) .. controls (94.18,183.85) and (91.93,181.6) .. (91.93,178.83) -- cycle ;
\draw  [fill={rgb, 255:red, 244; green, 240; blue, 235 }  ,fill opacity=1 ][line width=0.75]  (117.09,172.09) .. controls (117.09,169.32) and (119.34,167.08) .. (122.11,167.08) -- (128.8,167.08) .. controls (131.57,167.08) and (133.81,169.32) .. (133.81,172.09) -- (133.81,212.37) .. controls (133.81,215.14) and (131.57,217.38) .. (128.8,217.38) -- (122.11,217.38) .. controls (119.34,217.38) and (117.09,215.14) .. (117.09,212.37) -- cycle ;
\draw  [fill={rgb, 255:red, 244; green, 240; blue, 235 }  ,fill opacity=1 ][line width=0.75]  (142.25,205.63) .. controls (142.25,202.86) and (144.5,200.61) .. (147.27,200.61) -- (153.96,200.61) .. controls (156.73,200.61) and (158.97,202.86) .. (158.97,205.63) -- (158.97,245.9) .. controls (158.97,248.67) and (156.73,250.92) .. (153.96,250.92) -- (147.27,250.92) .. controls (144.5,250.92) and (142.25,248.67) .. (142.25,245.9) -- cycle ;
\draw  [fill={rgb, 255:red, 244; green, 240; blue, 235 }  ,fill opacity=1 ][line width=0.75]  (167.41,239.17) .. controls (167.41,236.4) and (169.66,234.15) .. (172.43,234.15) -- (179.12,234.15) .. controls (181.89,234.15) and (184.13,236.4) .. (184.13,239.17) -- (184.13,279.44) .. controls (184.13,282.21) and (181.89,284.46) .. (179.12,284.46) -- (172.43,284.46) .. controls (169.66,284.46) and (167.41,282.21) .. (167.41,279.44) -- cycle ;
\draw  [fill={rgb, 255:red, 244; green, 240; blue, 235 }  ,fill opacity=1 ][line width=0.75]  (117.09,105.02) .. controls (117.09,102.25) and (119.34,100) .. (122.11,100) -- (128.8,100) .. controls (131.57,100) and (133.81,102.25) .. (133.81,105.02) -- (133.81,145.29) .. controls (133.81,148.06) and (131.57,150.31) .. (128.8,150.31) -- (122.11,150.31) .. controls (119.34,150.31) and (117.09,148.06) .. (117.09,145.29) -- cycle ;
\draw  [fill={rgb, 255:red, 244; green, 240; blue, 235 }  ,fill opacity=1 ][line width=0.75]  (142.25,138.56) .. controls (142.25,135.79) and (144.5,133.54) .. (147.27,133.54) -- (153.96,133.54) .. controls (156.73,133.54) and (158.97,135.79) .. (158.97,138.56) -- (158.97,178.83) .. controls (158.97,181.6) and (156.73,183.85) .. (153.96,183.85) -- (147.27,183.85) .. controls (144.5,183.85) and (142.25,181.6) .. (142.25,178.83) -- cycle ;
\draw  [fill={rgb, 255:red, 244; green, 240; blue, 235 }  ,fill opacity=1 ][line width=0.75]  (167.41,172.09) .. controls (167.41,169.32) and (169.66,167.08) .. (172.43,167.08) -- (179.12,167.08) .. controls (181.89,167.08) and (184.13,169.32) .. (184.13,172.09) -- (184.13,212.37) .. controls (184.13,215.14) and (181.89,217.38) .. (179.12,217.38) -- (172.43,217.38) .. controls (169.66,217.38) and (167.41,215.14) .. (167.41,212.37) -- cycle ;
\draw  [fill={rgb, 255:red, 244; green, 240; blue, 235 }  ,fill opacity=1 ][line width=0.75]  (192.57,205.63) .. controls (192.57,202.86) and (194.82,200.61) .. (197.59,200.61) -- (204.28,200.61) .. controls (207.05,200.61) and (209.29,202.86) .. (209.29,205.63) -- (209.29,245.9) .. controls (209.29,248.67) and (207.05,250.92) .. (204.28,250.92) -- (197.59,250.92) .. controls (194.82,250.92) and (192.57,248.67) .. (192.57,245.9) -- cycle ;
\draw  [fill={rgb, 255:red, 244; green, 240; blue, 235 }  ,fill opacity=1 ][line width=0.75]  (66.77,172.11) .. controls (66.77,169.34) and (69.02,167.1) .. (71.79,167.1) -- (78.48,167.1) .. controls (81.25,167.1) and (83.49,169.34) .. (83.49,172.11) -- (83.49,212.39) .. controls (83.49,215.16) and (81.25,217.4) .. (78.48,217.4) -- (71.79,217.4) .. controls (69.02,217.4) and (66.77,215.16) .. (66.77,212.39) -- cycle ;
\draw  [fill={rgb, 255:red, 244; green, 240; blue, 235 }  ,fill opacity=1 ][line width=0.75]  (91.93,205.65) .. controls (91.93,202.88) and (94.18,200.63) .. (96.95,200.63) -- (103.64,200.63) .. controls (106.41,200.63) and (108.65,202.88) .. (108.65,205.65) -- (108.65,245.92) .. controls (108.65,248.69) and (106.41,250.94) .. (103.64,250.94) -- (96.95,250.94) .. controls (94.18,250.94) and (91.93,248.69) .. (91.93,245.92) -- cycle ;
\draw  [fill={rgb, 255:red, 244; green, 240; blue, 235 }  ,fill opacity=1 ][line width=0.75]  (66.77,239.18) .. controls (66.77,236.41) and (69.02,234.16) .. (71.79,234.16) -- (78.48,234.16) .. controls (81.25,234.16) and (83.49,236.41) .. (83.49,239.18) -- (83.49,279.45) .. controls (83.49,282.22) and (81.25,284.47) .. (78.48,284.47) -- (71.79,284.47) .. controls (69.02,284.47) and (66.77,282.22) .. (66.77,279.45) -- cycle ;
\draw  [fill={rgb, 255:red, 244; green, 240; blue, 235 }  ,fill opacity=1 ][line width=0.75]  (117.09,239.18) .. controls (117.09,236.41) and (119.34,234.16) .. (122.11,234.16) -- (128.8,234.16) .. controls (131.57,234.16) and (133.81,236.41) .. (133.81,239.18) -- (133.81,279.45) .. controls (133.81,282.22) and (131.57,284.47) .. (128.8,284.47) -- (122.11,284.47) .. controls (119.34,284.47) and (117.09,282.22) .. (117.09,279.45) -- cycle ;
\draw  [fill={rgb, 255:red, 244; green, 240; blue, 235 }  ,fill opacity=1 ][line width=0.75]  (167.41,105.01) .. controls (167.41,102.24) and (169.66,99.99) .. (172.43,99.99) -- (179.12,99.99) .. controls (181.89,99.99) and (184.13,102.24) .. (184.13,105.01) -- (184.13,145.28) .. controls (184.13,148.05) and (181.89,150.3) .. (179.12,150.3) -- (172.43,150.3) .. controls (169.66,150.3) and (167.41,148.05) .. (167.41,145.28) -- cycle ;
\draw  [fill={rgb, 255:red, 244; green, 240; blue, 235 }  ,fill opacity=1 ][line width=0.75]  (192.57,138.55) .. controls (192.57,135.78) and (194.82,133.53) .. (197.59,133.53) -- (204.28,133.53) .. controls (207.05,133.53) and (209.29,135.78) .. (209.29,138.55) -- (209.29,178.82) .. controls (209.29,181.59) and (207.05,183.84) .. (204.28,183.84) -- (197.59,183.84) .. controls (194.82,183.84) and (192.57,181.59) .. (192.57,178.82) -- cycle ;
\end{mytikz2}\quad.
\end{equation}
The above circuit diagram shows a brickwall circuit of $3$ layers on $6$ qubits, where a single ``block'' represents a grouped continuous series of elementary gates, denoted by $B_k$, which can be seen as the elementary unit when constructing a PQC. Two typical templates of two-qubit blocks are depicted as follows, 
\begin{equation}
\begin{mytikz2}
\draw  [fill={rgb, 255:red, 245; green, 240; blue, 235 }  ,fill opacity=0.37 ][dash pattern={on 3pt off 1.5pt}][line width=0.75]  (64.41,118.52) .. controls (64.41,108.84) and (72.25,101) .. (81.93,101) -- (378.27,101) .. controls (387.95,101) and (395.79,108.84) .. (395.79,118.52) -- (395.79,203.48) .. controls (395.79,213.16) and (387.95,221) .. (378.27,221) -- (81.93,221) .. controls (72.25,221) and (64.41,213.16) .. (64.41,203.48) -- cycle ;
\draw [line width=0.75]    (50,191) -- (410.2,191) ;
\draw [line width=0.75]    (50,131) -- (410.2,131) ;
\draw  [fill={rgb, 255:red, 245; green, 240; blue, 235 }  ,fill opacity=1 ][line width=0.75]  (90.5,111.4) .. controls (90.5,111.4) and (90.5,111.4) .. (90.5,111.4) -- (130.5,111.4) .. controls (130.5,111.4) and (130.5,111.4) .. (130.5,111.4) -- (130.5,151.4) .. controls (130.5,151.4) and (130.5,151.4) .. (130.5,151.4) -- (90.5,151.4) .. controls (90.5,151.4) and (90.5,151.4) .. (90.5,151.4) -- cycle ;
\draw  [fill={rgb, 255:red, 245; green, 240; blue, 235 }  ,fill opacity=1 ][line width=0.75]  (90.5,171.4) .. controls (90.5,171.4) and (90.5,171.4) .. (90.5,171.4) -- (130.5,171.4) .. controls (130.5,171.4) and (130.5,171.4) .. (130.5,171.4) -- (130.5,211.4) .. controls (130.5,211.4) and (130.5,211.4) .. (130.5,211.4) -- (90.5,211.4) .. controls (90.5,211.4) and (90.5,211.4) .. (90.5,211.4) -- cycle ;
\draw  [fill={rgb, 255:red, 245; green, 240; blue, 235 }  ,fill opacity=1 ][line width=0.75]  (150.5,111) .. controls (150.5,111) and (150.5,111) .. (150.5,111) -- (190.5,111) .. controls (190.5,111) and (190.5,111) .. (190.5,111) -- (190.5,211) .. controls (190.5,211) and (190.5,211) .. (190.5,211) -- (150.5,211) .. controls (150.5,211) and (150.5,211) .. (150.5,211) -- cycle ;
\draw  [fill={rgb, 255:red, 245; green, 240; blue, 235 }  ,fill opacity=1 ][line width=0.75]  (330,111) .. controls (330,111) and (330,111) .. (330,111) -- (370,111) .. controls (370,111) and (370,111) .. (370,111) -- (370,151) .. controls (370,151) and (370,151) .. (370,151) -- (330,151) .. controls (330,151) and (330,151) .. (330,151) -- cycle ;
\draw  [fill={rgb, 255:red, 245; green, 240; blue, 235 }  ,fill opacity=1 ][line width=0.75]  (330,171) .. controls (330,171) and (330,171) .. (330,171) -- (370,171) .. controls (370,171) and (370,171) .. (370,171) -- (370,211) .. controls (370,211) and (370,211) .. (370,211) -- (330,211) .. controls (330,211) and (330,211) .. (330,211) -- cycle ;
\draw  [fill={rgb, 255:red, 245; green, 240; blue, 235 }  ,fill opacity=1 ][line width=0.75]  (210.5,111) .. controls (210.5,111) and (210.5,111) .. (210.5,111) -- (250.5,111) .. controls (250.5,111) and (250.5,111) .. (250.5,111) -- (250.5,211) .. controls (250.5,211) and (250.5,211) .. (250.5,211) -- (210.5,211) .. controls (210.5,211) and (210.5,211) .. (210.5,211) -- cycle ;
\draw  [fill={rgb, 255:red, 245; green, 240; blue, 235 }  ,fill opacity=1 ][line width=0.75]  (270,111) .. controls (270,111) and (270,111) .. (270,111) -- (310,111) .. controls (310,111) and (310,111) .. (310,111) -- (310,211) .. controls (310,211) and (310,211) .. (310,211) -- (270,211) .. controls (270,211) and (270,211) .. (270,211) -- cycle ;

\draw (94,122) node [anchor=north west][inner sep=0.75pt]  [font=\normalsize] [align=left] {$\displaystyle R_{3}$};
\draw (94,182) node [anchor=north west][inner sep=0.75pt]  [font=\normalsize] [align=left] {$\displaystyle R_{3}$};
\draw (147,150) node [anchor=north west][inner sep=0.75pt]  [font=\normalsize] [align=left] {$\displaystyle R_{xx}$};
\draw (334,122) node [anchor=north west][inner sep=0.75pt]  [font=\normalsize] [align=left] {$\displaystyle R_{3}$};
\draw (334,182) node [anchor=north west][inner sep=0.75pt]  [font=\normalsize] [align=left] {$\displaystyle R_{3}$};
\draw (207,151.2) node [anchor=north west][inner sep=0.75pt]  [font=\normalsize] [align=left] {$\displaystyle R_{yy}$};
\draw (267,151.2) node [anchor=north west][inner sep=0.75pt]  [font=\normalsize] [align=left] {$\displaystyle R_{zz}$};
\end{mytikz2}\quad,\quad
\begin{mytikz2}
\draw  [fill={rgb, 255:red, 245; green, 240; blue, 235 }  ,fill opacity=0.37 ][dash pattern={on 3pt off 1.5pt}][line width=0.75]  (64.41,118.08) .. controls (64.41,108.4) and (72.25,100.56) .. (81.93,100.56) -- (182.61,100.56) .. controls (192.29,100.56) and (200.13,108.4) .. (200.13,118.08) -- (200.13,203.04) .. controls (200.13,212.71) and (192.29,220.56) .. (182.61,220.56) -- (81.93,220.56) .. controls (72.25,220.56) and (64.41,212.71) .. (64.41,203.04) -- cycle ;
\draw [line width=0.75]    (50,190.33) -- (216.13,190.33) ;
\draw [line width=0.75]    (50,130.33) -- (216.13,130.33) ;
\draw  [fill={rgb, 255:red, 245; green, 240; blue, 235 }  ,fill opacity=1 ][line width=0.75]  (90.5,110.73) .. controls (90.5,110.73) and (90.5,110.73) .. (90.5,110.73) -- (130.5,110.73) .. controls (130.5,110.73) and (130.5,110.73) .. (130.5,110.73) -- (130.5,150.73) .. controls (130.5,150.73) and (130.5,150.73) .. (130.5,150.73) -- (90.5,150.73) .. controls (90.5,150.73) and (90.5,150.73) .. (90.5,150.73) -- cycle ;
\draw  [fill={rgb, 255:red, 245; green, 240; blue, 235 }  ,fill opacity=1 ][line width=0.75]  (90.5,170.73) .. controls (90.5,170.73) and (90.5,170.73) .. (90.5,170.73) -- (130.5,170.73) .. controls (130.5,170.73) and (130.5,170.73) .. (130.5,170.73) -- (130.5,210.73) .. controls (130.5,210.73) and (130.5,210.73) .. (130.5,210.73) -- (90.5,210.73) .. controls (90.5,210.73) and (90.5,210.73) .. (90.5,210.73) -- cycle ;
\draw [line width=0.8]    (160.17,130.34) -- (160.17,199.7) ;
\draw  [line width=0.8]  (150.61,190.14) .. controls (150.61,184.85) and (154.89,180.57) .. (160.17,180.57) .. controls (165.45,180.57) and (169.73,184.85) .. (169.73,190.14) .. controls (169.73,195.42) and (165.45,199.7) .. (160.17,199.7) .. controls (154.89,199.7) and (150.61,195.42) .. (150.61,190.14) -- cycle ;
\draw  [fill={rgb, 255:red, 0; green, 0; blue, 0 }  ,fill opacity=1 ] (157.07,130.34) .. controls (157.07,128.63) and (158.46,127.24) .. (160.17,127.24) .. controls (161.88,127.24) and (163.27,128.63) .. (163.27,130.34) .. controls (163.27,132.05) and (161.88,133.44) .. (160.17,133.44) .. controls (158.46,133.44) and (157.07,132.05) .. (157.07,130.34) -- cycle ;
\draw [line width=0.8]    (150.61,190.14) -- (169.73,190.14) ;

\draw (94,120) node [anchor=north west][inner sep=0.75pt]  [font=\normalsize] [align=left] {$\displaystyle R_{y}$};
\draw (94,180) node [anchor=north west][inner sep=0.75pt]  [font=\normalsize] [align=left] {$\displaystyle R_{y}$};
\end{mytikz2}\quad,
\label{eq:ansatz_template}
\end{equation}
which is named ``$\mathrm{SU}(4)$ Cartan decomposition'' and ``$R_y$-$\opr{CX}$'', respectively. The former is a fully parametrized two-qubit block that can express any two-qubit unitary. Here $R_{xx}$, $R_{yy}$ and $R_{zz}$ represent the two-qubit Pauli rotation gates with generators $X\otimes X$, $Y\otimes Y$ and $Z\otimes Z$, respectively. $R_3$ represents a universal single-qubit gate, e.g., $R_zR_yR_z$ where $R_{y}$ and $R_{z}$ represent the single-qubit Pauli rotation gates with generators $Y$ and $Z$. Other block templates can be defined and named similarly. Here we clarify that the spatial dimension of a quantum circuit actually refers to the dimension of the underlying qubit connectivity lattice, where the existence of gate locality is assumed.

A standard workflow of VQE involves running the PQC on a quantum device, measuring the cost, and updating the tunable parameters iteratively to minimize the cost function using classical optimizers such as gradient descent algorithms. The cost function in VQE is just the energy expectation value with respect to the output state
\begin{equation}
    C(\bm{\theta})=\braoprket{\psi(\bm{\theta})}{H}{\psi(\bm{\theta})}.
\end{equation}
The core idea of VQE is very clear and natural because a quantum circuit is a quite natural way to characterize and implement a quantum state. However, VQE and other VQAs still face great challenges on the way to practical quantum advantages because of their trainability, and the tradeoff between their trainability and expressibility. To be specific, it is known that a variational algorithm can learn the true ground state successfully if two conditions are met: the expressibility of the ansatz is strong enough to contain the ground state, and the trainability is good enough to navigate smoothly toward the ground state. However, in the practice of VQAs, these two conditions seem not easy to be met simultaneously when the system size is large as compared to classical neural networks. 

\subsection{Barren plateaus}
To date, the most extensively studied trainability issue in VQAs is the so-called barren plateau (BP) phenomenon~\cite{McClean2018, Cerezo2021}, which means that the cost gradient would vanish exponentially with the system size for deep PQCs under certain conditions, i.e.,
\begin{equation}
    \mathbb{E}_{\Theta}[\partial_\mu C]=0,\quad \var_{\Theta}[\partial_\mu C] \in O(b^{-N}),
\end{equation}
where $\partial_\mu C=\frac{\partial C}{\partial \theta_\mu}$ and $b>1$. $\mathbb{E}_{\Theta}[\cdot]$ and $\var_{\Theta}[\cdot]$ represent the average and variance taken over the ensemble $\Theta$ that is defined in the tunable parameter space. This is akin to the vanishing gradient issue in classical neural networks. The mechanism for the BP phenomenon can be explained as follows. At the beginning of the VQE workflow, the tunable parameters are often randomly initialized so that the whole parameter space can be explored uniformly in the probabilistic sense to search for the optimum. This random initialization procedure gives rise to a probability measure that makes the PQC $\mathbf{U}(\bm{\theta})$ become a random quantum circuit (RQC). An RQC in a generic shape like brickwall will form an approximate unitary $2$-design if the RQC is sufficiently deep to achieve linear depth such as $2\times N$~\cite{Brandao2016}. Here a $t$-design of a certain group, such as the unitary group $\mathrm{U}(d)$, refers to an ensemble of group elements that can lead to the same $t$-degree moment as that of the Haar (uniform) distribution over the group. Thus one can obtain the results for $2$-design integrals by using the Weingarten formulas for Haar integrals, such as 
\begin{equation}\label{eq:twirling_1st_order}
    \int d\mu(U)~ U^{\dagger} (\cdot) U = \frac{1}{d}\tr(\cdot) I, 
\end{equation}
\begin{equation}\label{eq:twirling_2st_order}
\begin{aligned}
    \int d\mu(U)~ U^{\dagger\otimes 2} (\cdot) U^{\otimes 2} =& \frac{1}{d^{2}-1}\left[\tr(\cdot) I + \tr(S \cdot) S \right]
    - \frac{1}{d(d^{2}-1)}\left[\tr(\cdot) S + \tr(S \cdot) I \right],
\end{aligned}
\end{equation}
where $\mu(U)$ is the Haar measure and ``$\cdot$'' is the placeholder for an arbitrary linear operator. $I$ is the identity, and $S$ is the swap operator on the two replicas. If the RQC forms a global $2$-design, i.e., a $2$-design over the entire Hilbert space of the whole system, the factor $d^2$ in the denominator will become $4^{N}$ and result in an exponential vanishing variance of cost derivatives. This is the reason for the original BP phenomenon, i.e., a general linear depth RQC is enough to explore the exponential large Hilbert space in the sense of the second moment, resulting in the exponential concentration of the expectation value of an observable. This indicates that a general linear depth circuit (GLDC) is too powerful to be used as a variational ansatz running on quantum devices. Fortunately, based on the above discussion regarding the entanglement area law, GLDC seems overkill for the ground-state preparation task as GLDC typically generates volume-law states. Utilizing this area-law feature, one can design a new circuit class called finite local-depth circuit (FLDC), which is proved to be free from BP for local Hamiltonians~\cite{Zhang2024}, and at the same time is generally hard to simulate classically by tensor network methods in two and higher dimensions. This creates new possibilities for practical quantum advantages in VQE.

\subsection{Bad local minima}
However, BP is not the only trainability issue in VQAs. Another notorious issue is the local minimum problem, which is thought to exist extensively in moderately shallow circuits~\cite{Anschuetz2022}. It is worth noticing that the scenario that one really needs to worry about is the landscape with so-called bad local minima, where the cost values of local minima concentrate far from that of the global minimum. By contrast, the landscape with good local minima refers to the case where the cost values of local minima are close to that of the global minimum. The existence of good local minima does not need to be overly concerning, as in physical systems, these good local minima might correspond to some physically meaningful metastable low-energy states compared to the true ground state, e.g., the competing ordered states in strongly correlated systems such as high-temperature superconductors~\cite{Zheng2017, Qin2020, Xu2024}. On the other hand, the existence of bad local minima is a severe trainability issue that needs to be fixed. Under certain conditions, it is proved~\cite{Anschuetz2022} that there is a super-polynomially small fraction of local minima within any constant additive error of the ground state energy when the so-called local overparameterization ratio $\gamma$ is much smaller than $1$. Here the local overparameterization ratio is defined by 
\begin{equation}
    \gamma=\frac{M_{\text{cone}}}{2d_{\text{cone}}},    
\end{equation}
where $M_{\text{cone}}$ is the number of tunable parameters in the circuit within the backward causal cone of a certain observable and $d_{\text{cone}}$ is the Hilbert space dimension of the set of qubits within the backward causal cone. The backward causal cone refers to the set of gates in a circuit that has an influence on the expectation value of the observable, e.g.,
\begin{equation}
\begin{mytikz2}
\draw [line width=0.75]    (50.5,84.32) -- (178.24,84.53) ;
\draw [line width=0.75]    (50.48,117.86) -- (178.22,118.07) ;
\draw [line width=0.75]    (50.52,50.78) -- (178.27,50.99) ;
\draw  [fill={rgb, 255:red, 245; green, 240; blue, 235 }  ,fill opacity=1 ][line width=0.75]  (67.31,47.43) .. controls (67.31,44.66) and (69.56,42.42) .. (72.33,42.42) -- (79.02,42.43) .. controls (81.79,42.43) and (84.03,44.68) .. (84.03,47.45) -- (83.98,87.72) .. controls (83.98,90.49) and (81.73,92.74) .. (78.96,92.73) -- (72.28,92.73) .. controls (69.51,92.72) and (67.26,90.48) .. (67.27,87.71) -- cycle ;
\draw [line width=0.75]    (50.43,185.33) -- (178.18,185.54) ;
\draw [line width=0.75]    (50.41,218.87) -- (178.15,219.08) ;
\draw [line width=0.75]    (50.46,151.39) -- (178.2,151.6) ;
\draw  [fill={rgb, 255:red, 245; green, 240; blue, 235 }  ,fill opacity=1 ][line width=0.75]  (92.49,80.99) .. controls (92.49,78.22) and (94.74,75.98) .. (97.51,75.98) -- (104.2,75.99) .. controls (106.97,75.99) and (109.21,78.24) .. (109.21,81.01) -- (109.17,121.29) .. controls (109.16,124.06) and (106.91,126.3) .. (104.15,126.3) -- (97.46,126.29) .. controls (94.69,126.29) and (92.45,124.04) .. (92.45,121.27) -- cycle ;
\draw  [fill={rgb, 255:red, 245; green, 240; blue, 235 }  ,fill opacity=1 ][line width=0.75]  (67.29,114.53) .. controls (67.3,111.76) and (69.55,109.51) .. (72.32,109.52) -- (79,109.52) .. controls (81.77,109.53) and (84.02,111.77) .. (84.01,114.54) -- (83.97,154.82) .. controls (83.97,157.59) and (81.72,159.83) .. (78.95,159.83) -- (72.26,159.82) .. controls (69.49,159.82) and (67.25,157.57) .. (67.25,154.8) -- cycle ;
\draw  [fill={rgb, 255:red, 245; green, 240; blue, 235 }  ,fill opacity=1 ][line width=0.75]  (92.42,147.59) .. controls (92.42,144.82) and (94.67,142.58) .. (97.44,142.58) -- (104.13,142.59) .. controls (106.9,142.59) and (109.14,144.84) .. (109.14,147.61) -- (109.1,187.88) .. controls (109.09,190.65) and (106.84,192.89) .. (104.07,192.89) -- (97.39,192.88) .. controls (94.62,192.88) and (92.37,190.63) .. (92.38,187.86) -- cycle ;
\draw  [fill={rgb, 255:red, 245; green, 240; blue, 235 }  ,fill opacity=1 ][line width=0.75]  (67.22,181.09) .. controls (67.23,178.32) and (69.48,176.08) .. (72.25,176.08) -- (78.93,176.09) .. controls (81.7,176.09) and (83.95,178.34) .. (83.94,181.11) -- (83.9,221.38) .. controls (83.9,224.15) and (81.65,226.39) .. (78.88,226.39) -- (72.19,226.38) .. controls (69.42,226.38) and (67.18,224.13) .. (67.18,221.36) -- cycle ;
\draw [line width=0.75]    (50.5,252.76) -- (178.24,252.97) ;
\draw [line width=0.75]    (50.47,286.3) -- (178.22,286.51) ;
\draw  [fill={rgb, 255:red, 245; green, 240; blue, 235 }  ,fill opacity=1 ][line width=0.75]  (92.52,215.02) .. controls (92.52,212.25) and (94.77,210) .. (97.54,210.01) -- (104.22,210.01) .. controls (106.99,210.02) and (109.24,212.26) .. (109.23,215.03) -- (109.19,255.31) .. controls (109.19,258.08) and (106.94,260.32) .. (104.17,260.32) -- (97.48,260.31) .. controls (94.71,260.31) and (92.47,258.06) .. (92.47,255.29) -- cycle ;
\draw  [fill={rgb, 255:red, 245; green, 240; blue, 235 }  ,fill opacity=1 ][line width=0.75]  (67.32,249.02) .. controls (67.32,246.25) and (69.57,244) .. (72.34,244.01) -- (79.03,244.01) .. controls (81.8,244.02) and (84.04,246.27) .. (84.04,249.04) -- (84,289.31) .. controls (83.99,292.08) and (81.74,294.32) .. (78.97,294.32) -- (72.29,294.31) .. controls (69.52,294.31) and (67.27,292.06) .. (67.28,289.29) -- cycle ;
\draw  [fill={rgb, 255:red, 245; green, 166; blue, 35 }  ,fill opacity=1 ][line width=0.75]  (176.52,210.84) .. controls (176.52,210.84) and (176.52,210.84) .. (176.52,210.84) -- (193.24,210.85) .. controls (193.24,210.85) and (193.24,210.85) .. (193.24,210.85) -- (193.22,228.27) .. controls (193.22,228.27) and (193.22,228.27) .. (193.22,228.27) -- (176.5,228.26) .. controls (176.5,228.26) and (176.5,228.26) .. (176.5,228.26) -- cycle ;
\draw [line width=0.75]    (50.48,319.36) -- (178.22,319.57) ;
\draw [line width=0.75]    (50.46,352.89) -- (178.2,353.1) ;
\draw  [fill={rgb, 255:red, 245; green, 240; blue, 235 }  ,fill opacity=1 ][line width=0.75]  (92.49,282.49) .. controls (92.49,279.72) and (94.74,277.48) .. (97.51,277.48) -- (104.2,277.49) .. controls (106.97,277.49) and (109.21,279.74) .. (109.21,282.51) -- (109.17,322.79) .. controls (109.16,325.56) and (106.91,327.8) .. (104.15,327.8) -- (97.46,327.79) .. controls (94.69,327.79) and (92.45,325.54) .. (92.45,322.77) -- cycle ;
\draw  [fill={rgb, 255:red, 245; green, 240; blue, 235 }  ,fill opacity=1 ][line width=0.75]  (67.29,316.03) .. controls (67.3,313.26) and (69.55,311.01) .. (72.32,311.02) -- (79,311.02) .. controls (81.77,311.03) and (84.02,313.27) .. (84.01,316.04) -- (83.97,356.32) .. controls (83.97,359.09) and (81.72,361.33) .. (78.95,361.33) -- (72.26,361.32) .. controls (69.49,361.32) and (67.25,359.07) .. (67.25,356.3) -- cycle ;
\draw  [fill={rgb, 255:red, 245; green, 240; blue, 235 }  ,fill opacity=1 ][line width=0.75]  (118.81,47.43) .. controls (118.81,44.66) and (121.06,42.42) .. (123.83,42.42) -- (130.52,42.43) .. controls (133.29,42.43) and (135.53,44.68) .. (135.53,47.45) -- (135.48,87.72) .. controls (135.48,90.49) and (133.23,92.74) .. (130.46,92.73) -- (123.78,92.73) .. controls (121.01,92.72) and (118.76,90.48) .. (118.77,87.71) -- cycle ;
\draw  [fill={rgb, 255:red, 245; green, 240; blue, 235 }  ,fill opacity=1 ][line width=0.75]  (143.99,80.99) .. controls (143.99,78.22) and (146.24,75.98) .. (149.01,75.98) -- (155.7,75.99) .. controls (158.47,75.99) and (160.71,78.24) .. (160.71,81.01) -- (160.67,121.29) .. controls (160.66,124.06) and (158.41,126.3) .. (155.65,126.3) -- (148.96,126.29) .. controls (146.19,126.29) and (143.95,124.04) .. (143.95,121.27) -- cycle ;
\draw  [fill={rgb, 255:red, 245; green, 240; blue, 235 }  ,fill opacity=1 ][line width=0.75]  (118.79,114.53) .. controls (118.8,111.76) and (121.05,109.51) .. (123.82,109.52) -- (130.5,109.52) .. controls (133.27,109.53) and (135.52,111.77) .. (135.51,114.54) -- (135.47,154.82) .. controls (135.47,157.59) and (133.22,159.83) .. (130.45,159.83) -- (123.76,159.82) .. controls (120.99,159.82) and (118.75,157.57) .. (118.75,154.8) -- cycle ;
\draw  [fill={rgb, 255:red, 245; green, 240; blue, 235 }  ,fill opacity=1 ][line width=0.75]  (143.92,147.59) .. controls (143.92,144.82) and (146.17,142.58) .. (148.94,142.58) -- (155.63,142.59) .. controls (158.4,142.59) and (160.64,144.84) .. (160.64,147.61) -- (160.6,187.88) .. controls (160.59,190.65) and (158.34,192.89) .. (155.57,192.89) -- (148.89,192.88) .. controls (146.12,192.88) and (143.87,190.63) .. (143.88,187.86) -- cycle ;
\draw  [fill={rgb, 255:red, 245; green, 240; blue, 235 }  ,fill opacity=1 ][line width=0.75]  (118.72,181.09) .. controls (118.73,178.32) and (120.98,176.08) .. (123.75,176.08) -- (130.43,176.09) .. controls (133.2,176.09) and (135.45,178.34) .. (135.44,181.11) -- (135.4,221.38) .. controls (135.4,224.15) and (133.15,226.39) .. (130.38,226.39) -- (123.69,226.38) .. controls (120.92,226.38) and (118.68,224.13) .. (118.68,221.36) -- cycle ;
\draw  [fill={rgb, 255:red, 245; green, 240; blue, 235 }  ,fill opacity=1 ][line width=0.75]  (144.02,215.02) .. controls (144.02,212.25) and (146.27,210) .. (149.04,210.01) -- (155.72,210.01) .. controls (158.49,210.02) and (160.74,212.26) .. (160.73,215.03) -- (160.69,255.31) .. controls (160.69,258.08) and (158.44,260.32) .. (155.67,260.32) -- (148.98,260.31) .. controls (146.21,260.31) and (143.97,258.06) .. (143.97,255.29) -- cycle ;
\draw  [fill={rgb, 255:red, 245; green, 240; blue, 235 }  ,fill opacity=1 ][line width=0.75]  (118.82,249.02) .. controls (118.82,246.25) and (121.07,244) .. (123.84,244.01) -- (130.53,244.01) .. controls (133.3,244.02) and (135.54,246.27) .. (135.54,249.04) -- (135.5,289.31) .. controls (135.49,292.08) and (133.24,294.32) .. (130.47,294.32) -- (123.79,294.31) .. controls (121.02,294.31) and (118.77,292.06) .. (118.78,289.29) -- cycle ;
\draw  [fill={rgb, 255:red, 245; green, 240; blue, 235 }  ,fill opacity=1 ][line width=0.75]  (143.99,282.49) .. controls (143.99,279.72) and (146.24,277.48) .. (149.01,277.48) -- (155.7,277.49) .. controls (158.47,277.49) and (160.71,279.74) .. (160.71,282.51) -- (160.67,322.79) .. controls (160.66,325.56) and (158.41,327.8) .. (155.65,327.8) -- (148.96,327.79) .. controls (146.19,327.79) and (143.95,325.54) .. (143.95,322.77) -- cycle ;
\draw  [fill={rgb, 255:red, 245; green, 240; blue, 235 }  ,fill opacity=1 ][line width=0.75]  (118.79,316.03) .. controls (118.8,313.26) and (121.05,311.01) .. (123.82,311.02) -- (130.5,311.02) .. controls (133.27,311.03) and (135.52,313.27) .. (135.51,316.04) -- (135.47,356.32) .. controls (135.47,359.09) and (133.22,361.33) .. (130.45,361.33) -- (123.76,361.32) .. controls (120.99,361.32) and (118.75,359.07) .. (118.75,356.3) -- cycle ;
\draw  [color={rgb, 255:red, 0; green, 0; blue, 0 }  ,draw opacity=0.16 ][fill={rgb, 255:red, 245; green, 166; blue, 35 }  ,fill opacity=0.2 ][dash pattern={on 4.5pt off 4.5pt}] (61.42,85.86) -- (168.99,219.27) -- (168.85,251.56) -- (60.12,384.02) -- cycle ;
\end{mytikz2}\quad,
\end{equation}
where the orange square represents the location of the observable and the shaded area represents the causal cone. For a 1-dimensional brickwall circuit of depth $D$ and a single-qubit observable, the ratio can be roughly estimated as
\begin{equation}
    \gamma = \frac{2D(2D+1)/2\cdot (d^{2\beta}-1)}{2d^{2D\beta}}\approx D(D+\frac{1}{2})d^{2\beta(1-D)},
\end{equation}
where $\beta$ denotes the block size, i.e., the number of qubits acted by each block, and $d$ denotes the local Hilbert space dimension with $d=2$ for qubit systems. For very shallow circuits such as a 1-dimensional brickwall circuit of depth $D=1$, it holds that $\gamma \geq 1$, which means that the circuit is overparametrized locally and hence bad local minima are rare. On the other hand, for constant depth circuits of moderately larger depth $D$, the ratio decays rapidly to $\gamma\ll 1$, resulting in local underparametrization, and hence bad local minima become extensive for large-size systems. In other words, there exists a critical depth $D_c$ such that when increasing the depth of constant depth circuits to $D> D_c$, the training landscape undergoes a computational phase transition (or crossover) from those with good local minima to those with bad local minima, which would continue to be more severe for logarithmic depth circuits such as quantum convolutional neural network (QCNN) and MERA. Nevertheless, we point out that if the support of each subterm in the Hamiltonian is just one qubit, there is even no need to consider a 1D brickwall circuit of depth $D=1$, i.e., this is a trivial polarization Hamiltonian so that one layer of single-qubit rotations is enough to solve the ground state. In other words, as long as we consider a non-trivial Hamiltonian including extensive subterms with support larger than two qubits, even a 1D brickwall circuit of depth $D=1$ can be locally underparametrized which will lead to bad local minima for large-size systems.

Here, we briefly explain the reason why overparametrization can eliminate local minima. An intuitive explanation is that when the number of tunable parameters is equal to or larger than the entire Hilbert space dimension, then for a certain parameter point away from the global minimum, there will always exist a direction that allows the cost value to decrease further, which corresponds to the rotations from the current state to the ground state. Consequently, the local minima in the underparametrization case will become saddle points as the parameter space enlarges, leading to the elimination of local minima. Rigorously, for PQCs consisting of repeated layers, overparametrization can be defined by the saturate rank of the quantum Fisher information (QFI) matrix~\cite{Larocca2023}, though the overparametrization there is defined with respect to the maximally reachable sub-manifold instead of the entire Hilbert space. In general, an exponential number of parameters is required to achieve overparametrization. For some PQCs with special gates forming a dynamical Lie algebra (DLA) of polynomially large dimensions~\cite{Larocca2022}, it can be shown that only a polynomial number of parameters is required to achieve overparametrization~\cite{Larocca2023}, such as the PQC consisting of only $e^{-iX_j\theta/2}$ and $e^{-iZ_{j} Z_{j+1}\theta/2}$ gates.

\subsection{Insufficient expressibility}
Based on the above discussion, one can see that BP and bad local minima impose strong constraints on the expressive power of trainable PQCs. On the other hand, if the expressive power of the PQC is too weak to include the true ground state of the given Hamiltonian, the PQC will also fail to learn the ground state successfully. It is worth emphasizing that the expressibility mentioned here is not simply the size of the expressive space itself, as measured by metrics like frame potential~\cite{Holmes2021}. Instead, it depends on specific Hamiltonians, referring to the closeness of the lowest energy state within its expressive space to the true ground state as quantified by the so-called reachability deficit~\cite{Bharti2022}
\begin{equation}\label{eq:deficit}
    \Delta_{\text{r}} = \min_{\bm{\theta}}C(\bm{\theta}) - E_{\text{G}},
\end{equation}
where $E_{\text{G}}$ denotes the true ground state energy of the given Hamiltonian. By definition, it ensures that $\Delta_{\text{r}} \geq 0$ and the equality holds if and only if the expressive space contains the true ground state. Notably, $\Delta_{\text{r}} =0$ only indicates that the ground state is within the expressive space, while it does not convey any information about trainability, i.e., it does not guarantee an efficient and smooth optimization path from the initial state to the ground state. For example, suppose the expressive power of a PQC is strong enough such as a general polynomial-depth circuit. In that case, one can almost always get $\Delta=0$ but the PQC still cannot learn the ground state successfully due to trainability issues.

However, even with such severe constraints from BP, bad local minima, and insufficient expressibility, there is still a possibility for using quantum circuits as variational ansatzes to solve problems. The possibility comes from utilizing the more detailed prior information on the given task to devise the so-called Hamiltonian informed ansatzes or called problem-inspired ansatzes or task-specific ansatzes. For example, based on the prior information that ground states of locally interacting quantum many-body systems usually satisfy the entanglement area law with at most logarithmic corrections, one can use FLDC or logarithmic local-depth circuits to avoid barren plateaus~\cite{Zhang2024}, which also satisfy the entanglement area law under the condition of local circuits and at the same time cannot be efficiently simulated by classical tensor network methods in two and higher dimensions. On the other hand, one can increase the expressibility by introducing ancilla qubits and increasing the block size (like increasing the bond dimension in MPS) instead of directly appending repeated layers that may promote local underparametrization with bad local minima. As another example, one can use special block templates that depend on the given Hamiltonian instead of universal blocks (such as $\mathrm{SU}(4)$ Cartan decomposition) to construct the PQC, e.g., Hamiltonian variational ansatzes (HVA)~\cite{Wiersema2020} like in the quantum approximate optimization algorithm (QAOA). This is because the theoretical results mentioned above on BP and bad local minima are based on the assumption of local scrambled gates. More broadly, various heuristic algorithms based on the prior knowledge of a specific task, e.g., a clever initialization, always have the potential to bypass the above theoretical limitations.
\vspace{10pt}

In summary, the above discussion inspires us to design specific PQCs (e.g., gate types, circuit depth, qubit connectivity, circuit architectures, etc.) tailored for specific problems instead of attempting to use a universal PQC to solve all problems, so that the expressive space can just contain the target despite it is small. This necessitates an efficient method to diagnose which PQCs are suitable for which problems, i.e., to identify the learnability of a PQC regarding a given target.

Here we clarify the term ``\textbf{learnability}'' used throughout the paper. We define learnability as the ability of a PQC to learn a given target state, which can be quantified by the training error or test error. For example, in the context of VQE which is also the focus of this paper, the learnability can be quantified by the training error, i.e., the gap between the converged energy $E_{\text{conv}}$ in training and the true ground state energy $E_{\text{G}}$, i.e.,
\begin{equation}\label{eq:error}
    \epsilon = E_{\text{conv}} - E_{\text{G}},
\end{equation}
or the fidelity between the PQC output state after convergence and the true ground state $\left|\braket{\psi_{\text{conv}}}{G}\right|$. Note that the difference between Eq.\,\eqref{eq:error} and Eq.\,\eqref{eq:deficit} is that Eq.\,\eqref{eq:deficit} does not care about whether the minimization process can be efficiently achieved by training. In these contexts, learnability can be simply regarded as the combination of trainability and expressibility. That is to say, the learnability of a given PQC and a Hamiltonian is high if and only if the trainability and expressibility are both high. However, in other contexts such as quantum machine learning (QML) where the information of the learning target is encoded in the initial state instead of the Hamiltonian like in VQE, learnability also depends on another factor besides trainability and expressibility, i.e., the number of samples. That is to say, even with trainable and expressive PQCs, if the sample size of the given dataset is so small that overfitting occurs and results in a large generalization error~\cite{Du2021}, the learnability is still low. Also, here we only consider ideal quantum devices where the influences of gate noise and measurement error on learnability are not considered, which we leave to future research.

A direct method for identifying learnability is to run an actual training procedure and use the training error to assess learnability. Nevertheless, this method is inefficient, heavily depends on the choice of classical optimizers and sometimes it is hard to get the exact ground state for benchmarks. (Although the learnability quantified by Eq.\,\eqref{eq:error} also depends on the choice of classical optimizers, here we only care about its dependence on the design of PQCs, or say the precise characterization of learnability here is the minimum of average case of the error over all possible optimizers.) This motivates us to seek an efficient criterion to assess the learnability of a PQC regarding a given target, leading to the study of the relative fluctuation (RF) in this work.

\section{Theorem and proof}\label{appendix:proof}
In this section, we provide a detailed statement and proof of the theorem shown in the main text. The basic idea is similar to Ref.~\cite{Zhang2024} of one of the authors, i.e., tracking the back evolution of the local Pauli strings along different paths across the circuit. One can refer to Ref.~\cite{Zhang2024} for more pedagogical explanations.

We first clarify some of our notations in advance. We use $M$ to denote the total number of tunable parameters and $M'$ to denote the total number of blocks. We suppose that the initial state for the PQC is always the zero state $\rho_0=\ketbrasame{\bm{0}}$. Our results below can be easily generalized to any other product state. We use the Schatten $p$-norm for linear operators and the vector $l$-norm for vectors. We use $h\vert_s$ to denote the sub-string of a Pauli string $h$ on the qubit subset $s$, e.g., $(X\otimes I\otimes Y)\vert_{\{q_1,q_2\}}=X\otimes I$. We denote the $t$-degree twirling channel on support $s$ as
\begin{equation}\label{eq:twirling_def}
    \mathcal{T}_{s}^{(t)}(\cdot) = \int d\mu_s(U)~ U^{\dagger\otimes t} (\cdot) U^{\otimes t},
\end{equation}
where $\mu_s$ is the Haar measure over the unitary group $\mathrm{U}(2^{|s|})$ acting on $s$. We use $\mathcal{P}\vert_s$ to denote the set of all Pauli strings on $s$, and $\mathcal{P}'\vert_s$ to denote the set of all non-trivial Pauli strings on $s$, i.e., $\mathcal{P}'\vert_s=\mathcal{P}\vert_s- \{I\vert_s\}$.

The main assumption here is that the blocks in the PQC are locally scrambled independently. That is to say, the ensemble of randomly initialized parameters within each block of the PQC can induce a unitary ensemble forming a $2$-design on its local support so that one can calculate the average and variance by the Weingarten calculus. This can be approximately realized by uniform-randomly initialized blocks with full parametrization, e.g., the $\mathrm{SU}(4)$ Cartan decomposition. We denote this circuit ensemble composed of independent $2$-design blocks as $\mathbb{U}$. Using blocks as the elementary unit, the PQC can be rewritten as
\begin{equation}\label{eq:u_bbb}
    \mathbf{U} = \prod_{k}^{\leftarrow} B_k,
\end{equation}
where the dependence on $\bm{\theta}$ is omitted for simplicity. Hence the average and variance of the cost function can be expressed as
\begin{align}
    \E_\mathbb{U}\left[ C\right] &= \tr\left[\rho_0 \mathcal{T}^{(1)}_{s(B_1)}\circ\mathcal{T}^{(1)}_{s(B_2)} \circ\cdots\circ \mathcal{T}^{(1)}_{s(B_{M'})}(H) \right], \label{eq:exp_expression_by_twirling}\\
    \var_\mathbb{U}\left[ C\right] &= \tr\left[\rho_0^{\otimes 2} \mathcal{T}^{(2)}_{s(B_1)}\circ\mathcal{T}^{(2)}_{s(B_2)} \circ\cdots\circ \mathcal{T}^{(2)}_{s(B_{M'})}(H^{\otimes 2}) \right] - \left(\E_\mathbb{U}\left[ C\right]\right)^2.
    \label{eq:var_expression_by_twirling}
\end{align}
This can be seen as a back-evolution process for the Hamiltonian by a sequence of twirling channels, like a ``stochastic Heisenberg picture''. We assume the Hamiltonian is always traceless $\tr(H)=0$ since the identity subterm only contributes to a constant to the energy expectation. Moreover, we assume $s(\mathbf{U}) \supseteq s(H)$ without loss of generality. Under these assumptions, Eq.\,\eqref{eq:twirling_1st_order} directly leads to $\E_\mathbb{U}\left[ C\right]=0$. Our main goal in the following is to estimate the first term in Eq.\,\eqref{eq:var_expression_by_twirling}. Our main tool is the following lemma for each single twirling channel and each Pauli string subterm.

\begin{lemma}\label{lemma:pauli_string_copies_one_step_evolution}
Suppose that $h_1$ and $h_2$ are two Pauli strings that are distinct on and only on $s'$. If $s'\cap s=\varnothing$ and $h\vert_s$ is non-trivial, then it holds that
\begin{equation}\label{eq:twirling_h2}
    \mathcal{T}^{(2)}_{s} (h_1\otimes h_2) = \frac{1}{2^{|2s|}-1} \sum_{\sigma\in\mathcal{P}'\vert_s} \sigma^{\otimes 2} \otimes ( h_1\vert_{\bar{s}}\otimes h_2\vert_{\bar{s}} ),
\end{equation}
where $\bar{s}$ is the complement of $s$. If $s'\cap s=\varnothing$ and $h\vert_s$ is trivial, then $\mathcal{T}^{(2)}_{s} (h_1\otimes h_2)= h_1\otimes h_2$. If $s'\cap s\neq\varnothing$, then $\mathcal{T}^{(2)}_s(h_1\otimes h_2)=0$.
\end{lemma}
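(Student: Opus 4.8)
The plan is to reduce everything to the single Weingarten identity in Eq.~\eqref{eq:twirling_2st_order} applied on the support $s$ alone. First I would factor the argument according to the bipartition $s\,|\,\bar{s}$: writing $a_i := h_i\vert_s$, I regroup the two replica copies so the two $s$-factors sit together and obtain $h_1\otimes h_2 = (a_1\otimes a_2)\otimes(h_1\vert_{\bar{s}}\otimes h_2\vert_{\bar{s}})$. Since the twirling unitary $U$ acts only on $s$, the channel $\mathcal{T}^{(2)}_s$ leaves the $\bar{s}$-block $h_1\vert_{\bar{s}}\otimes h_2\vert_{\bar{s}}$ completely untouched, so the whole problem collapses to computing $\mathcal{T}^{(2)}_s(a_1\otimes a_2)$ on the doubled $s$-space of dimension set by $d=2^{|s|}$.

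Second, I would plug $X=a_1\otimes a_2$ into Eq.~\eqref{eq:twirling_2st_order}. The only data required are the two scalars $\tr(X)=\tr(a_1)\tr(a_2)$ and $\tr(SX)=\tr(a_1 a_2)$, both evaluated via the elementary Pauli facts $\tr(\sigma)=d\,\delta_{\sigma,I\vert_s}$ and $\tr(\sigma\tau)=d\,\delta_{\sigma,\tau}$ for $\sigma,\tau\in\mathcal{P}\vert_s$. The hypotheses dictate the three cases through the relation between $s$ and the ``difference support'' $s'$: if $s'\cap s=\varnothing$ the two strings agree on $s$, so $a_1=a_2=h\vert_s$, whereas if $s'\cap s\neq\varnothing$ they necessarily differ on $s$, so $a_1\neq a_2$.

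Third, I would dispatch the cases. When $s'\cap s=\varnothing$ and $h\vert_s$ is nontrivial, $\tr(X)=0$ and $\tr(SX)=\tr\big((h\vert_s)^2\big)=d$, so Eq.~\eqref{eq:twirling_2st_order} returns $\tfrac{1}{d^2-1}(dS-I)$; converting the swap via the standard identity $S=\tfrac1d\sum_{\sigma\in\mathcal{P}\vert_s}\sigma\otimes\sigma$ gives $dS-I=\sum_{\sigma\in\mathcal{P}'\vert_s}\sigma^{\otimes2}$, and tensoring back the untouched $\bar{s}$-block yields Eq.~\eqref{eq:twirling_h2} once one notes $d^2=2^{|2s|}$. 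When $h\vert_s$ is trivial, $a_1\otimes a_2$ is the identity on the doubled $s$-space and is twirl-invariant, returning $h_1\otimes h_2$. When $s'\cap s\neq\varnothing$, the distinct strings $a_1\neq a_2$ force $\tr(a_1 a_2)=0$ and, since at most one of them is the identity, $\tr(a_1)\tr(a_2)=0$ as well, so both coefficients in Eq.~\eqref{eq:twirling_2st_order} vanish and the output is $0$.

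The computation is routine once the reduction is in place; the only two points that require care are the replica-regrouping bookkeeping in the first step (keeping straight which tensor slot belongs to $s$ versus $\bar{s}$ and to replica $1$ versus replica $2$) and the swap-to-Pauli expansion, which is what turns the abstract $dS-I$ into the explicit sum $\sum_{\sigma\in\mathcal{P}'\vert_s}\sigma^{\otimes2}$ appearing in the statement. I expect no genuine obstacle beyond verifying the swap identities $\tr\big(S(A\otimes B)\big)=\tr(AB)$ and $S=\tfrac1d\sum_\sigma\sigma\otimes\sigma$, both of which follow from orthogonality of the Pauli basis.
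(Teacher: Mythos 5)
Your proof is correct and takes essentially the same route as the paper, whose own proof is the one-line instruction to apply Eq.~\eqref{eq:twirling_2st_order} and substitute the Pauli decomposition of the swap operator; your regrouping of replicas, trace evaluations, and three-way case analysis are exactly the details that one-liner leaves implicit.
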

\begin{proof}
This can be simply proved by applying Eq.\,\eqref{eq:twirling_2st_order} and substituting the Pauli decomposition of the SWAP operator.
\end{proof}

This lemma means that a single step of back evolution of a doubled Pauli string on $s$ will map the sub-string inside $s$ to a uniformly weighted sum of all non-trivial Pauli strings on $s$ while keeping the part outside $s$ unchanged. This lemma gives us the key point of how to deal with the sequence of twirling channels, i.e., tracking the back evolution of doubled Pauli strings. Note that the cross-terms in $H^{\otimes2} = \sum_{ij} \lambda_i \lambda_j h_i\otimes h_j$ will all be eliminated, i.e.,
\begin{equation}\label{eq:var_no_cross_terms}
    \var_\mathbb{U}[C] = \sum_j \lambda_j^2 \var_\mathbb{U}[ \avg{h_j}],
\end{equation}
where $\avg{\cdot}=\braoprket{\bm{0}}{\mathbf{U}^\dagger(\cdot)\mathbf{U}}{\bm{0}}$. As the coefficients $\lambda_j^2$ are all non-negative, a lower bound can be obtained by bounding the variance corresponding to each subterm respectively. 

In order to deal with circuits in various shapes instead of only the brickwall circuit, we introduce the notion of ``path'' in quantum circuits. A path $p$ in a circuit $\mathbf{U}$ is defined as a time-ordered sequence of blocks $p=\{B_{k_1}, B_{k_2},  B_{k_3}, \cdots\}$ with $k_1<k_2<k_3<\cdots$ where adjacent blocks are connected in the circuit diagram. Every path should traverse the entire circuit and connect to the given observable, as illustrated below
\begin{equation}
\begin{mytikz2}
\draw [line width=0.75]    (50.67,92.46) -- (178.41,92.67) ;
\draw [line width=0.75]    (50.65,126.01) -- (178.39,126.21) ;
\draw [line width=0.75]    (50.69,58.93) -- (178.43,59.14) ;
\draw  [fill={rgb, 255:red, 245; green, 240; blue, 235 }  ,fill opacity=1 ][line width=0.75]  (67.48,55.58) .. controls (67.48,52.81) and (69.73,50.56) .. (72.5,50.57) -- (79.18,50.57) .. controls (81.95,50.58) and (84.2,52.82) .. (84.19,55.59) -- (84.15,95.87) .. controls (84.15,98.64) and (81.9,100.88) .. (79.13,100.88) -- (72.44,100.87) .. controls (69.67,100.87) and (67.43,98.62) .. (67.43,95.85) -- cycle ;
\draw [line width=0.75]    (50.6,193.48) -- (178.34,193.69) ;
\draw [line width=0.75]    (50.58,227.02) -- (178.32,227.22) ;
\draw [line width=0.75]    (50.62,159.54) -- (178.37,159.75) ;
\draw  [fill={rgb, 255:red, 255; green, 234; blue, 195 }  ,fill opacity=1 ][line width=0.75]  (92.66,89.14) .. controls (92.66,86.37) and (94.91,84.13) .. (97.68,84.13) -- (104.36,84.14) .. controls (107.13,84.14) and (109.38,86.39) .. (109.37,89.16) -- (109.33,129.43) .. controls (109.33,132.2) and (107.08,134.44) .. (104.31,134.44) -- (97.62,134.43) .. controls (94.85,134.43) and (92.61,132.18) .. (92.61,129.41) -- cycle ;
\draw  [fill={rgb, 255:red, 255; green, 234; blue, 195 }  ,fill opacity=1 ][line width=0.75]  (67.46,122.67) .. controls (67.46,119.9) and (69.71,117.66) .. (72.48,117.66) -- (79.17,117.67) .. controls (81.94,117.67) and (84.18,119.92) .. (84.18,122.69) -- (84.14,162.96) .. controls (84.13,165.73) and (81.89,167.98) .. (79.12,167.97) -- (72.43,167.97) .. controls (69.66,167.96) and (67.42,165.72) .. (67.42,162.95) -- cycle ;
\draw  [fill={rgb, 255:red, 245; green, 240; blue, 235 }  ,fill opacity=1 ][line width=0.75]  (92.59,155.73) .. controls (92.59,152.96) and (94.84,150.72) .. (97.61,150.72) -- (104.29,150.73) .. controls (107.06,150.73) and (109.31,152.98) .. (109.3,155.75) -- (109.26,196.03) .. controls (109.26,198.8) and (107.01,201.04) .. (104.24,201.04) -- (97.55,201.03) .. controls (94.78,201.03) and (92.54,198.78) .. (92.54,196.01) -- cycle ;
\draw  [fill={rgb, 255:red, 245; green, 240; blue, 235 }  ,fill opacity=1 ][line width=0.75]  (67.39,189.23) .. controls (67.39,186.46) and (69.64,184.22) .. (72.41,184.22) -- (79.1,184.23) .. controls (81.87,184.23) and (84.11,186.48) .. (84.11,189.25) -- (84.07,229.53) .. controls (84.06,232.3) and (81.82,234.54) .. (79.05,234.54) -- (72.36,234.53) .. controls (69.59,234.53) and (67.35,232.28) .. (67.35,229.51) -- cycle ;
\draw [line width=0.75]    (50.66,260.9) -- (178.4,261.11) ;
\draw [line width=0.75]    (50.64,294.45) -- (178.38,294.65) ;
\draw  [fill={rgb, 255:red, 245; green, 240; blue, 235 }  ,fill opacity=1 ][line width=0.75]  (92.68,223.16) .. controls (92.68,220.39) and (94.93,218.15) .. (97.7,218.15) -- (104.39,218.16) .. controls (107.16,218.16) and (109.4,220.41) .. (109.4,223.18) -- (109.36,263.45) .. controls (109.35,266.22) and (107.11,268.47) .. (104.34,268.46) -- (97.65,268.46) .. controls (94.88,268.45) and (92.64,266.21) .. (92.64,263.44) -- cycle ;
\draw  [fill={rgb, 255:red, 245; green, 240; blue, 235 }  ,fill opacity=1 ][line width=0.75]  (67.49,257.16) .. controls (67.49,254.39) and (69.74,252.15) .. (72.51,252.15) -- (79.19,252.16) .. controls (81.96,252.16) and (84.21,254.41) .. (84.2,257.18) -- (84.16,297.45) .. controls (84.16,300.22) and (81.91,302.47) .. (79.14,302.46) -- (72.45,302.46) .. controls (69.68,302.45) and (67.44,300.21) .. (67.44,297.44) -- cycle ;
\draw  [fill={rgb, 255:red, 245; green, 166; blue, 35 }  ,fill opacity=1 ][line width=0.75]  (176.72,185.66) .. controls (176.72,185.66) and (176.72,185.66) .. (176.72,185.66) -- (193.44,185.68) .. controls (193.44,185.68) and (193.44,185.68) .. (193.44,185.68) -- (193.39,235.42) .. controls (193.39,235.42) and (193.39,235.42) .. (193.39,235.42) -- (176.67,235.4) .. controls (176.67,235.4) and (176.67,235.4) .. (176.67,235.4) -- cycle ;
\draw  [fill={rgb, 255:red, 245; green, 240; blue, 235 }  ,fill opacity=1 ][line width=0.75]  (118.98,55.58) .. controls (118.98,52.81) and (121.23,50.56) .. (124,50.57) -- (130.68,50.57) .. controls (133.45,50.58) and (135.7,52.82) .. (135.69,55.59) -- (135.65,95.87) .. controls (135.65,98.64) and (133.4,100.88) .. (130.63,100.88) -- (123.94,100.87) .. controls (121.17,100.87) and (118.93,98.62) .. (118.93,95.85) -- cycle ;
\draw  [fill={rgb, 255:red, 245; green, 240; blue, 235 }  ,fill opacity=1 ][line width=0.75]  (144.16,89.14) .. controls (144.16,86.37) and (146.41,84.13) .. (149.18,84.13) -- (155.86,84.14) .. controls (158.63,84.14) and (160.88,86.39) .. (160.87,89.16) -- (160.83,129.43) .. controls (160.83,132.2) and (158.58,134.44) .. (155.81,134.44) -- (149.12,134.43) .. controls (146.35,134.43) and (144.11,132.18) .. (144.11,129.41) -- cycle ;
\draw  [fill={rgb, 255:red, 255; green, 234; blue, 195 }  ,fill opacity=1 ][line width=0.75]  (118.96,123.07) .. controls (118.96,120.3) and (121.21,118.06) .. (123.98,118.06) -- (130.67,118.07) .. controls (133.44,118.07) and (135.68,120.32) .. (135.68,123.09) -- (135.64,163.36) .. controls (135.63,166.13) and (133.39,168.38) .. (130.62,168.37) -- (123.93,168.37) .. controls (121.16,168.36) and (118.92,166.12) .. (118.92,163.35) -- cycle ;
\draw  [fill={rgb, 255:red, 255; green, 234; blue, 195 }  ,fill opacity=1 ][line width=0.75]  (144.09,156.13) .. controls (144.09,153.36) and (146.34,151.12) .. (149.11,151.12) -- (155.79,151.13) .. controls (158.56,151.13) and (160.81,153.38) .. (160.8,156.15) -- (160.76,196.43) .. controls (160.76,199.2) and (158.51,201.44) .. (155.74,201.44) -- (149.05,201.43) .. controls (146.28,201.43) and (144.04,199.18) .. (144.04,196.41) -- cycle ;
\draw  [fill={rgb, 255:red, 245; green, 240; blue, 235 }  ,fill opacity=1 ][line width=0.75]  (118.89,189.23) .. controls (118.89,186.46) and (121.14,184.22) .. (123.91,184.22) -- (130.6,184.23) .. controls (133.37,184.23) and (135.61,186.48) .. (135.61,189.25) -- (135.57,229.53) .. controls (135.56,232.3) and (133.32,234.54) .. (130.55,234.54) -- (123.86,234.53) .. controls (121.09,234.53) and (118.85,232.28) .. (118.85,229.51) -- cycle ;
\draw  [fill={rgb, 255:red, 245; green, 240; blue, 235 }  ,fill opacity=1 ][line width=0.75]  (144.18,223.16) .. controls (144.18,220.39) and (146.43,218.15) .. (149.2,218.15) -- (155.89,218.16) .. controls (158.66,218.16) and (160.9,220.41) .. (160.9,223.18) -- (160.86,263.45) .. controls (160.85,266.22) and (158.61,268.47) .. (155.84,268.46) -- (149.15,268.46) .. controls (146.38,268.45) and (144.14,266.21) .. (144.14,263.44) -- cycle ;
\draw  [fill={rgb, 255:red, 245; green, 240; blue, 235 }  ,fill opacity=1 ][line width=0.75]  (118.99,257.16) .. controls (118.99,254.39) and (121.24,252.15) .. (124.01,252.15) -- (130.69,252.16) .. controls (133.46,252.16) and (135.71,254.41) .. (135.7,257.18) -- (135.66,297.45) .. controls (135.66,300.22) and (133.41,302.47) .. (130.64,302.46) -- (123.95,302.46) .. controls (121.18,302.45) and (118.94,300.21) .. (118.94,297.44) -- cycle ;
\draw [color={rgb, 255:red, 245; green, 166; blue, 35 }  ,draw opacity=1 ][line width=1.5]  [dash pattern={on 4.5pt off 2.25pt}]  (40.4,143.03) .. controls (49.73,143.03) and (67.87,142.93) .. (75.8,142.82) .. controls (83.73,142.7) and (94.09,109.26) .. (100.99,109.29) .. controls (107.9,109.31) and (143.27,175.7) .. (152.42,175.88) .. controls (161.58,176.06) and (165.93,176.03) .. (180.6,176.03) ;
\end{mytikz2}\quad.
\end{equation}
We define the ``length'' of a path $p$ by the sum of the lengths of all edges, where an edge refers to a pair of adjacent blocks $(B_{k}, B_{k'})$ in path $p$ with $k<k'$. The length of the edge is defined by
\begin{equation}\label{eq:def_edge_length}
    l(B_{k},B_{k'}) = \log_4 \left[\frac{4^{|s(B_{k'})|}-1}{4^{|s_c(B_{k}, B_{k'})|}-1}\right],
\end{equation}
where $s_c(B_{k},B_{k'})$ denotes the connecting support of $B_{k}$ and $B_{k'}$ with $k\leq k'$, which is defined by $s_c(B_{k},B_{k'})=s(B_{k})\cap s(B_{k'}) - \bigcup_{k<k''<k'}s(B_{k''})$. Specifically, we regard $(\rho_0, B_{k_1})$ also as an edge in the path.

However, a single path alone is not enough for the proof below because the back-evolved Pauli strings can be non-trivial on multiple regions simultaneously so we need to introduce the notion of ``path set''. A path set $P$ on a PQC with respect to an observable is defined by a collection of paths $P=\{p_1,p_2,\cdots\}$, where each path connects to the observable and the right ends of these paths cover the support of the observable. Along the backward direction, different paths in $P$ are allowed to merge but not allowed to split after overlapping. The length of the path set $P$ is defined by the sum of the length of all edges in the paths of $P$, denoted by $\opr{Edge}(P)$. Below are some instances of legal path sets
\begin{equation}
\begin{mytikz2}
\draw [line width=0.75]    (50.67,92.46) -- (178.41,92.67) ;
\draw [line width=0.75]    (50.65,126.01) -- (178.39,126.21) ;
\draw [line width=0.75]    (50.69,58.93) -- (178.43,59.14) ;
\draw  [fill={rgb, 255:red, 245; green, 240; blue, 235 }  ,fill opacity=1 ][line width=0.75]  (67.48,55.58) .. controls (67.48,52.81) and (69.73,50.56) .. (72.5,50.57) -- (79.18,50.57) .. controls (81.95,50.58) and (84.2,52.82) .. (84.19,55.59) -- (84.15,95.87) .. controls (84.15,98.64) and (81.9,100.88) .. (79.13,100.88) -- (72.44,100.87) .. controls (69.67,100.87) and (67.43,98.62) .. (67.43,95.85) -- cycle ;
\draw [line width=0.75]    (50.6,193.48) -- (178.34,193.69) ;
\draw [line width=0.75]    (50.58,227.02) -- (178.32,227.22) ;
\draw [line width=0.75]    (50.62,159.54) -- (178.37,159.75) ;
\draw  [fill={rgb, 255:red, 255; green, 234; blue, 195 }  ,fill opacity=1 ][line width=0.75]  (92.66,89.14) .. controls (92.66,86.37) and (94.91,84.13) .. (97.68,84.13) -- (104.36,84.14) .. controls (107.13,84.14) and (109.38,86.39) .. (109.37,89.16) -- (109.33,129.43) .. controls (109.33,132.2) and (107.08,134.44) .. (104.31,134.44) -- (97.62,134.43) .. controls (94.85,134.43) and (92.61,132.18) .. (92.61,129.41) -- cycle ;
\draw  [fill={rgb, 255:red, 255; green, 234; blue, 195 }  ,fill opacity=1 ][line width=0.75]  (67.46,122.67) .. controls (67.46,119.9) and (69.71,117.66) .. (72.48,117.66) -- (79.17,117.67) .. controls (81.94,117.67) and (84.18,119.92) .. (84.18,122.69) -- (84.14,162.96) .. controls (84.13,165.73) and (81.89,167.98) .. (79.12,167.97) -- (72.43,167.97) .. controls (69.66,167.96) and (67.42,165.72) .. (67.42,162.95) -- cycle ;
\draw  [fill={rgb, 255:red, 245; green, 240; blue, 235 }  ,fill opacity=1 ][line width=0.75]  (92.59,155.73) .. controls (92.59,152.96) and (94.84,150.72) .. (97.61,150.72) -- (104.29,150.73) .. controls (107.06,150.73) and (109.31,152.98) .. (109.3,155.75) -- (109.26,196.03) .. controls (109.26,198.8) and (107.01,201.04) .. (104.24,201.04) -- (97.55,201.03) .. controls (94.78,201.03) and (92.54,198.78) .. (92.54,196.01) -- cycle ;
\draw  [fill={rgb, 255:red, 255; green, 234; blue, 195 }  ,fill opacity=1 ][line width=0.75]  (67.39,189.23) .. controls (67.39,186.46) and (69.64,184.22) .. (72.41,184.22) -- (79.1,184.23) .. controls (81.87,184.23) and (84.11,186.48) .. (84.11,189.25) -- (84.07,229.53) .. controls (84.06,232.3) and (81.82,234.54) .. (79.05,234.54) -- (72.36,234.53) .. controls (69.59,234.53) and (67.35,232.28) .. (67.35,229.51) -- cycle ;
\draw [line width=0.75]    (50.66,260.9) -- (178.4,261.11) ;
\draw [line width=0.75]    (50.64,294.45) -- (178.38,294.65) ;
\draw  [fill={rgb, 255:red, 255; green, 234; blue, 195 }  ,fill opacity=1 ][line width=0.75]  (92.68,223.16) .. controls (92.68,220.39) and (94.93,218.15) .. (97.7,218.15) -- (104.39,218.16) .. controls (107.16,218.16) and (109.4,220.41) .. (109.4,223.18) -- (109.36,263.45) .. controls (109.35,266.22) and (107.11,268.47) .. (104.34,268.46) -- (97.65,268.46) .. controls (94.88,268.45) and (92.64,266.21) .. (92.64,263.44) -- cycle ;
\draw  [fill={rgb, 255:red, 245; green, 240; blue, 235 }  ,fill opacity=1 ][line width=0.75]  (67.49,257.16) .. controls (67.49,254.39) and (69.74,252.15) .. (72.51,252.15) -- (79.19,252.16) .. controls (81.96,252.16) and (84.21,254.41) .. (84.2,257.18) -- (84.16,297.45) .. controls (84.16,300.22) and (81.91,302.47) .. (79.14,302.46) -- (72.45,302.46) .. controls (69.68,302.45) and (67.44,300.21) .. (67.44,297.44) -- cycle ;
\draw  [fill={rgb, 255:red, 245; green, 166; blue, 35 }  ,fill opacity=1 ][line width=0.75]  (176.72,185.66) .. controls (176.72,185.66) and (176.72,185.66) .. (176.72,185.66) -- (193.44,185.68) .. controls (193.44,185.68) and (193.44,185.68) .. (193.44,185.68) -- (193.39,235.42) .. controls (193.39,235.42) and (193.39,235.42) .. (193.39,235.42) -- (176.67,235.4) .. controls (176.67,235.4) and (176.67,235.4) .. (176.67,235.4) -- cycle ;
\draw  [fill={rgb, 255:red, 245; green, 240; blue, 235 }  ,fill opacity=1 ][line width=0.75]  (118.98,55.58) .. controls (118.98,52.81) and (121.23,50.56) .. (124,50.57) -- (130.68,50.57) .. controls (133.45,50.58) and (135.7,52.82) .. (135.69,55.59) -- (135.65,95.87) .. controls (135.65,98.64) and (133.4,100.88) .. (130.63,100.88) -- (123.94,100.87) .. controls (121.17,100.87) and (118.93,98.62) .. (118.93,95.85) -- cycle ;
\draw  [fill={rgb, 255:red, 245; green, 240; blue, 235 }  ,fill opacity=1 ][line width=0.75]  (144.16,89.14) .. controls (144.16,86.37) and (146.41,84.13) .. (149.18,84.13) -- (155.86,84.14) .. controls (158.63,84.14) and (160.88,86.39) .. (160.87,89.16) -- (160.83,129.43) .. controls (160.83,132.2) and (158.58,134.44) .. (155.81,134.44) -- (149.12,134.43) .. controls (146.35,134.43) and (144.11,132.18) .. (144.11,129.41) -- cycle ;
\draw  [fill={rgb, 255:red, 255; green, 234; blue, 195 }  ,fill opacity=1 ][line width=0.75]  (118.96,123.07) .. controls (118.96,120.3) and (121.21,118.06) .. (123.98,118.06) -- (130.67,118.07) .. controls (133.44,118.07) and (135.68,120.32) .. (135.68,123.09) -- (135.64,163.36) .. controls (135.63,166.13) and (133.39,168.38) .. (130.62,168.37) -- (123.93,168.37) .. controls (121.16,168.36) and (118.92,166.12) .. (118.92,163.35) -- cycle ;
\draw  [fill={rgb, 255:red, 255; green, 234; blue, 195 }  ,fill opacity=1 ][line width=0.75]  (144.09,156.13) .. controls (144.09,153.36) and (146.34,151.12) .. (149.11,151.12) -- (155.79,151.13) .. controls (158.56,151.13) and (160.81,153.38) .. (160.8,156.15) -- (160.76,196.43) .. controls (160.76,199.2) and (158.51,201.44) .. (155.74,201.44) -- (149.05,201.43) .. controls (146.28,201.43) and (144.04,199.18) .. (144.04,196.41) -- cycle ;
\draw  [fill={rgb, 255:red, 255; green, 234; blue, 195 }  ,fill opacity=1 ][line width=0.75]  (118.89,189.23) .. controls (118.89,186.46) and (121.14,184.22) .. (123.91,184.22) -- (130.6,184.23) .. controls (133.37,184.23) and (135.61,186.48) .. (135.61,189.25) -- (135.57,229.53) .. controls (135.56,232.3) and (133.32,234.54) .. (130.55,234.54) -- (123.86,234.53) .. controls (121.09,234.53) and (118.85,232.28) .. (118.85,229.51) -- cycle ;
\draw  [fill={rgb, 255:red, 255; green, 234; blue, 195 }  ,fill opacity=1 ][line width=0.75]  (144.18,223.16) .. controls (144.18,220.39) and (146.43,218.15) .. (149.2,218.15) -- (155.89,218.16) .. controls (158.66,218.16) and (160.9,220.41) .. (160.9,223.18) -- (160.86,263.45) .. controls (160.85,266.22) and (158.61,268.47) .. (155.84,268.46) -- (149.15,268.46) .. controls (146.38,268.45) and (144.14,266.21) .. (144.14,263.44) -- cycle ;
\draw  [fill={rgb, 255:red, 245; green, 240; blue, 235 }  ,fill opacity=1 ][line width=0.75]  (118.99,257.16) .. controls (118.99,254.39) and (121.24,252.15) .. (124.01,252.15) -- (130.69,252.16) .. controls (133.46,252.16) and (135.71,254.41) .. (135.7,257.18) -- (135.66,297.45) .. controls (135.66,300.22) and (133.41,302.47) .. (130.64,302.46) -- (123.95,302.46) .. controls (121.18,302.45) and (118.94,300.21) .. (118.94,297.44) -- cycle ;
\draw [color={rgb, 255:red, 245; green, 166; blue, 35 }  ,draw opacity=1 ][line width=1.5]  [dash pattern={on 4.5pt off 2.25pt}]  (40.4,143.03) .. controls (49.73,143.03) and (67.87,142.93) .. (75.8,142.82) .. controls (83.73,142.7) and (94.09,109.26) .. (100.99,109.29) .. controls (107.9,109.31) and (143.27,175.7) .. (152.42,175.88) .. controls (161.58,176.06) and (165.93,176.03) .. (180.6,176.03) ;
\draw [color={rgb, 255:red, 245; green, 166; blue, 35 }  ,draw opacity=1 ][line width=1.5]  [dash pattern={on 4.5pt off 2.25pt}]  (40.33,209.6) .. controls (49.66,209.6) and (67.79,209.5) .. (75.73,209.38) .. controls (83.66,209.26) and (94.12,243.28) .. (101.02,243.31) .. controls (107.92,243.33) and (120.06,209.24) .. (127.23,209.38) .. controls (134.4,209.52) and (147.78,242.35) .. (152.35,242.44) .. controls (156.93,242.53) and (165.86,242.6) .. (180.53,242.6) ;
\end{mytikz2}\quad,\quad
\begin{mytikz2}
\draw [line width=0.75]    (50.67,92.46) -- (178.41,92.67) ;
\draw [line width=0.75]    (50.65,126.01) -- (178.39,126.21) ;
\draw [line width=0.75]    (50.69,58.93) -- (178.43,59.14) ;
\draw  [fill={rgb, 255:red, 245; green, 240; blue, 235 }  ,fill opacity=1 ][line width=0.75]  (67.48,55.58) .. controls (67.48,52.81) and (69.73,50.56) .. (72.5,50.57) -- (79.18,50.57) .. controls (81.95,50.58) and (84.2,52.82) .. (84.19,55.59) -- (84.15,95.87) .. controls (84.15,98.64) and (81.9,100.88) .. (79.13,100.88) -- (72.44,100.87) .. controls (69.67,100.87) and (67.43,98.62) .. (67.43,95.85) -- cycle ;
\draw [line width=0.75]    (50.6,193.48) -- (178.34,193.69) ;
\draw [line width=0.75]    (50.58,227.02) -- (178.32,227.22) ;
\draw [line width=0.75]    (50.62,159.54) -- (178.37,159.75) ;
\draw  [fill={rgb, 255:red, 255; green, 234; blue, 195 }  ,fill opacity=1 ][line width=0.75]  (92.66,89.14) .. controls (92.66,86.37) and (94.91,84.13) .. (97.68,84.13) -- (104.36,84.14) .. controls (107.13,84.14) and (109.38,86.39) .. (109.37,89.16) -- (109.33,129.43) .. controls (109.33,132.2) and (107.08,134.44) .. (104.31,134.44) -- (97.62,134.43) .. controls (94.85,134.43) and (92.61,132.18) .. (92.61,129.41) -- cycle ;
\draw  [fill={rgb, 255:red, 255; green, 234; blue, 195 }  ,fill opacity=1 ][line width=0.75]  (67.46,122.67) .. controls (67.46,119.9) and (69.71,117.66) .. (72.48,117.66) -- (79.17,117.67) .. controls (81.94,117.67) and (84.18,119.92) .. (84.18,122.69) -- (84.14,162.96) .. controls (84.13,165.73) and (81.89,167.98) .. (79.12,167.97) -- (72.43,167.97) .. controls (69.66,167.96) and (67.42,165.72) .. (67.42,162.95) -- cycle ;
\draw  [fill={rgb, 255:red, 255; green, 234; blue, 195 }  ,fill opacity=1 ][line width=0.75]  (92.59,155.73) .. controls (92.59,152.96) and (94.84,150.72) .. (97.61,150.72) -- (104.29,150.73) .. controls (107.06,150.73) and (109.31,152.98) .. (109.3,155.75) -- (109.26,196.03) .. controls (109.26,198.8) and (107.01,201.04) .. (104.24,201.04) -- (97.55,201.03) .. controls (94.78,201.03) and (92.54,198.78) .. (92.54,196.01) -- cycle ;
\draw  [fill={rgb, 255:red, 245; green, 240; blue, 235 }  ,fill opacity=1 ][line width=0.75]  (67.39,189.23) .. controls (67.39,186.46) and (69.64,184.22) .. (72.41,184.22) -- (79.1,184.23) .. controls (81.87,184.23) and (84.11,186.48) .. (84.11,189.25) -- (84.07,229.53) .. controls (84.06,232.3) and (81.82,234.54) .. (79.05,234.54) -- (72.36,234.53) .. controls (69.59,234.53) and (67.35,232.28) .. (67.35,229.51) -- cycle ;
\draw [line width=0.75]    (50.66,260.9) -- (178.4,261.11) ;
\draw [line width=0.75]    (50.64,294.45) -- (178.38,294.65) ;
\draw  [fill={rgb, 255:red, 245; green, 240; blue, 235 }  ,fill opacity=1 ][line width=0.75]  (92.68,223.16) .. controls (92.68,220.39) and (94.93,218.15) .. (97.7,218.15) -- (104.39,218.16) .. controls (107.16,218.16) and (109.4,220.41) .. (109.4,223.18) -- (109.36,263.45) .. controls (109.35,266.22) and (107.11,268.47) .. (104.34,268.46) -- (97.65,268.46) .. controls (94.88,268.45) and (92.64,266.21) .. (92.64,263.44) -- cycle ;
\draw  [fill={rgb, 255:red, 245; green, 240; blue, 235 }  ,fill opacity=1 ][line width=0.75]  (67.49,257.16) .. controls (67.49,254.39) and (69.74,252.15) .. (72.51,252.15) -- (79.19,252.16) .. controls (81.96,252.16) and (84.21,254.41) .. (84.2,257.18) -- (84.16,297.45) .. controls (84.16,300.22) and (81.91,302.47) .. (79.14,302.46) -- (72.45,302.46) .. controls (69.68,302.45) and (67.44,300.21) .. (67.44,297.44) -- cycle ;
\draw  [fill={rgb, 255:red, 245; green, 166; blue, 35 }  ,fill opacity=1 ][line width=0.75]  (176.72,185.66) .. controls (176.72,185.66) and (176.72,185.66) .. (176.72,185.66) -- (193.44,185.68) .. controls (193.44,185.68) and (193.44,185.68) .. (193.44,185.68) -- (193.39,235.42) .. controls (193.39,235.42) and (193.39,235.42) .. (193.39,235.42) -- (176.67,235.4) .. controls (176.67,235.4) and (176.67,235.4) .. (176.67,235.4) -- cycle ;
\draw  [fill={rgb, 255:red, 245; green, 240; blue, 235 }  ,fill opacity=1 ][line width=0.75]  (118.98,55.58) .. controls (118.98,52.81) and (121.23,50.56) .. (124,50.57) -- (130.68,50.57) .. controls (133.45,50.58) and (135.7,52.82) .. (135.69,55.59) -- (135.65,95.87) .. controls (135.65,98.64) and (133.4,100.88) .. (130.63,100.88) -- (123.94,100.87) .. controls (121.17,100.87) and (118.93,98.62) .. (118.93,95.85) -- cycle ;
\draw  [fill={rgb, 255:red, 245; green, 240; blue, 235 }  ,fill opacity=1 ][line width=0.75]  (144.16,89.14) .. controls (144.16,86.37) and (146.41,84.13) .. (149.18,84.13) -- (155.86,84.14) .. controls (158.63,84.14) and (160.88,86.39) .. (160.87,89.16) -- (160.83,129.43) .. controls (160.83,132.2) and (158.58,134.44) .. (155.81,134.44) -- (149.12,134.43) .. controls (146.35,134.43) and (144.11,132.18) .. (144.11,129.41) -- cycle ;
\draw  [fill={rgb, 255:red, 255; green, 234; blue, 195 }  ,fill opacity=1 ][line width=0.75]  (118.96,123.07) .. controls (118.96,120.3) and (121.21,118.06) .. (123.98,118.06) -- (130.67,118.07) .. controls (133.44,118.07) and (135.68,120.32) .. (135.68,123.09) -- (135.64,163.36) .. controls (135.63,166.13) and (133.39,168.38) .. (130.62,168.37) -- (123.93,168.37) .. controls (121.16,168.36) and (118.92,166.12) .. (118.92,163.35) -- cycle ;
\draw  [fill={rgb, 255:red, 255; green, 234; blue, 195 }  ,fill opacity=1 ][line width=0.75]  (144.09,156.13) .. controls (144.09,153.36) and (146.34,151.12) .. (149.11,151.12) -- (155.79,151.13) .. controls (158.56,151.13) and (160.81,153.38) .. (160.8,156.15) -- (160.76,196.43) .. controls (160.76,199.2) and (158.51,201.44) .. (155.74,201.44) -- (149.05,201.43) .. controls (146.28,201.43) and (144.04,199.18) .. (144.04,196.41) -- cycle ;
\draw  [fill={rgb, 255:red, 255; green, 234; blue, 195 }  ,fill opacity=1 ][line width=0.75]  (118.89,189.23) .. controls (118.89,186.46) and (121.14,184.22) .. (123.91,184.22) -- (130.6,184.23) .. controls (133.37,184.23) and (135.61,186.48) .. (135.61,189.25) -- (135.57,229.53) .. controls (135.56,232.3) and (133.32,234.54) .. (130.55,234.54) -- (123.86,234.53) .. controls (121.09,234.53) and (118.85,232.28) .. (118.85,229.51) -- cycle ;
\draw  [fill={rgb, 255:red, 255; green, 234; blue, 195 }  ,fill opacity=1 ][line width=0.75]  (144.18,223.16) .. controls (144.18,220.39) and (146.43,218.15) .. (149.2,218.15) -- (155.89,218.16) .. controls (158.66,218.16) and (160.9,220.41) .. (160.9,223.18) -- (160.86,263.45) .. controls (160.85,266.22) and (158.61,268.47) .. (155.84,268.46) -- (149.15,268.46) .. controls (146.38,268.45) and (144.14,266.21) .. (144.14,263.44) -- cycle ;
\draw  [fill={rgb, 255:red, 245; green, 240; blue, 235 }  ,fill opacity=1 ][line width=0.75]  (118.99,257.16) .. controls (118.99,254.39) and (121.24,252.15) .. (124.01,252.15) -- (130.69,252.16) .. controls (133.46,252.16) and (135.71,254.41) .. (135.7,257.18) -- (135.66,297.45) .. controls (135.66,300.22) and (133.41,302.47) .. (130.64,302.46) -- (123.95,302.46) .. controls (121.18,302.45) and (118.94,300.21) .. (118.94,297.44) -- cycle ;
\draw [color={rgb, 255:red, 245; green, 166; blue, 35 }  ,draw opacity=1 ][line width=1.5]  [dash pattern={on 4.5pt off 2.25pt}]  (40.4,143.03) .. controls (49.73,143.03) and (67.87,142.93) .. (75.8,142.82) .. controls (83.73,142.7) and (94.09,109.26) .. (100.99,109.29) .. controls (107.9,109.31) and (143.27,175.7) .. (152.42,175.88) .. controls (161.58,176.06) and (165.93,176.03) .. (180.6,176.03) ;
\draw [color={rgb, 255:red, 245; green, 166; blue, 35 }  ,draw opacity=1 ][line width=1.5]  [dash pattern={on 4.5pt off 2.25pt}]  (40.4,150.55) .. controls (49.73,150.55) and (68.23,150.5) .. (76.17,150.38) .. controls (84.1,150.27) and (145.62,243.28) .. (152.52,243.31) .. controls (159.42,243.33) and (166.03,243.46) .. (180.7,243.46) ;
\end{mytikz2}\quad,\quad
\begin{mytikz2}
\draw [line width=0.75]    (50.67,92.46) -- (178.41,92.67) ;
\draw [line width=0.75]    (50.65,126.01) -- (178.39,126.21) ;
\draw [line width=0.75]    (50.69,58.93) -- (178.43,59.14) ;
\draw  [fill={rgb, 255:red, 255; green, 234; blue, 195 }  ,fill opacity=1 ][line width=0.75]  (67.48,55.58) .. controls (67.48,52.81) and (69.73,50.56) .. (72.5,50.57) -- (79.18,50.57) .. controls (81.95,50.58) and (84.2,52.82) .. (84.19,55.59) -- (84.15,95.87) .. controls (84.15,98.64) and (81.9,100.88) .. (79.13,100.88) -- (72.44,100.87) .. controls (69.67,100.87) and (67.43,98.62) .. (67.43,95.85) -- cycle ;
\draw [line width=0.75]    (50.6,193.48) -- (178.34,193.69) ;
\draw [line width=0.75]    (50.58,227.02) -- (178.32,227.22) ;
\draw [line width=0.75]    (50.62,159.54) -- (178.37,159.75) ;
\draw  [fill={rgb, 255:red, 255; green, 234; blue, 195 }  ,fill opacity=1 ][line width=0.75]  (92.66,89.14) .. controls (92.66,86.37) and (94.91,84.13) .. (97.68,84.13) -- (104.36,84.14) .. controls (107.13,84.14) and (109.38,86.39) .. (109.37,89.16) -- (109.33,129.43) .. controls (109.33,132.2) and (107.08,134.44) .. (104.31,134.44) -- (97.62,134.43) .. controls (94.85,134.43) and (92.61,132.18) .. (92.61,129.41) -- cycle ;
\draw  [fill={rgb, 255:red, 245; green, 240; blue, 235 }  ,fill opacity=1 ][line width=0.75]  (67.46,122.67) .. controls (67.46,119.9) and (69.71,117.66) .. (72.48,117.66) -- (79.17,117.67) .. controls (81.94,117.67) and (84.18,119.92) .. (84.18,122.69) -- (84.14,162.96) .. controls (84.13,165.73) and (81.89,167.98) .. (79.12,167.97) -- (72.43,167.97) .. controls (69.66,167.96) and (67.42,165.72) .. (67.42,162.95) -- cycle ;
\draw  [fill={rgb, 255:red, 245; green, 240; blue, 235 }  ,fill opacity=1 ][line width=0.75]  (92.59,155.73) .. controls (92.59,152.96) and (94.84,150.72) .. (97.61,150.72) -- (104.29,150.73) .. controls (107.06,150.73) and (109.31,152.98) .. (109.3,155.75) -- (109.26,196.03) .. controls (109.26,198.8) and (107.01,201.04) .. (104.24,201.04) -- (97.55,201.03) .. controls (94.78,201.03) and (92.54,198.78) .. (92.54,196.01) -- cycle ;
\draw  [fill={rgb, 255:red, 245; green, 240; blue, 235 }  ,fill opacity=1 ][line width=0.75]  (67.39,189.23) .. controls (67.39,186.46) and (69.64,184.22) .. (72.41,184.22) -- (79.1,184.23) .. controls (81.87,184.23) and (84.11,186.48) .. (84.11,189.25) -- (84.07,229.53) .. controls (84.06,232.3) and (81.82,234.54) .. (79.05,234.54) -- (72.36,234.53) .. controls (69.59,234.53) and (67.35,232.28) .. (67.35,229.51) -- cycle ;
\draw [line width=0.75]    (50.66,260.9) -- (178.4,261.11) ;
\draw [line width=0.75]    (50.64,294.45) -- (178.38,294.65) ;
\draw  [fill={rgb, 255:red, 245; green, 240; blue, 235 }  ,fill opacity=1 ][line width=0.75]  (92.68,223.16) .. controls (92.68,220.39) and (94.93,218.15) .. (97.7,218.15) -- (104.39,218.16) .. controls (107.16,218.16) and (109.4,220.41) .. (109.4,223.18) -- (109.36,263.45) .. controls (109.35,266.22) and (107.11,268.47) .. (104.34,268.46) -- (97.65,268.46) .. controls (94.88,268.45) and (92.64,266.21) .. (92.64,263.44) -- cycle ;
\draw  [fill={rgb, 255:red, 255; green, 234; blue, 195 }  ,fill opacity=1 ][line width=0.75]  (67.49,257.16) .. controls (67.49,254.39) and (69.74,252.15) .. (72.51,252.15) -- (79.19,252.16) .. controls (81.96,252.16) and (84.21,254.41) .. (84.2,257.18) -- (84.16,297.45) .. controls (84.16,300.22) and (81.91,302.47) .. (79.14,302.46) -- (72.45,302.46) .. controls (69.68,302.45) and (67.44,300.21) .. (67.44,297.44) -- cycle ;
\draw  [fill={rgb, 255:red, 245; green, 166; blue, 35 }  ,fill opacity=1 ][line width=0.75]  (176.72,185.66) .. controls (176.72,185.66) and (176.72,185.66) .. (176.72,185.66) -- (193.44,185.68) .. controls (193.44,185.68) and (193.44,185.68) .. (193.44,185.68) -- (193.39,235.42) .. controls (193.39,235.42) and (193.39,235.42) .. (193.39,235.42) -- (176.67,235.4) .. controls (176.67,235.4) and (176.67,235.4) .. (176.67,235.4) -- cycle ;
\draw  [fill={rgb, 255:red, 245; green, 240; blue, 235 }  ,fill opacity=1 ][line width=0.75]  (118.98,55.58) .. controls (118.98,52.81) and (121.23,50.56) .. (124,50.57) -- (130.68,50.57) .. controls (133.45,50.58) and (135.7,52.82) .. (135.69,55.59) -- (135.65,95.87) .. controls (135.65,98.64) and (133.4,100.88) .. (130.63,100.88) -- (123.94,100.87) .. controls (121.17,100.87) and (118.93,98.62) .. (118.93,95.85) -- cycle ;
\draw  [fill={rgb, 255:red, 245; green, 240; blue, 235 }  ,fill opacity=1 ][line width=0.75]  (144.16,89.14) .. controls (144.16,86.37) and (146.41,84.13) .. (149.18,84.13) -- (155.86,84.14) .. controls (158.63,84.14) and (160.88,86.39) .. (160.87,89.16) -- (160.83,129.43) .. controls (160.83,132.2) and (158.58,134.44) .. (155.81,134.44) -- (149.12,134.43) .. controls (146.35,134.43) and (144.11,132.18) .. (144.11,129.41) -- cycle ;
\draw  [fill={rgb, 255:red, 255; green, 234; blue, 195 }  ,fill opacity=1 ][line width=0.75]  (118.96,123.07) .. controls (118.96,120.3) and (121.21,118.06) .. (123.98,118.06) -- (130.67,118.07) .. controls (133.44,118.07) and (135.68,120.32) .. (135.68,123.09) -- (135.64,163.36) .. controls (135.63,166.13) and (133.39,168.38) .. (130.62,168.37) -- (123.93,168.37) .. controls (121.16,168.36) and (118.92,166.12) .. (118.92,163.35) -- cycle ;
\draw  [fill={rgb, 255:red, 255; green, 234; blue, 195 }  ,fill opacity=1 ][line width=0.75]  (144.09,156.13) .. controls (144.09,153.36) and (146.34,151.12) .. (149.11,151.12) -- (155.79,151.13) .. controls (158.56,151.13) and (160.81,153.38) .. (160.8,156.15) -- (160.76,196.43) .. controls (160.76,199.2) and (158.51,201.44) .. (155.74,201.44) -- (149.05,201.43) .. controls (146.28,201.43) and (144.04,199.18) .. (144.04,196.41) -- cycle ;
\draw  [fill={rgb, 255:red, 245; green, 240; blue, 235 }  ,fill opacity=1 ][line width=0.75]  (118.89,189.23) .. controls (118.89,186.46) and (121.14,184.22) .. (123.91,184.22) -- (130.6,184.23) .. controls (133.37,184.23) and (135.61,186.48) .. (135.61,189.25) -- (135.57,229.53) .. controls (135.56,232.3) and (133.32,234.54) .. (130.55,234.54) -- (123.86,234.53) .. controls (121.09,234.53) and (118.85,232.28) .. (118.85,229.51) -- cycle ;
\draw  [fill={rgb, 255:red, 255; green, 234; blue, 195 }  ,fill opacity=1 ][line width=0.75]  (144.18,223.16) .. controls (144.18,220.39) and (146.43,218.15) .. (149.2,218.15) -- (155.89,218.16) .. controls (158.66,218.16) and (160.9,220.41) .. (160.9,223.18) -- (160.86,263.45) .. controls (160.85,266.22) and (158.61,268.47) .. (155.84,268.46) -- (149.15,268.46) .. controls (146.38,268.45) and (144.14,266.21) .. (144.14,263.44) -- cycle ;
\draw  [fill={rgb, 255:red, 255; green, 234; blue, 195 }  ,fill opacity=1 ][line width=0.75]  (118.99,257.16) .. controls (118.99,254.39) and (121.24,252.15) .. (124.01,252.15) -- (130.69,252.16) .. controls (133.46,252.16) and (135.71,254.41) .. (135.7,257.18) -- (135.66,297.45) .. controls (135.66,300.22) and (133.41,302.47) .. (130.64,302.46) -- (123.95,302.46) .. controls (121.18,302.45) and (118.94,300.21) .. (118.94,297.44) -- cycle ;
\draw [color={rgb, 255:red, 245; green, 166; blue, 35 }  ,draw opacity=1 ][line width=1.5]  [dash pattern={on 4.5pt off 2.25pt}]  (44.24,75.71) .. controls (53.57,75.71) and (67.88,75.84) .. (75.81,75.72) .. controls (83.75,75.61) and (145.52,175.86) .. (152.42,175.88) .. controls (159.33,175.9) and (165.93,176.03) .. (180.6,176.03) ;
\draw [color={rgb, 255:red, 245; green, 166; blue, 35 }  ,draw opacity=1 ][line width=1.5]  [dash pattern={on 4.5pt off 2.25pt}]  (48.24,284.51) .. controls (57.57,284.51) and (119.51,284.62) .. (127.44,284.51) .. controls (135.37,284.39) and (145.62,243.28) .. (152.52,243.31) .. controls (159.42,243.33) and (166.03,243.46) .. (180.7,243.46) ;
\end{mytikz2}\quad, \quad \cdots\quad.
\end{equation}
Besides length, we also need to define a ``width'' quantity to describe the non-locality of the path set. We define the ``head width'' of a path set $P$ by the sum of the widths of all the head blocks in $P$. Here the width of a block $B_k$ is defined by
\begin{equation}\label{eq:def_forward_width}
    w(B_k) = \log_2\left[\frac{4^{|s_f(B_k)|}-1}{2^{|s_f(B_k)|}-1}\right],
\end{equation}
where $s_f(B_k)$ denotes the so-called forward residual support of $B_k$, which is defined by $s_f(B_k) = s(B_k) - \bigcup_{k'<k} s(B_{k'})$. The head blocks in $P$ refer to the blocks that connect to $\rho_0$ directly, denoted by $\opr{Head}(P)$. With the concepts of the path set, we are prepared to prove the theorem in the main text.

\begin{theorem}\label{theorem:var_lower_bound_path_set}
Suppose that $H$ is a Hamiltonian with the Pauli decomposition $H=\sum_j\lambda_j h_j$. $\mathbf{U}(\bm{\theta})$ is a PQC in arbitrary shapes composed of blocks forming independent local $2$-designs with $s(\mathbf{U})\supseteq s(H)$. The cost function is the energy expectation $C(\bm{\theta})=\braoprket{\bm{0}}{\mathbf{U}^\dagger(\bm{\theta}) H \mathbf{U}(\bm{\theta})}{\bm{0}}$. Then, the variance of the cost value is lower bounded by
\begin{equation}\label{eq:var_bound_path_length_width}
    \var_\mathbb{U} [C] \geq \sum_j \sum_{P_j}  \lambda_j^2 \cdot 2^{- 2 l(P_j) - w(P_j)},
\end{equation}
where $j$ runs over the indices of $h_j$ and $P_j$ runs over all the path sets covering $h_j$. $l(P_j)$ and $w(P_j)$ denote the length and the head width of the path set $P_j$, respectively.
\end{theorem}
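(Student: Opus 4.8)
The plan is to work entirely in the ``stochastic Heisenberg picture'' already established in Eqs.~\eqref{eq:var_expression_by_twirling}--\eqref{eq:var_no_cross_terms} and to obtain the lower bound by retaining only those back-evolution contributions that propagate along prescribed path sets. First I would invoke Eq.~\eqref{eq:var_no_cross_terms}, which decouples the variance as $\sum_j \lambda_j^2 \var_\mathbb{U}[\avg{h_j}]$; since every coefficient $\lambda_j^2$ is non-negative, it suffices to prove $\var_\mathbb{U}[\avg{h_j}] \geq \sum_{P_j} 2^{-2l(P_j)-w(P_j)}$ for each $j$ separately. Because $H$ is traceless and $s(\mathbf{U})\supseteq s(H)$, the first-order twirling annihilates $h_j$, so $\E_\mathbb{U}[\avg{h_j}]=0$ and $\var_\mathbb{U}[\avg{h_j}]$ equals the single trace $\tr[\rho_0^{\otimes 2}\,\mathcal{T}^{(2)}_{s(B_1)}\circ\cdots\circ\mathcal{T}^{(2)}_{s(B_{M'})}(h_j\otimes h_j)]$.

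The second step is to unfold this trace using Lemma~\ref{lemma:pauli_string_copies_one_step_evolution}. Starting from the diagonal input $h_j\otimes h_j$, each twirling channel either annihilates the operator or maps the substring on the block's support to a uniformly weighted sum of identical doubled Pauli strings $\sigma^{\otimes 2}$, each carrying weight $1/(4^{|s|}-1)$. Since the two replicas begin identical and the lemma preserves this identical-copy structure, the back-evolution expands as a sum over ``histories,'' each terminating in some $P^{\otimes 2}$ with $P$ a single Pauli string. The crucial observation I would emphasize is that each terminal contraction factorizes as $\tr[\rho_0^{\otimes 2}(P^{\otimes 2})]=\tr[\rho_0 P]^2\geq 0$, so $\var_\mathbb{U}[\avg{h_j}]$ is a sum of manifestly non-negative terms. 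This non-negativity is precisely what licenses a lower bound obtained by discarding histories.

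Third, I would associate to each path set $P_j$ covering $h_j$ a disjoint family of surviving histories: those whose nontrivial doubled-Pauli content funnels exactly through the connecting supports $s_c$ prescribed by the edges of $P_j$, being identity on the remainder of each block's support, and whose head blocks reduce on their forward residual supports to $\{I,Z\}$-type strings that survive the contraction with $\rho_0$. Collecting the twirling weight $1/(4^{|s(B_{k'})|}-1)$ of each block together with the number $4^{|s_c(B_k,B_{k'})|}-1$ of admissible doubled Paulis funneling onto the connecting support reproduces the edge factor $2^{-2l(B_k,B_{k'})}$ of definition~\eqref{eq:def_edge_length}; the residual contraction of each head block against $\rho_0$ on its forward residual support $s_f$ contributes the fraction $(2^{|s_f|}-1)/(4^{|s_f|}-1)=2^{-w(B_k)}$ of definition~\eqref{eq:def_forward_width}, counting the nontrivial $\{I,Z\}$-type Paulis among all nontrivial ones. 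Multiplying over all edges and head blocks of $P_j$ then gives exactly $2^{-2l(P_j)-w(P_j)}$, and summing the discarded-but-non-negative remainder over all $P_j$ yields the claimed inequality.

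The main obstacle I anticipate is the bookkeeping in the third step: one must verify that distinct path sets index \emph{genuinely disjoint} families of histories, so that their contributions add rather than overcount. This is exactly where the definition of a path set—every path traversing the circuit, right ends covering the observable, and paths allowed to merge but not split along the backward direction—becomes essential, since it enforces a forest-like flow from the observable back to $\rho_0$ with no spurious re-splitting. Establishing this correspondence, together with confirming that the combinatorial counts of admissible $\sigma$ match the edge-length and head-width definitions exactly, is the technical heart of the argument; the remaining manipulations are routine applications of Lemma~\ref{lemma:pauli_string_copies_one_step_evolution} and the Weingarten identity~\eqref{eq:twirling_2st_order}.
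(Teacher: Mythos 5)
Your proposal is correct and is essentially the paper's own argument: the same per-subterm reduction via \eqref{eq:var_no_cross_terms}, the same expansion into non-negative doubled-Pauli histories via Lemma~\ref{lemma:pauli_string_copies_one_step_evolution}, and the same counting that yields the accumulated edge factor $4^{-l(P_j)}$ and head factor $2^{-w(P_j)}$ for the retained histories. The disjointness bookkeeping you flag as the technical heart is precisely what the paper formalizes with the selection channels of Eqs.~\eqref{eq:decompose_identity}--\eqref{eq:selection_channel}: because connecting supports toward different predecessors are disjoint, a surviving history's non-trivial content can pass the selection toward only one predecessor at each block, so each history determines a unique path set and the path-set contributions add without overcounting.
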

\begin{proof}
Based on the analysis above, the variance of the cost value can be expressed as
\begin{equation}\label{eq:var_hj}
    \var_\mathbb{U}\left[C\right] = \sum_j \lambda_j^2 \tr\left[\rho_0^{\otimes 2} \mathcal{T}^{(2)}_{s(B_1)}\circ\mathcal{T}^{(2)}_{s(B_2)} \circ\cdots\circ \mathcal{T}^{(2)}_{s(B_{M'})}(h_j^{\otimes 2}) \right].
\end{equation}
We first give a big picture of the things that occur in each summed term in the expression above. A twirling channel just maps a doubled Pauli string into a sum of doubled Pauli strings, with the weights positive and summed to one. The subsequent twirling channels just map them to other sums of doubled Pauli strings. In the end, when encountering the initial state $\ket{\bm{0}}$, in the final summation, those containing only $Z$ and $I$ operators contribute to the variance with a base factor $1$. So the variance equals the sum of the weights corresponding to all surviving doubled Pauli strings that contain only $Z$ and $I$ in the output of the final channel.

It is hard to track the exact backward evolution of all possible Pauli strings. Instead, noticing that the weights are always non-negative, it is easy to track a subset of these Pauli strings to give a lower bound. We choose to track those local Pauli strings, i.e., those non-trivial only on a few qubits. To be specific, the variance can be relaxed by manually inserting an identity channel $\mathcal{I}_{s(B_{k})}^{(2)}$ just after each twirling channel $\mathcal{T}_{s(B_{k})}^{(2)}$, decomposing the identity channel into several selection channels
\begin{equation}\label{eq:decompose_identity}
    \mathcal{I}_{s(B_{k})}^{(2)} = \sum_{k'} \mathcal{S}_{s(B_{k'}), s(B_{k}) }^{(2)} + \cdots,
\end{equation}
and just retaining a part of the selection channels. Here $k'$ runs over the blocks connected to $B_{k}$ with $k'<k$. $\mathcal{S}_{s(B_{k'}), s(B_{k})}^{(2)}$ is the ``local'' selection channel defined by
\begin{equation}\label{eq:selection_channel}
    \mathcal{S}_{s(B_{k})}^{(2)} (\sigma^{\otimes 2}) = \begin{cases}
        \sigma^{\otimes 2} & ~\text{if}~ s(\sigma\vert_{s(B_{k})})\subseteq s_c(B_{k'}, B_{k})\cup \bar{s}(B_{k}) \\
        0 & ~\text{else}~
    \end{cases},
\end{equation}
where $\sigma$ is an $N$-qubit Pauli string. Namely, if $\sigma$ is trivial on $s(B_k)$, then the selection channel does not change anything. If $\sigma$ is non-trivial on $s(B_k)$ and $\sigma\vert_{s(B_{k})}$ is only non-trivial on $s_c(B_{k'},B_{k})$, the selection channel still do nothing. But if $\sigma$ is non-trivial on $s(B_k)$ and $\sigma\vert_{s(B_{k})}$ is non-trivial outside $s_c(B_{k'},B_{k})$, the selection channel will project out the Pauli string. In other words, we discard the Pauli strings that are non-trivial simultaneously on the supports of multiple predecessors, represented by the residual terms ``$\cdots$'' in Eq.\,\eqref{eq:decompose_identity}. After expanding the product of the summations of the local selection channels after each twirling channel, one obtains a summation of contributions from all the legal path sets, i.e., the variance regarding $h_j$ can be lower bounded by
\begin{equation}\label{eq:var_bound_selection_channel}
    \var_\mathbb{U}\left[\avg{h_j}\right] \geq \sum_{P_j} \tr\left[\rho_0^{\otimes 2} \prod_{(B_{k'},B_{k})\in \opr{Edge}(P_j)}^{\rightarrow} \left(\mathcal{S}_{s(B_{k'}),s(B_{k})}^{(2)} \circ \mathcal{T}_{s(B_{k})}^{(2)} \right)(h_j^{\otimes 2}) \right].
\end{equation}
To estimate each of the contributions, consider that for any two adjacent blocks $(B_{k'}, B_{k})$ in $P_j$, only a certain proportion of strings in the output of $\mathcal{T}_{s(B_{k})}^{(2)}$ satisfies the selection rule. This proportion is just the number of elements in $\mathcal{P}'\vert_{s_c(B_{k}, B_{k'})}$ over that in $\mathcal{P}'\vert_{s(B_{k'})}$. The proportion factors are accumulated recursively until meeting the initial state, which contributes a total factor of
\begin{equation}
    \prod_{(B_{k'},B_{k})\in \opr{Edge}(P_j)}\frac{4^{|s_c(B_{k'},B_{k})|} - 1}{4^{|s(B_{k})|} - 1} = 4^{-l(P_j)}.
\end{equation}
Finally, only those that contain only $Z$ and $I$ will contribute by a base number of $1$, whose proportion is
\begin{equation}
    \prod_{B_{k}\in \opr{Head}(P_j)}\frac{2^{|s_f(B_{k})|}-1}{4^{|s_f(B_{k})|}-1} = 2^{-w(P_j)},
\end{equation}
Thus, the final contribution of path set $P_j$ is just $4^{-l(P_j)} \times 2^{-w(P_j)}$.
\end{proof}

One can find that this proof is simpler than that for the variance of the cost derivative~\cite{Zhang2024} where we need special treatments for the differential block and the path set has to go through the differential block. Here the path set gets much more freedom to go over the circuit.

According to the proof, we can find that for locally scrambled circuits, the information of the Hamiltonian received by $\var_\mathbb{U}[C]$ is partially erased, leaving only the information of the support of each subterm. That is to say, a term like $X_1Y_2Z_3$ should have the same variance as the term $Z_1Z_2Z_3$. This can be partially resolved by introducing specialized block templates such as $R_y$-$\opr{CZ}$ as mentioned above.

The lower bound in Theorem~\ref{theorem:var_lower_bound_path_set} is quite general but abstract for practical use. Another looser but more compact lower bound can be derived from it, which involves the concept of ``local depth''~\cite{Zhang2024}. The local depth of a PQC at qubit $q_i$ is defined by the number of gates or blocks acting on $q_i$. The maximum value of the local depths of all qubits determines the (maximum) local depth of a PQC, denoted by $\chi$. Note that $\chi$ reduces to the conventional circuit depth $D$ for common repeated-layer circuits such as the brickwall circuits, but can be different from $D$ in certain cases. Thus, we can derive Theorem~\ref{theorem:var_lower_bound_path_set} mentioned in the main text.

\renewcommand\theproposition{1}
\setcounter{proposition}{\arabic{proposition}-1}
\begin{theorem}
For a PQC composed of blocks forming independent local $2$-designs and an $r$-local Hamiltonian, the landscape fluctuation is lower bounded by
\begin{equation}\label{eq:sigma_lower_bound}
    \sigma\geq 2^{-r\chi\beta} \frac{\|\bm{\lambda}\|_2}{\|\bm{\lambda}\|_1},
\end{equation}
where $\chi$ is the maximum local depth and $\beta$ is the maximum block size of the circuit.
\end{theorem}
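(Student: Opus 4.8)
The plan is to deduce this compact bound directly from the general path-set lower bound of Theorem~\ref{theorem:var_lower_bound_path_set}. Since the blocks form independent local $2$-designs, the parameter ensemble $\Theta$ reproduces the circuit ensemble $\mathbb{U}$, so $\var_\Theta[C]=\var_\mathbb{U}[C]$ and Eq.~\eqref{eq:var_bound_path_length_width} applies. Because every summand there is non-negative, I would first discard all but one conveniently chosen path set $P_j$ for each subterm $h_j$, leaving
\[
    \var_\Theta[C] \;\geq\; \sum_j \lambda_j^2\, 2^{-2l(P_j) - w(P_j)}.
\]
It then suffices to exhibit, for each $j$, a single legal path set whose combined exponent obeys $2l(P_j)+w(P_j)\leq 2r\chi\beta$: this yields $\var_\Theta[C]\geq 2^{-2r\chi\beta}\|\bm{\lambda}\|_2^2$, and taking the square root and dividing by $\|\bm{\lambda}\|_1$ gives Eq.~\eqref{eq:sigma_lower_bound}.

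For the construction, I would use ``worldline'' paths. For each of the $|s(h_j)|\leq r$ qubits in the support of $h_j$, follow a path straight back through the blocks acting on that qubit toward $\rho_0$, routing the worldlines so that once two of them meet they stay merged; this guarantees a legal path set (merging permitted, splitting forbidden). By definition of the maximum local depth, each worldline traverses at most $\chi$ blocks, hence contributes at most $\chi$ edges including the edge to $\rho_0$. The core estimate is that each edge and head block contributes at most $\beta$ per unit of the exponent. Since every block support has size at most $\beta$ and any connecting support $s_c$ is non-empty, each inter-block edge satisfies $\tfrac{4^{|s_c|}-1}{4^{|s(B)|}-1}\geq \tfrac{3}{4^\beta-1}>4^{-\beta}$, contributing at most $2\beta$ to $2l+w$. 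For a head block I would combine its $\rho_0$-edge factor with its head-width factor, using $s_c(\rho_0,B_{k_1})=s_f(B_{k_1})$, so that they telescope to $\tfrac{2^{|s_f|}-1}{4^{|s(B)|}-1}\geq \tfrac{1}{4^\beta-1}>4^{-\beta}$, again at most $2\beta$. Summing over the at most $\chi$ blocks on each worldline and the at most $r$ worldlines gives $2l(P_j)+w(P_j)\leq 2r\chi\beta$, with merging only decreasing the count.

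The main obstacle is the combinatorial bookkeeping rather than any single hard inequality. One must verify that the chosen worldlines can always be merged into a legal path set without any backward split, that shared edges and shared head blocks are counted only once so that merging genuinely does not increase $2l+w$, and---most delicately---that the head block's $\rho_0$-edge and head-width contributions recombine into a single clean factor bounded by $2\beta$ in the exponent, so the total is exactly $2r\chi\beta$ rather than $2r\chi\beta+O(r\beta)$. Treating the $\rho_0$-edge length consistently through $s_c(\rho_0,B_{k_1})=s_f(B_{k_1})$ is the step where the accounting is easiest to get wrong, and it is where I would spend the most care.
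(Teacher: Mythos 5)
Your proposal is correct and follows essentially the same route as the paper's own proof: reduce to the general path-set bound, keep a single ``straight-wire'' path set per subterm $h_j$ (with merging allowed and only helping), bound each of the at most $\chi-1$ inter-block edges per worldline by $\beta$ in length, and combine the $\rho_0$-edge with the head width via $s_c(\rho_0,B_{k_1})=s_f(B_{k_1})$ into a single factor contributing at most $2\beta$ to the exponent. The telescoping step you flag as delicate is exactly how the paper handles it, namely by bounding $w(P_j)+2l_0\leq r\log_2\bigl(\tfrac{4^\beta-1}{2^1-1}\bigr)\leq 2r\beta$, so your accounting reproduces the paper's $\var_{\mathbb{U}}[C]\geq 4^{-r\chi\beta}\|\bm{\lambda}\|_2^2$ precisely.
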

\renewcommand{\theproposition}{S\arabic{proposition}}
\begin{proof}
According to Theorem~\ref{theorem:var_lower_bound_path_set}, for each subterm $h_j$, one can choose a single path set that goes along the straight wires of $h_j$ and discard the contributions from all other path sets. An instance is depicted as follows 
\begin{equation}
\begin{mytikz2}
\draw [line width=0.75]    (50.67,92.46) -- (178.41,92.67) ;
\draw [line width=0.75]    (50.65,126.01) -- (178.39,126.21) ;
\draw [line width=0.75]    (50.69,58.93) -- (178.43,59.14) ;
\draw  [fill={rgb, 255:red, 245; green, 240; blue, 235 }  ,fill opacity=1 ][line width=0.75]  (67.48,55.58) .. controls (67.48,52.81) and (69.73,50.56) .. (72.5,50.57) -- (79.18,50.57) .. controls (81.95,50.58) and (84.2,52.82) .. (84.19,55.59) -- (84.15,95.87) .. controls (84.15,98.64) and (81.9,100.88) .. (79.13,100.88) -- (72.44,100.87) .. controls (69.67,100.87) and (67.43,98.62) .. (67.43,95.85) -- cycle ;
\draw [line width=0.75]    (50.6,193.48) -- (178.34,193.69) ;
\draw [line width=0.75]    (50.58,227.02) -- (178.32,227.22) ;
\draw [line width=0.75]    (50.62,159.54) -- (178.37,159.75) ;
\draw  [fill={rgb, 255:red, 245; green, 240; blue, 235 }  ,fill opacity=1 ][line width=0.75]  (92.66,89.14) .. controls (92.66,86.37) and (94.91,84.13) .. (97.68,84.13) -- (104.36,84.14) .. controls (107.13,84.14) and (109.38,86.39) .. (109.37,89.16) -- (109.33,129.43) .. controls (109.33,132.2) and (107.08,134.44) .. (104.31,134.44) -- (97.62,134.43) .. controls (94.85,134.43) and (92.61,132.18) .. (92.61,129.41) -- cycle ;
\draw  [fill={rgb, 255:red, 245; green, 240; blue, 235 }  ,fill opacity=1 ][line width=0.75]  (67.46,122.67) .. controls (67.46,119.9) and (69.71,117.66) .. (72.48,117.66) -- (79.17,117.67) .. controls (81.94,117.67) and (84.18,119.92) .. (84.18,122.69) -- (84.14,162.96) .. controls (84.13,165.73) and (81.89,167.98) .. (79.12,167.97) -- (72.43,167.97) .. controls (69.66,167.96) and (67.42,165.72) .. (67.42,162.95) -- cycle ;
\draw  [fill={rgb, 255:red, 255; green, 234; blue, 195 }  ,fill opacity=1 ][line width=0.75]  (92.59,155.73) .. controls (92.59,152.96) and (94.84,150.72) .. (97.61,150.72) -- (104.29,150.73) .. controls (107.06,150.73) and (109.31,152.98) .. (109.3,155.75) -- (109.26,196.03) .. controls (109.26,198.8) and (107.01,201.04) .. (104.24,201.04) -- (97.55,201.03) .. controls (94.78,201.03) and (92.54,198.78) .. (92.54,196.01) -- cycle ;
\draw  [fill={rgb, 255:red, 255; green, 234; blue, 195 }  ,fill opacity=1 ][line width=0.75]  (67.39,189.23) .. controls (67.39,186.46) and (69.64,184.22) .. (72.41,184.22) -- (79.1,184.23) .. controls (81.87,184.23) and (84.11,186.48) .. (84.11,189.25) -- (84.07,229.53) .. controls (84.06,232.3) and (81.82,234.54) .. (79.05,234.54) -- (72.36,234.53) .. controls (69.59,234.53) and (67.35,232.28) .. (67.35,229.51) -- cycle ;
\draw [line width=0.75]    (50.66,260.9) -- (178.4,261.11) ;
\draw [line width=0.75]    (50.64,294.45) -- (178.38,294.65) ;
\draw  [fill={rgb, 255:red, 255; green, 234; blue, 195 }  ,fill opacity=1 ][line width=0.75]  (92.68,223.16) .. controls (92.68,220.39) and (94.93,218.15) .. (97.7,218.15) -- (104.39,218.16) .. controls (107.16,218.16) and (109.4,220.41) .. (109.4,223.18) -- (109.36,263.45) .. controls (109.35,266.22) and (107.11,268.47) .. (104.34,268.46) -- (97.65,268.46) .. controls (94.88,268.45) and (92.64,266.21) .. (92.64,263.44) -- cycle ;
\draw  [fill={rgb, 255:red, 245; green, 240; blue, 235 }  ,fill opacity=1 ][line width=0.75]  (67.49,257.16) .. controls (67.49,254.39) and (69.74,252.15) .. (72.51,252.15) -- (79.19,252.16) .. controls (81.96,252.16) and (84.21,254.41) .. (84.2,257.18) -- (84.16,297.45) .. controls (84.16,300.22) and (81.91,302.47) .. (79.14,302.46) -- (72.45,302.46) .. controls (69.68,302.45) and (67.44,300.21) .. (67.44,297.44) -- cycle ;
\draw  [fill={rgb, 255:red, 245; green, 166; blue, 35 }  ,fill opacity=1 ][line width=0.75]  (176.72,185.66) .. controls (176.72,185.66) and (176.72,185.66) .. (176.72,185.66) -- (193.44,185.68) .. controls (193.44,185.68) and (193.44,185.68) .. (193.44,185.68) -- (193.39,235.42) .. controls (193.39,235.42) and (193.39,235.42) .. (193.39,235.42) -- (176.67,235.4) .. controls (176.67,235.4) and (176.67,235.4) .. (176.67,235.4) -- cycle ;
\draw  [fill={rgb, 255:red, 245; green, 240; blue, 235 }  ,fill opacity=1 ][line width=0.75]  (118.98,55.58) .. controls (118.98,52.81) and (121.23,50.56) .. (124,50.57) -- (130.68,50.57) .. controls (133.45,50.58) and (135.7,52.82) .. (135.69,55.59) -- (135.65,95.87) .. controls (135.65,98.64) and (133.4,100.88) .. (130.63,100.88) -- (123.94,100.87) .. controls (121.17,100.87) and (118.93,98.62) .. (118.93,95.85) -- cycle ;
\draw  [fill={rgb, 255:red, 245; green, 240; blue, 235 }  ,fill opacity=1 ][line width=0.75]  (144.16,89.14) .. controls (144.16,86.37) and (146.41,84.13) .. (149.18,84.13) -- (155.86,84.14) .. controls (158.63,84.14) and (160.88,86.39) .. (160.87,89.16) -- (160.83,129.43) .. controls (160.83,132.2) and (158.58,134.44) .. (155.81,134.44) -- (149.12,134.43) .. controls (146.35,134.43) and (144.11,132.18) .. (144.11,129.41) -- cycle ;
\draw  [fill={rgb, 255:red, 245; green, 240; blue, 235 }  ,fill opacity=1 ][line width=0.75]  (118.96,123.07) .. controls (118.96,120.3) and (121.21,118.06) .. (123.98,118.06) -- (130.67,118.07) .. controls (133.44,118.07) and (135.68,120.32) .. (135.68,123.09) -- (135.64,163.36) .. controls (135.63,166.13) and (133.39,168.38) .. (130.62,168.37) -- (123.93,168.37) .. controls (121.16,168.36) and (118.92,166.12) .. (118.92,163.35) -- cycle ;
\draw  [fill={rgb, 255:red, 255; green, 234; blue, 195 }  ,fill opacity=1 ][line width=0.75]  (144.09,156.13) .. controls (144.09,153.36) and (146.34,151.12) .. (149.11,151.12) -- (155.79,151.13) .. controls (158.56,151.13) and (160.81,153.38) .. (160.8,156.15) -- (160.76,196.43) .. controls (160.76,199.2) and (158.51,201.44) .. (155.74,201.44) -- (149.05,201.43) .. controls (146.28,201.43) and (144.04,199.18) .. (144.04,196.41) -- cycle ;
\draw  [fill={rgb, 255:red, 255; green, 234; blue, 195 }  ,fill opacity=1 ][line width=0.75]  (118.89,189.23) .. controls (118.89,186.46) and (121.14,184.22) .. (123.91,184.22) -- (130.6,184.23) .. controls (133.37,184.23) and (135.61,186.48) .. (135.61,189.25) -- (135.57,229.53) .. controls (135.56,232.3) and (133.32,234.54) .. (130.55,234.54) -- (123.86,234.53) .. controls (121.09,234.53) and (118.85,232.28) .. (118.85,229.51) -- cycle ;
\draw  [fill={rgb, 255:red, 255; green, 234; blue, 195 }  ,fill opacity=1 ][line width=0.75]  (144.18,223.16) .. controls (144.18,220.39) and (146.43,218.15) .. (149.2,218.15) -- (155.89,218.16) .. controls (158.66,218.16) and (160.9,220.41) .. (160.9,223.18) -- (160.86,263.45) .. controls (160.85,266.22) and (158.61,268.47) .. (155.84,268.46) -- (149.15,268.46) .. controls (146.38,268.45) and (144.14,266.21) .. (144.14,263.44) -- cycle ;
\draw  [fill={rgb, 255:red, 245; green, 240; blue, 235 }  ,fill opacity=1 ][line width=0.75]  (118.99,257.16) .. controls (118.99,254.39) and (121.24,252.15) .. (124.01,252.15) -- (130.69,252.16) .. controls (133.46,252.16) and (135.71,254.41) .. (135.7,257.18) -- (135.66,297.45) .. controls (135.66,300.22) and (133.41,302.47) .. (130.64,302.46) -- (123.95,302.46) .. controls (121.18,302.45) and (118.94,300.21) .. (118.94,297.44) -- cycle ;
\draw [color={rgb, 255:red, 245; green, 166; blue, 35 }  ,draw opacity=1 ][fill={rgb, 255:red, 245; green, 166; blue, 35 }  ,fill opacity=1 ][line width=1.5]  [dash pattern={on 4.5pt off 2.25pt}]  (50,189.97) -- (171.87,189.97) ;
\draw [color={rgb, 255:red, 206; green, 206; blue, 206 }  ,draw opacity=1 ][fill={rgb, 255:red, 255; green, 255; blue, 255 }  ,fill opacity=1 ][line width=1.5]  [dash pattern={on 4.5pt off 2.25pt}]  (50.19,223.57) -- (145.42,223.57) ;
\draw [color={rgb, 255:red, 245; green, 166; blue, 35 }  ,draw opacity=1 ][line width=1.5]  [dash pattern={on 4.5pt off 2.25pt}]  (50.53,196.84) .. controls (53.42,196.84) and (117.35,197.01) .. (127.6,197.1) .. controls (137.85,197.19) and (141.35,223.19) .. (152.93,223.43) .. controls (164.52,223.67) and (169.52,223.62) .. (172.05,223.57) ;
\end{mytikz2}\quad.
\end{equation}
Note that the legal paths (e.g., the orange lines) corresponding to the straight wires (e.g., the upper orange line and the grey line) are possible to merge at some block due to the requirement of the definition of the path set. However, we can always bound the length of the legal path set by the length of the straight wires as the latter is always longer. The length of the chosen single path set $P_j$ can be upper bounded by
\begin{equation}
    l(P_j) \leq r \cdot (\chi-1) \cdot \log_4\left(\frac{4^\beta - 1}{4^1-1}\right) + l_0 \leq r (\chi-1) \beta + l_0,
\end{equation}
where $l_0$ denotes the length of the edge between $\rho_0$ and the head blocks of $P_j$. On the other hand, the head width of $P_j$ plus $2l_0$ is upper bounded by
\begin{equation}
    w(P_j) + 2l_0 = \sum_{B_k\in \mathrm{Head}(P_j)} \log_2\left( \frac{4^{|s_f(B_k)|}-1}{2^{|s_f(B_k)|}-1}\right) + \log_2\left(\frac{4^{|s(B_k)|}-1}{4^{|s_f(B_k)|}-1}\right) \leq r\cdot \log_2\left(\frac{4^\beta-1}{2^1-1}\right) \leq r\cdot 2\beta.
\end{equation}
Combining the two inequalities and substituting into Theorem~\ref{theorem:var_lower_bound_path_set} leads to
\begin{equation}\label{eq:var_bound_local_depth}
    \var_\mathbb{U} [C] \geq 2^{-2r(\chi-1)\beta-r\cdot2\beta} \sum_j \lambda_j^2 = 4^{-r\chi\beta} \|\bm{\lambda}\|_2^2.
\end{equation}
Therefore, the landscape fluctuation, i.e., the standard deviation of the normalized cost function, can be lower bounded by Eq.\,\eqref{eq:sigma_lower_bound}.
\end{proof}

Here we remark that the cost function is normalized by $\|\bm{\lambda}\|_1$ only approximately since we usually do not know the exact ground state energy $E_0$ but have $|E_0|\leq \|H\|_\infty \leq\|\bm{\lambda}\|_1$ for any traceless Hamiltonian $H$. For those traceless Hamiltonians consisting of commuting subterms, this normalization is exact in the sense that the global minimum is $-1$. Of course, one can also choose to exactly normalize the cost function by the ground state energy, i.e., $C(\bm{\theta})/|E_0|$, to obtain more accurate landscape fluctuations $\sigma$. However, as $E_0$ is not easy to estimate in general (which is actually one of the purposes of VQE), we usually use the approximate normalization via $\|\bm{\lambda}\|_1$.

For constant depth circuits and extensive Hamiltonians, the standard deviation at least scales like $\sigma\sim 1/\sqrt{N}$ for large $N$, which is consistent with the prediction of the central limit theorem for decoupled convex functions. This indicates that in locally scrambled constant depth circuits, the scaling with $N$ is the same for large $N$ regardless of whether the expressibility is sufficient and whether there are bad local minima, unlike in the case of BP where one can determine the existence of BP by only calculating the scaling with $N$. However, the existence of insufficient expressibility and bad local minima can still manifest themselves in the specific values of the landscape fluctuation.

One can see that the landscape fluctuation decays exponentially with the local depth $\chi$. On the other hand, we know that it will not decay endlessly for a given $N$ because the circuit will become a $2$-design approximately when the depth becomes comparable with $N$. Namely, the landscape fluctuation will finally converge to an exact known value. The converged variance of the cost value can be calculated by the global Haar integral, i.e.,
\begin{equation}\label{eq:var_converge}
\begin{aligned}
    \var_{\text{Haar}} [C] &= \int d\mu(U)~ \tr(\rho_0^{\otimes 2} U^{\dagger\otimes 2} H^{\otimes 2} U^{\otimes 2}) \\
    &= \frac{1}{2^{2N}-1}\left[\tr(H^{\otimes 2}) \tr(\rho_0^{\otimes 2}) + \tr(S H^{\otimes 2}) \tr(S \rho_0^{\otimes 2}) \right]
    \\
    &\quad - \frac{1}{2^N(2^{2N}-1)}\left[\tr(H^{\otimes 2}) \tr(S \rho_0^{\otimes 2}) + \tr(S H^{\otimes 2}) \tr(\rho_0^{\otimes 2}) \right] \\
    &= \frac{\tr(H)^2 \tr(\rho_0)^2 + \tr(H^{2}) \tr( \rho_0^{2})}{2^{2N}-1} - \frac{\tr(H)^2 \tr(\rho_0^{2}) + \tr(H^{2}) \tr(\rho_0)^2 }{2^N(2^{2N}-1)} \\
    &= \frac{\tr(H^{2})}{2^{N}(2^{N}+1)} = \frac{\|H\|_2^2}{2^{N}(2^{N}+1)} = \frac{\|\bm{\lambda}\|_2^2}{2^{N}+1}, 
\end{aligned}
\end{equation}
where we have used the conditions that the Hamiltonian is traceless and the initial state is a pure state, i.e., $\tr(H)=0$ and $\rho_0=\ketbrasame{\bm{0}}$. Hence, the converged value of the landscape fluctuation is
\begin{equation}
    \sigma_{\text{Haar}} = \frac{1}{\sqrt{2^{N}+1}} \frac{\|\bm{\lambda}\|_2}{\|\bm{\lambda}\|_1},
\end{equation}
which is consistent with Eq.\,\eqref{eq:sigma_lower_bound} and the linear depth condition $\chi\in \Omega(N)$ of the circuits forming global $2$-designs.

\section{Details on relative fluctuation}\label{appendix:details_RF}

\subsection{Definition and heuristic derivation of relative fluctuation}
Here we elaborate more on the definition and properties of relative fluctuation. The relative fluctuation (RF), denoted by $\tilde{\sigma}$, is defined as the ratio of the fluctuation (i.e., the standard deviation of the value of the normalized cost function) of the given training landscape $\sigma$ and that of certain standard learnable landscapes $\sigma_0$, i.e.,
\begin{equation}
    \tilde{\sigma} = \frac{\sigma}{\sigma_0},\quad \sigma=\frac{\var_{\Theta}[C]}{\|\bm{\lambda}\|_1},
\end{equation}
so that one can identify the learnability by the magnitude of RF with respective to $1$. We remark that utilizing the information in the landscape fluctuation $\sigma$ is natural and intuitive as typical syndromes of low learnability, such as barren plateaus, insufficient expressibility, and bad local minima, all tend to have their cost values concentrated around their averages, albeit to varying degrees. That is to say, for landscapes with barren plateaus, $\sigma$ is exponentially small in the number of qubits while for landscapes with bad local minima or insufficient expressibility, $\sigma$ may have no exponentially small scaling but the specific value of $\sigma$ can be much smaller than some standard value. 

We remark that $\sigma$ might seem a bit cliche at first glance because the exponentially small variance of cost derivatives $\var_{\Theta} [\partial_\mu C]$ has been used to quantify barren plateaus~\cite{McClean2018} which is equivalent to the exponential cost concentration~\cite{Arrasmith2021, Miao2024, Perez-Salinas2024}. However, besides exponential scaling implying barren plateaus, the specific values of $\sigma$ may also encode the information of insufficient expressibility and bad local minima. In addition, $\sigma$ naturally encodes the global property of the whole landscape while $\var_{\Theta} [\partial_\mu C]$ describes more the effects of individual parameters.

Importantly, we point out that $\sigma$ alone is not enough to characterize learnability. Considering an extreme case: for a fully parametrized circuit of exponential depth, the learnability should be strong (though the parameter count is impractical and not scalable). However, $\sigma$ is exponentially small since the circuit forms a global unitary $2$-design. The underlying reason for this contradiction is that $\sigma$ does not correctly account for the contribution from the dimensionality of the expressive space, resulting in an unfair comparison of the learnability of ansatzes with different amounts of parameters. This is also why we need to introduce a set of standard learnable landscapes, which can take the effect of parameter counts into consideration.

Thus, the key point is how to define the fluctuation $\sigma_0$ of the ``standard learnable landscapes'' $C_0(\bm{\theta})$ properly. We will justify the form of $C_0(\bm{\theta})$ through several fundamental assumptions.
\begin{itemize}
    \item Firstly, to ensure good trainability, $C_0(\bm{\theta})$ should be convex. 
    \item Secondly, to ensure sufficient expressibility, there should exist a global minimum $\bm{\theta}^*$ such that $\ket{\psi(\bm{\theta}^*)}$ is the true ground state, which can be expressed as 
    \begin{equation}
        C_0(\bm{\theta}^*) = -1,
    \end{equation}
    given that the function $C_0(\bm{\theta})$ is normalized already.
    \item Thirdly, to define $C_0(\bm{\theta})$ for arbitrary dimensions of parameter space, $C_0(\bm{\theta})$ with different numbers of parameters should have a similar form. A simple method is to define them recursively, i.e., constructing $C_0(\bm{\theta})$ with $M$ parameters based on $C_0(\bm{\theta})$ with $(M-1)$ parameters. Combined with the assumption of convexity, this recursive operation can be chosen as the addition, i.e., 
    \begin{equation}
        C_0(\bm{\theta})=\sum_\mu f_\mu(\theta_\mu),
    \end{equation}
    where $f_\mu$ is a single-variable convex function. This is because the addition operation preserves the convexity for convex functions of independent variables. We call such type of cost functions $C_0(\bm{\theta})$ as ``decoupled convex functions'', since there is no coupling term like $g_{\mu\nu}(\theta_\mu, \theta_\nu)$.
    \item Finally, incorporating the common feature of the cost functions in VQAs, i.e., the parameter-shift rule which says the cross-sections of the cost function along the axial directions are always sine-shaped~\cite{Cerezo2021a}, the single-variable function $f_\mu(\theta_\mu)$ should take the form of
    \begin{equation}
        f_\mu(\theta_\mu)=a_\mu \sin(\theta_\mu+\phi_\mu)+b_\mu,
    \end{equation}
    where $a_\mu$, $\phi_\mu$ and $b_\mu$ are undetermined real numbers. We point out that such a sine function $f_\mu$ is in fact not strictly convex but it does only have a unique minimum up to the periodicity, so the trainability of $f_\mu$ is as good as a strict convex function. Combined with the assumption of sufficient expressibility, the coefficients should satisfy $\sum_\mu (-|a_\mu|+ b_\mu)=-1$.
\end{itemize}
Therefore, the standard learnable landscapes should take the form of decoupled sine functions whose global minimum is $-1$. To give a rough estimation for the landscape fluctuation of $C_0(\bm{\theta})$, we assume the coefficient $a_\mu$ is uniform. The phase shift $\phi_\mu$ and the constant $b_\mu$ are set to be zero since they do not affect the landscape fluctuation. Thus, we obtain a reduced form of the standard learnable normalized cost function
\begin{equation}\label{eq:c0_sin}
    C_0(\bm{\theta}) = \frac{1}{M} \sum_{\mu=1}^{M} \sin\theta_\mu.
\end{equation}
The variance of the function value of a single sine function is
\begin{equation}
    \var_{\theta_\mu\sim \text{uniform}[0,2\pi)}[\sin\theta_\mu] = \int_{0}^{2\pi} \frac{d\theta_\mu}{2\pi} \sin^2\theta_\mu = \frac{1}{2},
\end{equation}
so that the corresponding standard deviation is just
\begin{equation}
    \sigma_0^{(1)}=\frac{1}{\sqrt{2}}.
\end{equation}
If we regard $\{\sin\theta_\mu\}_{\mu=1}^{M}$ as a set of random variables that are independent and identically distributed, $C_0(\bm{\theta})$ in Eq.\,\eqref{eq:c0_sin} is just the average of these i.i.d. random variables. According to the central limit theorem, the distribution of the cost value of $C_0(\bm{\theta})$ converges to the Gaussian distribution whose standard deviation is
\begin{equation}\label{eq:sigma0_2M}
    \sigma_0 = \frac{\sigma_0^{(1)}}{\sqrt{M}} = \frac{1}{\sqrt{2M}}.
\end{equation}
This is exactly the reference line that is used repeatedly in the main text. This kind of cost function $C_0(\bm{\theta})$ can be actually realized by the case of product-state learning, e.g., using one layer of single-qubit rotations $R_y^{\otimes N}$ to learn the ground state of a trivial polarization Hamiltonian $H=-\sum_j X_j$. Therefore, combined with the definition of RF, we have $\tilde{\sigma}=\sqrt{2M}\sigma$.

However, the expression in Eq.\,\eqref{eq:sigma0_2M} is still not complete because the special form of Eq.\,\eqref{eq:c0_sin} hides the fact that the ``parameter count'' that should go into Eq.\,\eqref{eq:sigma0_2M} is the dimension of the expressive space of quantum states, instead of the dimension of the actual parameter space. Such a difference can be revealed by the following extreme example. Consider a PQC composed of repeated single-qubit rotation layers, whose expressive space is limited to all product states. If the true ground state is entangled, the learnability should be low. However, since one can arbitrarily increase the number of repeated layers, $\sqrt{2M}\sigma$ can be arbitrarily large. The underlying reason for this contradiction is that the dimension that is really helpful to increase expressibility is the dimension of the expressive space of quantum states instead of the plain dimension of the parameter space. Overparametrization may lead to good trainability but does not always lead to good learnability considering the possibility of insufficient expressibility.

To incorporate this fact into the definition of RF rigorously, we utilize the notion of the effective (quantum) dimension $M_{\text{eff}}$ of the expressive space (or say manifold) of a parametrized quantum state~\cite{Haug2021a}, which is defined by the rank of the quantum Fisher information (QFI) matrix $\mathcal{F}(\bm{\theta})$, i.e.,
\begin{equation}
    M_{\text{eff}} = \max_{\bm{\theta}} \{\opr{rank}[\mathcal{F}(\bm{\theta})] \},
\end{equation}
where the QFI matrix for the parametrized quantum state $\ket{\psi(\bm{\theta})}$ is defined by~\cite{Meyer2021}
\begin{equation}
    \mathcal{F}_{\mu\nu} = 4\opr{Re}[\braket{\partial_\mu \psi}{\partial_\nu \psi} - \braket{\partial_\mu \psi}{\psi} \braket{\psi}{\partial_\nu \psi} ].
\end{equation}
Here we use the maximization over all $\bm{\theta}$ to avoid those special parameter points of measure zero whose QFI rank is lower than most of the others, such as the poles of the Bloch sphere~\cite{Haug2021a}. In fact, one can easily get $M_{\text{eff}}$ by computing the QFI ranks at several random parameter points instead of maximization. Importantly, it holds that $M_{\text{eff}}\leq M$ and $M_{\text{eff}}$ will reduce to the plain dimension of the parameter space $M$ for underparametrization scenarios such as $C_0(\bm{\theta})$. Actually, overparametrization is defined by the saturation of $M_{\text{eff}}$ when increasing $M$~\cite{Larocca2023}. Therefore, the proper definition of the fluctuation of the standard learnable landscapes should be modified from Eq.\,\eqref{eq:sigma0_2M} into
\begin{equation}\label{eq:sigma0_2Meff}
    \sigma_0 = \frac{1}{\sqrt{2M_{\text{eff}}}}.
\end{equation}
Thus the rigorous definition of RF is
\begin{equation}
    \tilde{\sigma} = \frac{\sigma}{\sigma_0} = \sqrt{2M_{\text{eff}}}\sigma.
\end{equation}
One can see that generally speaking, RF is the landscape fluctuation multiplied by a correction factor originating from the dimensionality of the expressive space.

Here we make some clarification on the definitions of $\sigma_0$ and $\tilde{\sigma}$. Firstly, the use of the parameter-shift rule above is only a way of heuristic derivation of RF but not a necessary condition for RF to be effective. Our numerical experiments support that RF can still be an effective metric for learnability even in the case where the parameter-shift rule does not hold rigorously, such as PQCs with correlating parameters. Secondly, $\sigma = \sigma_0$, or equivalently $\tilde{\sigma}=1$, is not an absolute criterion separating learnable and unlearnable phases with theoretical guarantee, but a reference line of practical use to predict and compare the learnability of different PQCs. It is completely possible to use some advanced structure-specific optimization strategies to enhance the training performance even if $\tilde{\sigma}$ is a bit smaller than $1$. Thirdly, the $\sqrt{M}$ scaling can also consistently arise when the average $2$-norm of the cost gradient $\mathbb{E}_{\Theta}[\|\nabla C\|_2]$ is estimated by its upper bound from Jensen's inequality
\begin{equation}
    \mathbb{E}_{\Theta}[\|\nabla C\|_2] \leq \sqrt{\sum_{\mu=1}^M \mathbb{E}_{\Theta}[(\partial_\mu C)^2]},
\end{equation}
and $\mathbb{E}_{\Theta}[(\partial_\mu C)^2]$ is approximately the same constant for all tunable parameters.

In the main text, we claim that global overparametrization occurs around the intersection point of the reference line $\sigma=\sigma_0$ and the converged horizontal line $\sigma=\sigma_{\text{Haar}}$. This can be easily checked by calculating the number of parameters $M_{\text{cross}}$ at the intersection point and comparing it with the total number of parameters of a fully parametrized $N$-qubit quantum state $M_{\text{full}} = 2^{N+1} - 2$ (where the constraints from the normalization and global phase have been considered). Let $\sigma_0=\sigma_{\text{Haar}}$ and we have
\begin{equation}
    M_{\text{cross}} = \frac{1}{2\sigma_{\text{Haar}}^2} = \frac{\|\bm{\lambda}\|_1^2}{\|\bm{\lambda}\|_2^2} \left(2^{N-1}+\frac{1}{2}\right),
\end{equation}
which has almost the same scaling as $M_{\text{full}}$ except for a Hamiltonian-dependent correction $\|\bm{\lambda}\|_1^2/\|\bm{\lambda}\|_2^2$ which grows at most polynomially in $N$ for common physical Hamiltonians.

As RF only involves second-order moments of random circuits, it can be computed on classical computers efficiently and parallelly via random Clifford circuits under the following conditions.
\begin{itemize}
    \item Firstly, the PQC should be composed of independent Pauli rotation gates (such as $e^{-iX \theta/2}$ and $e^{-iX\otimes X\theta/2}$) and other Clifford gates involving no tunable parameters (such as $\opr{CX}$, $\opr{CZ}$ and $\opr{SWAP}$). Sampling the Pauli rotation angles from the discrete set $\{0,\pi/2,\pi,3\pi/2\}$ will be equivalent to sampling them from $[0,2\pi)$ at least up to the second moment, where the former is classically efficient by the stabilizer formalism. Alternatively, if a gate or block is fully parametrized, such as the $\mathrm{SU}(4)$ Cartan decomposition of a two-qubit gate, one can also sample the Haar-random Clifford gates instead of sampling Pauli angles. In the next section, we will verify the consistency of the two sampling methods (i.e., Clifford sampling and continuous-angle sampling). By contrast, if the PQC contains correlating tunable parameters over different Pauli rotation gates or non-Clifford gates whose generators are not Pauli strings, two sampling methods will be nonequivalent. Concomitantly, one cannot compute RFs efficiently on classical computers via Clifford sampling, though in these cases RF is still an effective metric and one can estimate RFs efficiently on quantum devices. In particular, for those PQCs that contain non-Clifford gates such as the $T$ gate, one can choose to relax them to Pauli rotations or combinations of Pauli rotations to assess RFs approximately via Clifford sampling.
    \item Secondly, the initial state should be a stabilizer state. For many VQAs such as standard VQE and QAOA, the initial state is just $\ket{0}^{\otimes N}$ or $\ket{+}^{\otimes N}$ which is indeed a stabilizer state and hence RF can be computed classically via the stabilizer formalism given Clifford circuits. However, there are also scenarios in VQAs where the initial state is not a stabilizer state. For example, in quantum machine learning (QML) tasks that involve data encoding in the input~\cite{Cerezo2021}, the initial state (the state after data encoding) is generally not a stabilizer state. In such cases, one can only estimate RFs efficiently on quantum devices.
\end{itemize}
The classical simulability of RF may indicate that though the actual information in the ground state might be difficult to learn, the information on whether the ground state is easy to learn by the given PQC might be much easier to extract. We remark that $M_{\text{eff}}$ can be simply replaced by $M$ in most common cases without loss of effectiveness, especially in underparametrization scenarios. And $M_{\text{eff}}$ can be also efficiently estimated on classical computers through tools from DLA~\cite{Larocca2022, Larocca2023} or the high-order parameter-shift rule~\cite{Mari2021} together with Clifford circuits.

It is also possible to design automatic searching algorithms by taking RF itself as an objective function and utilizing methods like in quantum architecture search~\cite{Ostaszewski2019, Zhang2020b_z, Du2020a_z, Zhang2021_z}. Akin to coarse and fine-tuning, this leads to a promising scheme that first uses classical simulable metrics like RF to search for suitable PQCs for a given problem on classical computers and then trains the selected PQCs on quantum devices to further search for the target state.

\subsection{Gaussian-like condition and other moments}
A natural question is why high RFs can suggest better training performance (smaller training error/higher training accuracy/higher learnability), as witnessed by all the instances in this paper. Here we provide an intuitive explanation. The correlation between RFs and training performance holds true based on the necessary condition that the probability density of quantum states in cost value
\begin{equation}
    \varrho(x) = \opr{Pr}_{\Theta}[C(\bm{\theta})=x],
\end{equation}
is Gaussian-like with zero average, which means that the second moment (variance/landscape fluctuation/RF) can determine the distribution $\varrho$. If setting aside the effect of expressive space dimension for a moment, we can say this condition ensures that a larger variance will lead to a more spread distribution and larger accessibility to the low-energy states of the Hamiltonian, implying a better chance of successful training. For the numerical experiments conducted in this paper, we find that all the corresponding distributions indeed have Gaussian-like shapes with zero averages, which partially explains the prediction power of RFs. In addition, for deep circuits, there are also proofs~\cite{Garcia-Martin2023} stating that $\varrho$ indeed follows Gaussian distributions under reasonable conditions.

On the contrary, the correlation between RFs and training performance may break in the following two scenarios.
\begin{itemize}
    \item The first scenario is that the average of the cost value $\mathbb{E}_{\Theta}[C(\bm{\theta})]$ deviates from zero distinctly. Consider an extreme case where the initial state is already very close to the ground state and the action of the PQC is almost trivial. Then the variance of the cost value should be almost zero, but the average is close to the ground state energy. Hence the final training error would be small, which contradicts the small variance. More generally, if $\varrho$ has a small variance but has a distinct bias to the ground state, the corresponding training performance is not necessarily worse than those with large variances and unbiased averages. In this paper, we directly focus on the variance of the cost value without paying much attention to the average because the average usually vanishes for common PQCs. Given a traceless Hamiltonian, one layer of single-qubit rotations $R_3^{\otimes N}$ is enough to make the average become zero. If the PQC is so special or with so few parameters that the average of the cost value is not close to zero, it is indeed necessary to consider both the average and variance to assess learnability. 
    
    \item The second scenario is that the average is zero, but the probability distribution $\varrho$ deviates from the Gaussian-like symmetric shape distinctly. For example, suppose $\varrho$ is asymmetric regarding the average and has a larger tail on the negative side. In that case, there will be extra accessibility to the low-energy states that does not manifest itself on the second-order moment. This can be detected by computing other higher-order moments such as the third-order moment $\mathbb{E}_{\Theta}[C(\bm{\theta})^3]$, which can quantify the asymmetry of the distribution. That is to say, if $\varrho$ has a relatively small variance but has a distinctly negative third-order moment, the corresponding training performance is not necessarily worse than those with large variances and vanishing third-order moments. Moreover, for those PQCs composed of locally scrambled blocks such as $R_3$ gates and $\mathrm{SU}(4)$ fully parametrized two-qubit gates, the third-order moments can also be efficiently computed by Clifford sampling on classical computers as the Clifford group forms a unitary $3$-design for the corresponding Haar distribution.
\end{itemize}

\section{Additional numerical results and details}\label{appendix:additional}

In this section, we elaborate on the details of the numerical experiments in the main text and provide some additional numerical results to support our conclusions further. Throughout this paper, the numerical experiments are based on the open-source library TensorCircuit~\cite{Zhang2022_z} and PyClifford~\cite{Hu2023} for general circuit simulation with variational training, and Clifford circuit simulation via stabilizer formalism, respectively. 

First, we will clarify some basic details about the experiments shown in the main text. For the RF estimation experiments in Figs.~\textcolor{blue}{2}, \textcolor{blue}{3}, \textcolor{blue}{4}, and \textcolor{blue}{5} of the main text, we typically use $2000$ samples for most cases and use $20000$ samples for cases in Fig.~\textcolor{blue}{2} with large qubit counts ($N=12, 16$) and deep circuit depths ($D=64, 128, 256$). All these samples are then divided into $10$ batches, allowing us to estimate the standard deviations of RF as error bars. The training experiment in Fig.~\textcolor{blue}{3} is conducted using the Adam optimizer with a learning rate $0.1$ and initial parameters uniformly sampled from $[0, 2\pi)$. The training error is averaged over $10$ independent repeated training instances. The error bar represents the standard deviation over the instances. 

\subsection{Verification of Clifford sampling}
In the following, we give a more detailed explanation of the numerical experiments in the main text with additional results. First, we claim in the main text that sampling the Pauli rotation angles from the discrete set $\{0,\pi/2,\pi,3\pi/2\}$ will be equivalent to sampling them from the continuous interval $[0,2\pi)$ in terms of computing RFs so that one can efficiently compute RFs classically via the stabilizer formalism. This can be simply proved by checking the definition of unitary $t$-designs for the $\mathrm{U}(1)$ group, which we dub ``Pauli rotation $t$-design'' here. Take the $R_z$ rotation gate for an example. Denote $P_{t,t}(V)$ as a polynomial of degree at most $t$ in the entries of some matrix $V$ and $V^\dagger$. The continuous integral gives
\begin{equation}\label{eq:continuous}
\begin{aligned}
    \int_{0}^{4\pi} \frac{d\theta}{4\pi} P_{t,t}[R_z(\theta)] &= \int_{0}^{4\pi} \frac{d\theta}{4\pi} \sum_{k=-2t}^{2t}\lambda_{k} e^{-ik\theta/2} = \sum_{k=-2t}^{2t}\lambda_{k} \int_{0}^{4\pi} \frac{d\theta}{4\pi} e^{-ik\theta/2} = \lambda_{0},
\end{aligned}
\end{equation}
where $\lambda_k$ denotes the coefficients in the polynomial. Suppose $4\pi$ is divisible by the step length $\Delta\theta$ and set $\mathbb{S} = \{0,\Delta\theta,2\Delta\theta,\ldots,4\pi-\Delta\theta\}$. The discrete summation gives
\begin{equation}\label{eq:discrete}
\begin{aligned}
    \frac{1}{|\mathbb{S}|}\sum_{\theta\in \mathbb{S}} P_{t,t}[R_z(\theta)] &= \frac{1}{|\mathbb{S}|}\sum_{\theta\in \mathbb{S}} \sum_{k=-2t}^{2t}\lambda_{k} e^{-ik\theta/2} = \frac{1}{|\mathbb{S}|} \sum_{k=-2t}^{2t}\lambda_{k} \sum_{\theta\in\mathbb{S}} e^{-ik\theta/2}.
\end{aligned}
\end{equation}
If one wants get the same result from Eq.\,\eqref{eq:discrete} as in Eq.\,\eqref{eq:continuous}, the terms with $k\neq 0$ should all vanish. The critical condition that gives the maximum $\Delta\theta$ is $2t\cdot\Delta\theta/2=\pi$, i.e., $\Delta\theta=\pi/t$. This indicates that the step length of the discrete Pauli angle set for a Pauli rotation $t$-design should not be shorter than $\pi/t$. When $t=2$, the discrete set is just $\{0, \pi/2, \pi, 3\pi/2, 2\pi, 5\pi/2, 3\pi, 7\pi/2\}$. Furthermore, since the global phase does not affect the expectation value of observables, one just needs to sample from $\{0, \pi/2, \pi, 3\pi/2\}$ to compute landscape fluctuations. We numerically verify this equivalence in Fig.~\ref{fig:clifford_continuous}, where we randomly choose some PQCs and perform the two sampling methods independently. Especially, for the $\mathrm{SU}(4)$ full parametrization, we directly use the two-qubit random Clifford gate that matches the corresponding Haar measure. One can find that all the results from the two sampling methods coincide precisely within the statistical errors. 

\begin{figure}
    \centering
    \includegraphics[width=0.7\linewidth]{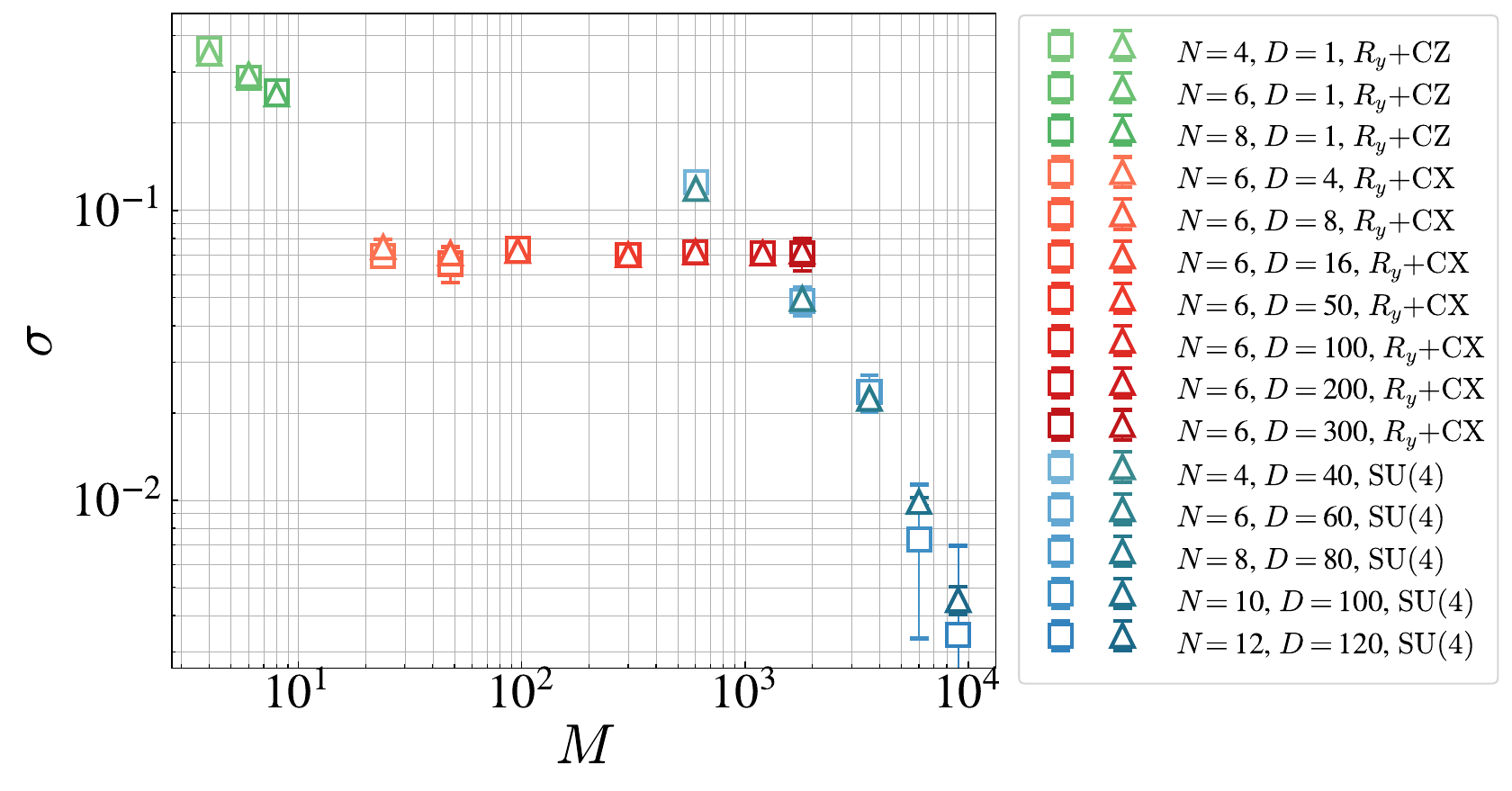}
    \caption{The landscape fluctuation $\sigma$ vs parameter count $M$ for randomly chosen circuits of different depth $D$ and qubit count $N$ obtained from the two sampling methods, i.e., sampling from continuous rotation angles (triangle marker) and the discrete Clifford rotation angles (square marker). The number of samples is $2000$ for all data points. The Hamiltonian is the 1D cluster model and the initial state is $\ket{\bm{0}}$. $R_y$+$\opr{CZ}$ means that each layer contains a layer of $R_y$ gates and a brickwall layer of $\opr{CZ}$ gates. Others are defined similarly. $\mathrm{SU}(4)$ means the brickwall circuit composed of the fully parameterized two-qubit gates.}
    \label{fig:clifford_continuous}
\end{figure}

\subsection{Details and extensions of the experiments on insufficient expressibility}
In Figs.~\textcolor{blue}{2} and \textcolor{blue}{3}, the Hamiltonian is chosen as the 1D transverse-field cluster model
\begin{equation}
    H_{ZXZ} = -\sum_{j=1}^{N} Z_{j-1} X_{j} Z_{j+1} - h\sum_{j=1}^{N} X_j,
\end{equation}
at $h=0$ with open boundary conditions. Namely, the boundary terms are just $X_1Z_2$ and $Z_{N-1}X_N$. The exact ground state in the limit of $h=0$ (the 1D graph state) can be generated via the circuit with a layer of Hadamard gates followed by a layer of CZ gates acting on the initial state $\ket{\bm{0}}$, i.e.,
\begin{equation}
\begin{quantikz}[row sep={0.7cm, between origins}, column sep=0.4cm]
    & \gate{H} & \ctrl{1}& & &\\
    & \gate{H} & \control{} &\ctrl{1} & & \\
    & \gate{H} & \ctrl{1} &\control{}  & & \\
    & \gate{H} & \control{} &\ctrl{1} & & \\
    & \gate{H} & \ctrl{1} &\control{}  & & \\
    & \gate{H} & \control{} & & &
\end{quantikz}\quad.
\end{equation}
Hence the ansatz circuit shape containing the ground state can be chosen as
\begin{equation}
\begin{quantikz}[row sep={0.7cm, between origins}, column sep=0.4cm]
    &\gate[2]{~}& & &\\
    & &\gate[2]{~} & & \\
    &\gate[2]{~}& & & \\
    & &\gate[2]{~} & & \\
    &\gate[2]{~}& & & \\
    & & & &
\end{quantikz}\quad,
\end{equation}
which is exactly the 1D brickwall circuit with open boundary conditions. If the two-qubit block is the $\mathrm{SU}(4)$ Cartan decomposition or $R_y$-$\opr{CZ}$ (the naming rules of ansatzes can be found around Eq.\,\eqref{eq:ansatz_template}), the expressive space of the PQC will contain the exact ground state at $h=0$. As shown in Fig.~\textcolor{blue}{3} of the main text, RF again roughly preserves the order of the learnability of the PQCs (higher RFs herald smaller training errors). In particular, since the true ground state at $h=0.5$ is not guaranteed to be contained by the expressive space of $R_y$-$\opr{CZ}$, the corresponding RF becomes lower and the training error deviates slightly from zero.



\begin{figure}
    \centering
    \includegraphics[width=0.8\linewidth]{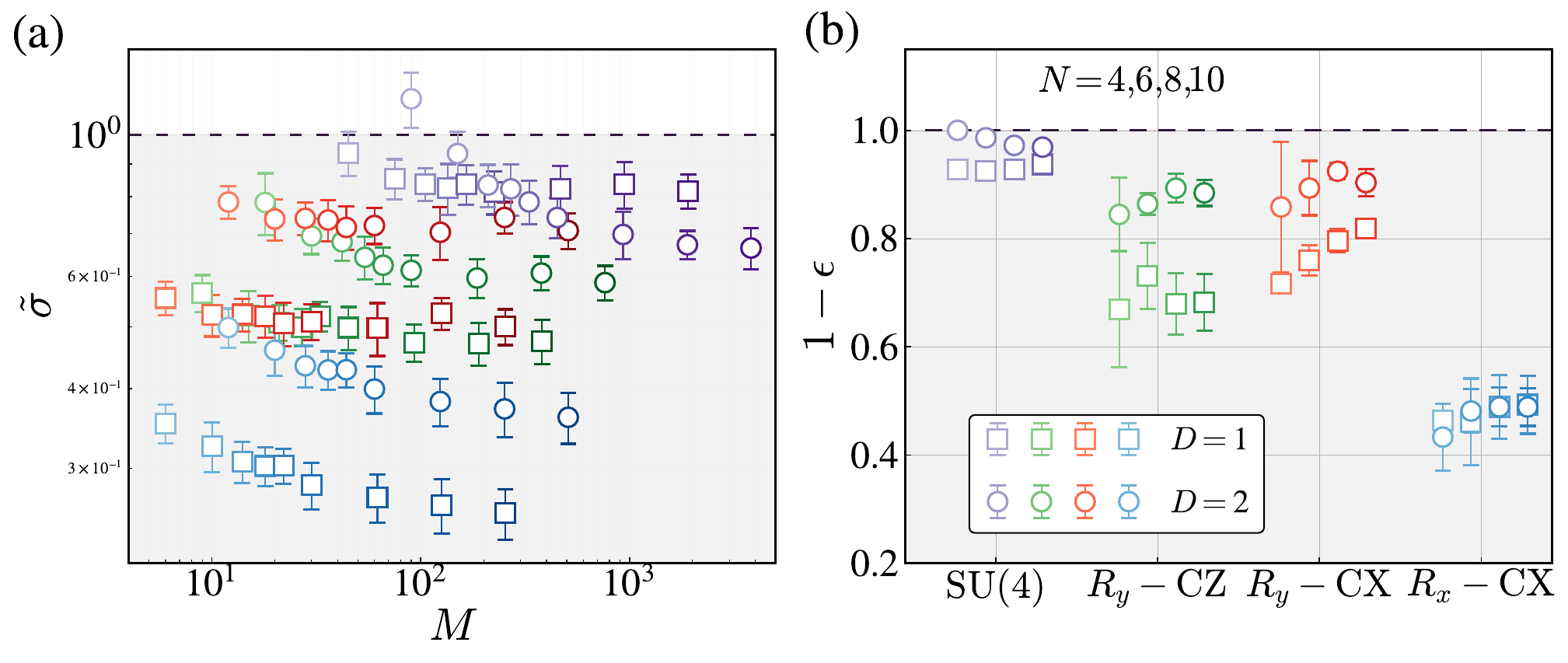}
    \caption{(a) The relative fluctuation $\tilde{\sigma}$ vs parameter count $M$ for 1D brickwall circuits with different gate choices and depths $D$ regarding the 1D antiferromagnetic Heisenberg model. The colors with increasing intensity correspond to $N=4,6,8,10,12,16,32,64,128$. (b) The actual training performance of these circuits with $\epsilon$ representing the training error after convergence. }
    \label{fig:heisenberg0}
\end{figure}

To further showcase the correlation between RFs and actual training accuracies, we use the 1D antiferromagnetic Heisenberg model
\begin{equation}
    H_{\text{Heisenberg}} = \sum_{j=1}^{N-1} X_{j} X_{j+1} + Y_{j} Y_{j+1} + Z_{j} Z_{j+1} ,
\end{equation}
with open boundary conditions. As shown in Fig.~\ref{fig:heisenberg0}, higher RFs again correspond to smaller training errors. In particular, different from the transverse-field cluster model, the performance of $R_y$-$\opr{CZ}$ becomes worse as there is no more guarantee for it to express the ground state even approximately.

We point out that the correlation between RFs and training performance does manifest itself in expressibility like in Fig.~\textcolor{blue}{3} of the main text and Fig.~\ref{fig:heisenberg0} not only explicitly, but also implicitly in other aspects, i.e., barren plateaus, bad local minima and polynomial overparameterization like in Figs.~\textcolor{blue}{2}, \textcolor{blue}{4} and \textcolor{blue}{5} of the main text, where the training errors are not plotted concomitantly but have been thoroughly investigated in the previous literature. For example, in the presence of barren plateaus or bad local minima, we have $\tilde{\sigma}\ll 1$ and the training errors deviate from zero significantly~\cite{McClean2018, Cerezo2021, Anschuetz2022}, while in the case of overparametrization, we have $\tilde{\sigma}\gtrsim 1$ and the training errors are almost zero~\cite{Larocca2022, Larocca2023}.

\subsection{Explanation and extensions for the experiments on local minima}
In Fig.~\textcolor{blue}{4} of the main text, we investigate the influence of the circuit depth on the extent of bad local minima using random Hamiltonians $H_{\text{rand}}$, where Fig.~\textcolor{blue}{4}(b) shows that as $N$ increases, RF stays above $1$ for $D=1$ while falls below $1$ for $D=4$, signifying bad local minima arise for $D=4$ when $N$ become large. Fig.~\textcolor{blue}{4}(a) shows that at a fixed $N$, RF indeed decreases gradually from $D=1$ to $D=4$. The critical depth is around $D_c \approx 2$. Moreover, compared to $H_{\text{rand}}$, the result of the trivial polarization Hamiltonian $H_Z$ shows a larger critical depth around $D_c \approx 4$, i.e., bad local minima come later than those of $H_{\text{rand}}$. This is because $H_Z$ is more local than $H_{\text{rand}}$ on average, resulting in narrower backward causal cones and larger local overparametrization ratio $\gamma$. In fact, this can also be seen from the experimental details. We know that $H_{\text{rand}}$ is obtained by the random back evolution $\mathbf{V}$ of depth $D$ from $H_{Z}$. $\mathbf{V}$ and the PQC $\mathbf{U}$ form a deeper unitary $\mathbf{V}\mathbf{U}$ of depth about $2D$. So mathematically, the landscape fluctuation for $H_{\text{rand}}$ and depth $D$ on average should be almost equal to that for $H_{Z}$ and depth $2D$. This can also be seen in Fig.~\textcolor{blue}{4}(a). However, the parameter counts $M$ for the two cases are different, i.e., the former is only about half of the latter, resulting in different data points on the $\sigma$-$M$ diagram, and concomitantly different RFs.

The phenomenon that more local Hamiltonian results in milder bad local minima can also be shown by directly comparing RFs of different physical models. In Fig.~\textcolor{blue}{3} of the main text, it is observed that RFs of the cluster model fall below $1$ at around $N\geq 64$ for 1D brickwall circuits of $D=1,2$. The locality of the cluster model is $r=3$, i.e., it is a $3$-local Hamiltonian. As a comparison, we conduct the same experiments for the transverse-field Ising (TFI) model 
\begin{equation}
    H_{\text{TFI}} = -\sum_{j=1}^{N-1} Z_jZ_{j+1} - h\sum_{j=1}^{N} X_j,
\end{equation}
of locality $r=2$. As shown in Figs.~\ref{fig:isingh0} and \ref{fig:isingh05}, RFs for the brickwall circuits stay above $1$ even till $N=128$, by contrast to the case of the cluster model. This dependence on the Hamiltonian locality also partially manifests itself on the lower bound in Theorem.~\ref{theorem:var_lower_bound_path_set} in the factor $2^{-r\chi\beta}$ that decays exponentially with the (non-)locality $r$.

The main text mentions that RF is insensitive to good local minima. This is actually intuitive as depicted in Fig.~\textcolor{blue}{1}(b) of the main text. Here we verify it by considering the symmetry-breaking states and the Greenberger–Horne–Zeilinger-type (GHZ) symmetric ground states in the ferromagnetic phase of the TFI model. At $h=0$, the system has two degenerate ground states $\ket{\bm{0}}=\ket{\uparrow}^{\otimes N}$ and $\ket{\bm{1}}=\ket{\downarrow}^{\otimes N}$. Of course, any linear combination of these two states is also a ground state at $h=0$. Especially, the GHZ state
\begin{equation}
    \ket{\text{GHZ}} = \frac{\ket{\bm{0}}+\ket{\bm{1}}}{\sqrt{2}},
\end{equation}
is also a ground state at $h=0$. The difference is that the GHZ state preserves the global symmetry $\prod_j X_j$ of the original Hamiltonian, while the two product states are not, hence termed as ``symmetry-breaking'' states. If we apply a small external field $h\neq 0$, the strict ground state will not be two-fold degenerate anymore for finite-size systems. Instead, the symmetric GHZ-type state is the unique ground state, though not in the exact form of $\ket{\text{GHZ}}$. However, the energy difference between the symmetric and symmetry-breaking states is small and even decreases exponentially with $N$. In the language of the training landscape, the symmetry-breaking states can be seen as two good local minima with respect to the global minimum corresponding to the true symmetric ground state. Hence we conduct additional experiments on good local minima using this TFI model. Specifically, the GHZ state can be generated by the circuit
\begin{equation}
\begin{quantikz}[row sep={0.7cm, between origins}, column sep=0.4cm]
    &\gate{H}&\ctrl{1}& & & & &\\
    &        &\targ{} &\ctrl{1} & & & &\\
    &&        &\targ{}  &\ctrl{1} & & &\\
    &&&        &\targ{}  &\ctrl{1} & &\\
    &&&&        &\targ{}  &\ctrl{1} &\\
    &&&&&        &\targ{}  &
\end{quantikz}\quad.
\end{equation}
Similarly, the GHZ-type symmetric state can be generated by the circuit of shape
\begin{equation}
\begin{quantikz}[row sep={0.7cm, between origins}, column sep=0.4cm]
    &\gate[2]{~} & & & & &\\
    & &\gate[2]{~} & & & &\\
    & & &\gate[2]{~} & & &\\
    & & & &\gate[2]{~} & &\\
    & & & & &\gate[2]{~} &\\
    & & & & & &
\end{quantikz}\quad,
\end{equation}
which is just a typical instance of 1D finite local-depth circuits (FLDC)~\cite{Zhang2024}. We call it the ``ladder'' ansatz of one layer. We use such a PQC instead of common brickwall circuits because GHZ-type states are long-range entangled, in the sense that they cannot be prepared from product states by applying any constant-depth local circuits. As shown in Figs.~\ref{fig:isingh0} and \ref{fig:isingh05}, it still holds that $\tilde{\sigma}\gtrsim 1$ in the presence of good local minima. The brickwall circuits even get a larger RF in spite that their expressive space does not contain the true ground state, which might be attributed to the fact that their smaller causal cones lead to larger local overparametrization ratios and their expressive space at least contains the approximate symmetry-breaking ground states.


\begin{figure}
    \centering
    \includegraphics[width=0.8\linewidth]{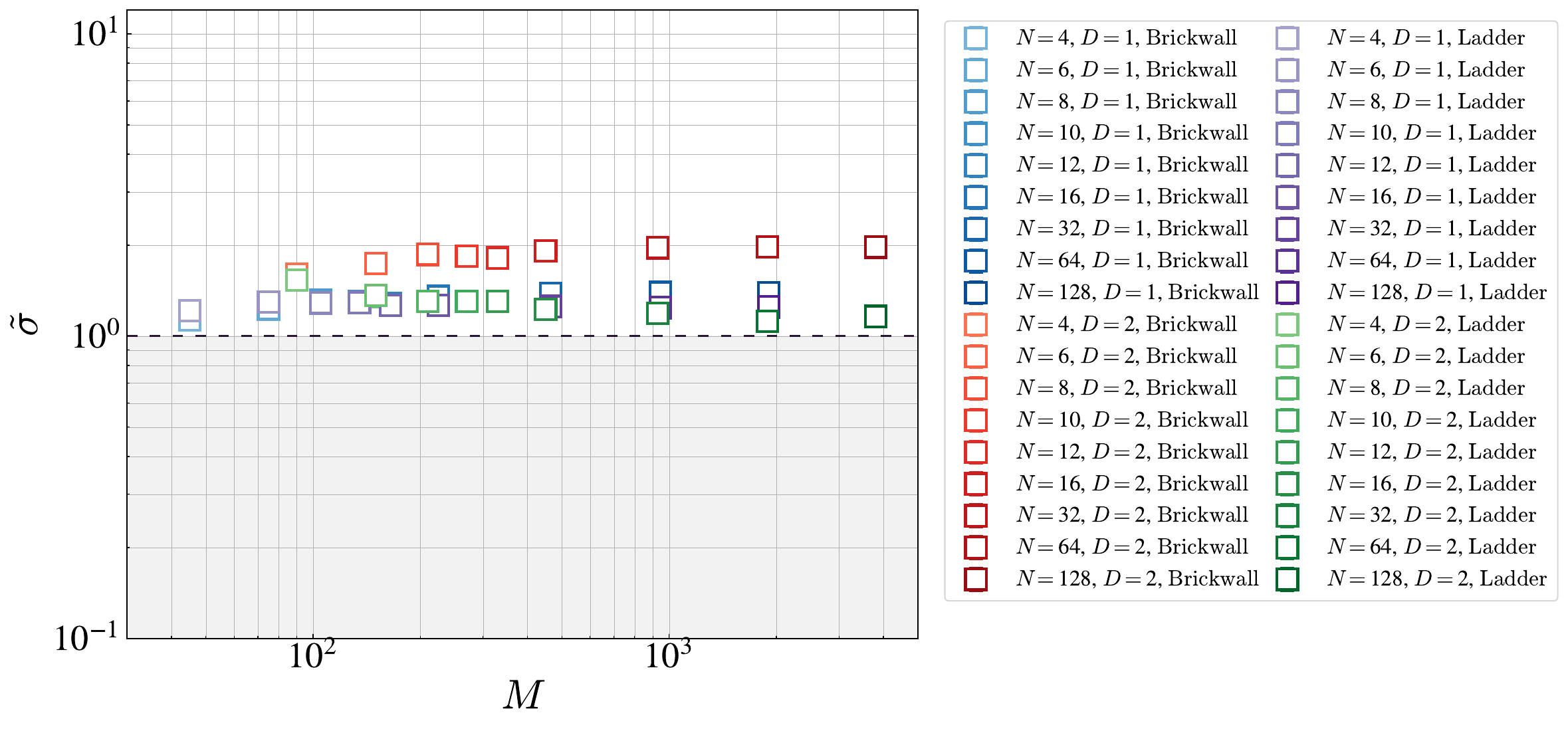}
    \caption{The relative fluctuation $\tilde{\sigma}$ vs parameter count $M$ for the 1D TFI model with $h=0$ using 1D brickwall and ladder circuits of depth $D=1,2$ respectively. The colors with increasing intensity correspond to $N=4,6,8,10,12,16,32,64,128$. 
    }
    \label{fig:isingh0}
\end{figure}

\begin{figure}
    \centering
    \includegraphics[width=0.8\linewidth]{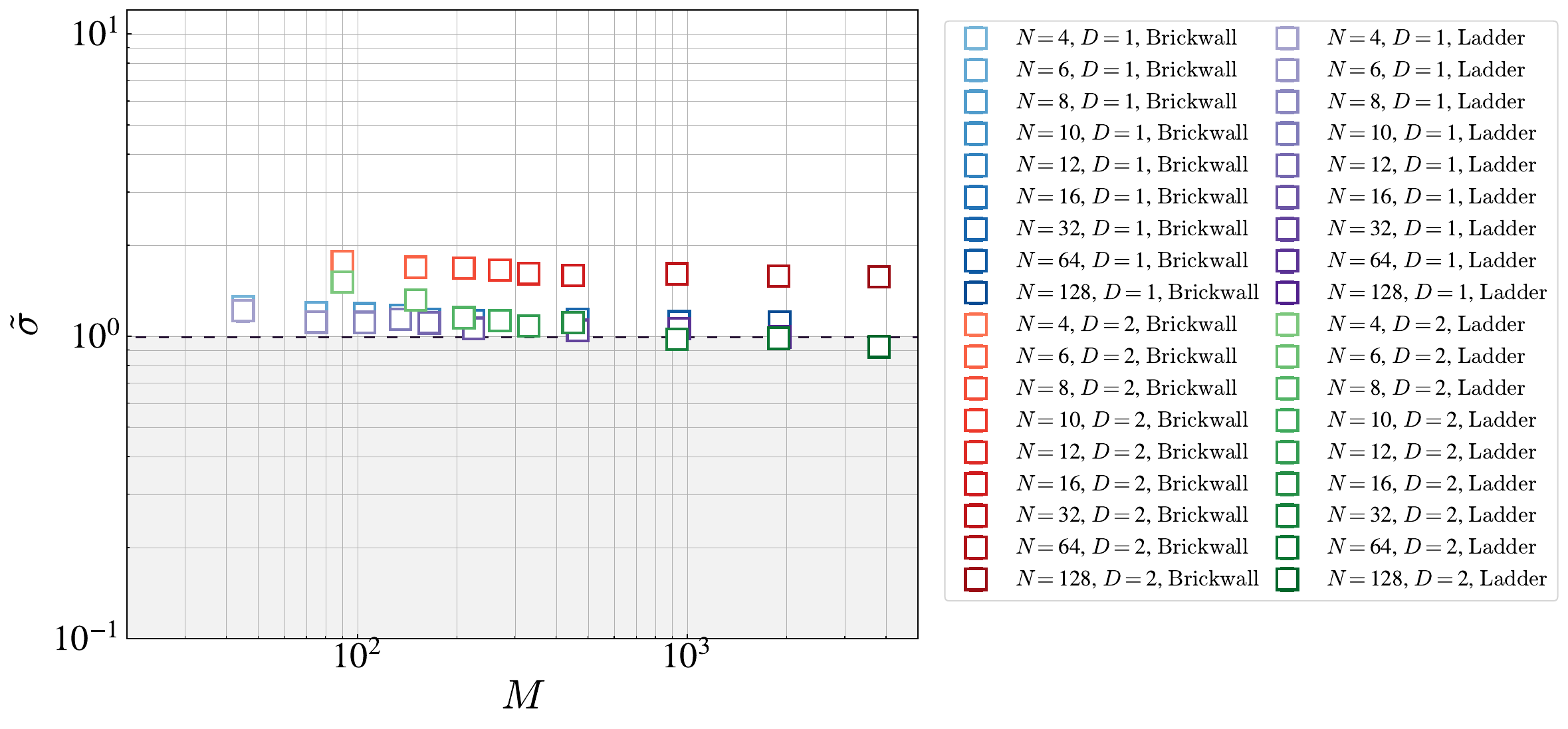}
    \caption{The relative fluctuation $\tilde{\sigma}$ vs parameter count $M$ for the 1D TFI model with $h=0.5$ using 1D brickwall and ladder circuits of depth $D=1,2$ respectively. The colors with increasing intensity correspond to $N=4,6,8,10,12,16,32,64,128$.}
    \label{fig:isingh05}
\end{figure}

\subsection{Details of the experiments on polynomial overparametrization}
In Fig.~\textcolor{blue}{5} of the main text, we investigate the behavior of RFs in the presence of overparametrization with polynomial parameters. The PQC used is the 1D HVA circuit like in QAOA with each layer consisting of a layer of $R_x$ gates and a layer of nearest-neighbor $R_{zz}$ gates with correlating parameters in each sub-layer, i.e.,
\begin{equation}
    \mathbf{U}(\bm{\theta}) = \prod_{l=1}^{D} \exp\left(-i \sum_{j=1}^{N} X_j \theta_{l,2}/2 \right) \exp\left(-i \sum_{j=1}^{N-1} Z_j Z_{j+1} \theta_{l,1}/2\right).
\end{equation}
The initial state is $\ket{+}^{\otimes N}$. A key step is to determine the effective dimension $M_{\text{eff}}$ for a given PQC. As discussed in the last section, we obtain $M_{\text{eff}}$ by computing the rank of the QFI matrix at several random parameter points. The results are consistent with those in Ref.~\cite{Larocca2022}, which show that $M_{\text{eff}}=N^2-N$ for this 1D QAOA-type ansatz with open boundary conditions.


\begin{thebibliography}{10}

\bibitem{Preskill2018}
John Preskill.
\newblock ``{Quantum Computing in the NISQ era and beyond}''.
\newblock \href{https://dx.doi.org/10.22331/q-2018-08-06-79}{Quantum {\bf 2}, 79}~(2018).

\bibitem{Arute2019}
Frank Arute, Kunal Arya, Ryan Babbush, Dave Bacon, Joseph~C. Bardin, Rami Barends, Rupak Biswas, Sergio Boixo, Fernando G. S.~L. Brandao, David~A. Buell, Brian Burkett, Yu~Chen, Zijun Chen, Ben Chiaro, Roberto Collins, William Courtney, Andrew Dunsworth, Edward Farhi, Brooks Foxen, Austin Fowler, Craig Gidney, Marissa Giustina, Rob Graff, Keith Guerin, Steve Habegger, Matthew~P. Harrigan, Michael~J. Hartmann, Alan Ho, Markus Hoffmann, Trent Huang, Travis~S. Humble, Sergei~V. Isakov, Evan Jeffrey, Zhang Jiang, Dvir Kafri, Kostyantyn Kechedzhi, Julian Kelly, Paul~V. Klimov, Sergey Knysh, Alexander Korotkov, Fedor Kostritsa, David Landhuis, Mike Lindmark, Erik Lucero, Dmitry Lyakh, Salvatore Mandr{\`{a}}, Jarrod~R. McClean, Matthew McEwen, Anthony Megrant, Xiao Mi, Kristel Michielsen, Masoud Mohseni, Josh Mutus, Ofer Naaman, Matthew Neeley, Charles Neill, Murphy~Yuezhen Niu, Eric Ostby, Andre Petukhov, John~C. Platt, Chris Quintana, Eleanor~G. Rieffel, Pedram Roushan, Nicholas~C. Rubin, Daniel Sank, Kevin~J.
  Satzinger, Vadim Smelyanskiy, Kevin~J. Sung, Matthew~D. Trevithick, Amit Vainsencher, Benjamin Villalonga, Theodore White, Z.~Jamie Yao, Ping Yeh, Adam Zalcman, Hartmut Neven, and John~M. Martinis.
\newblock ``{Quantum supremacy using a programmable superconducting processor}''.
\newblock \href{https://dx.doi.org/10.1038/s41586-019-1666-5}{Nature {\bf 574}, 505--510}~(2019).

\bibitem{Zhong2020}
Han-Sen Zhong, Hui Wang, Yu-Hao Deng, Ming-Cheng Chen, Li-Chao Peng, Yi-Han Luo, Jian Qin, Dian Wu, Xing Ding, Yi~Hu, Peng Hu, Xiao-Yan Yang, Wei-Jun Zhang, Hao Li, Yuxuan Li, Xiao Jiang, Lin Gan, Guangwen Yang, Lixing You, Zhen Wang, Li~Li, Nai-Le Liu, Chao-Yang Lu, and Jian-Wei Pan.
\newblock ``{Quantum computational advantage using photons}''.
\newblock \href{https://dx.doi.org/10.1126/science.abe8770}{Science {\bf 370}, 1460--1463}~(2020).

\bibitem{Bluvstein2024}
Dolev Bluvstein, Simon~J. Evered, Alexandra~A. Geim, Sophie~H. Li, Hengyun Zhou, Tom Manovitz, Sepehr Ebadi, Madelyn Cain, Marcin Kalinowski, Dominik Hangleiter, J.~Pablo {Bonilla Ataides}, Nishad Maskara, Iris Cong, Xun Gao, Pedro {Sales Rodriguez}, Thomas Karolyshyn, Giulia Semeghini, Michael~J. Gullans, Markus Greiner, Vladan Vuleti{\'{c}}, and Mikhail~D. Lukin.
\newblock ``{Logical quantum processor based on reconfigurable atom arrays}''.
\newblock \href{https://dx.doi.org/10.1038/s41586-023-06927-3}{Nature {\bf 626}, 58--65}~(2024).
\newblock  \href{http://arxiv.org/abs/2312.03982}{arXiv:2312.03982}.

\bibitem{Kearns1994}
Michael Kearns, Yishay Mansour, Dana Ron, Ronitt Rubinfeld, Robert~E. Schapire, and Linda Sellie.
\newblock ``On the learnability of discrete distributions''.
\newblock In Proceedings of the Twenty-Sixth Annual ACM Symposium on Theory of Computing.
\newblock \href{https://dx.doi.org/10.1145/195058.195155}{Page 273–282}.
\newblock STOC '94New York, NY, USA~(1994). Association for Computing Machinery.

\bibitem{Zheng2017}
Bo-Xiao Zheng, Chia-Min Chung, Philippe Corboz, Georg Ehlers, Ming-Pu Qin, Reinhard~M. Noack, Hao Shi, Steven~R. White, Shiwei Zhang, and Garnet Kin-Lic Chan.
\newblock ``{Stripe order in the underdoped region of the two-dimensional Hubbard model}''.
\newblock \href{https://dx.doi.org/10.1126/science.aam7127}{Science {\bf 358}, 1155--1160}~(2017).
\newblock  \href{http://arxiv.org/abs/1701.00054}{arXiv:1701.00054}.

\bibitem{Qin2020}
Mingpu Qin, Chia-Min Chung, Hao Shi, Ettore Vitali, Claudius Hubig, Ulrich Schollw{\"{o}}ck, Steven~R. White, and Shiwei Zhang.
\newblock ``{Absence of Superconductivity in the Pure Two-Dimensional Hubbard Model}''.
\newblock \href{https://dx.doi.org/10.1103/PhysRevX.10.031016}{Physical Review X {\bf 10}, 031016}~(2020).
\newblock  \href{http://arxiv.org/abs/1910.08931}{arXiv:1910.08931}.

\bibitem{Xu2024}
Hao Xu, Chia-Min Chung, Mingpu Qin, Ulrich Schollw{\"{o}}ck, Steven~R. White, and Shiwei Zhang.
\newblock ``{Coexistence of superconductivity with partially filled stripes in the Hubbard model}''.
\newblock \href{https://dx.doi.org/10.1126/science.adh7691}{Science {\bf 384}, eadh7691}~(2024).
\newblock  \href{http://arxiv.org/abs/2303.08376}{arXiv:2303.08376}.

\bibitem{Kandala2017}
Abhinav Kandala, Antonio Mezzacapo, Kristan Temme, Maika Takita, Markus Brink, Jerry~M. Chow, and Jay~M. Gambetta.
\newblock ``{Hardware-efficient variational quantum eigensolver for small molecules and quantum magnets}''.
\newblock \href{https://dx.doi.org/10.1038/nature23879}{Nature {\bf 549}, 242--246}~(2017).

\bibitem{Huang2020}
Hsin-Yuan Huang, Richard Kueng, and John Preskill.
\newblock ``{Predicting many properties of a quantum system from very few measurements}''.
\newblock \href{https://dx.doi.org/10.1038/s41567-020-0932-7}{Nature Physics {\bf 16}, 1050--1057}~(2020).
\newblock  \href{http://arxiv.org/abs/2002.08953}{arXiv:2002.08953}.

\bibitem{Huang2024a}
Hsin-Yuan Huang, Yunchao Liu, Michael Broughton, Isaac Kim, Anurag Anshu, Zeph Landau, and Jarrod~R. McClean.
\newblock ``{Learning shallow quantum circuits}''~(2024).
\newblock  \href{http://arxiv.org/abs/2401.10095}{arXiv:2401.10095}.

\bibitem{Huang2024}
Hsin-Yuan Huang, John Preskill, and Mehdi Soleimanifar.
\newblock ``{Certifying almost all quantum states with few single-qubit measurements}''~(2024).
\newblock  \href{http://arxiv.org/abs/2404.07281}{arXiv:2404.07281}.

\bibitem{Farhi2014}
Edward Farhi, Jeffrey Goldstone, and Sam Gutmann.
\newblock ``{A Quantum Approximate Optimization Algorithm}''~(2014).
\newblock  \href{http://arxiv.org/abs/1411.4028}{arXiv:1411.4028}.

\bibitem{Zhou2020}
Leo Zhou, Sheng-Tao Wang, Soonwon Choi, Hannes Pichler, and Mikhail~D. Lukin.
\newblock ``{Quantum Approximate Optimization Algorithm: Performance, Mechanism, and Implementation on Near-Term Devices}''.
\newblock \href{https://dx.doi.org/10.1103/PhysRevX.10.021067}{Physical Review X {\bf 10}, 021067}~(2020).
\newblock  \href{http://arxiv.org/abs/1812.01041}{arXiv:1812.01041}.

\bibitem{Bharti2022}
Kishor Bharti, Alba Cervera-Lierta, Thi~Ha Kyaw, Tobias Haug, Sumner Alperin-Lea, Abhinav Anand, Matthias Degroote, Hermanni Heimonen, Jakob~S. Kottmann, Tim Menke, Wai-Keong Mok, Sukin Sim, Leong-Chuan Kwek, and Al{\'{a}}n Aspuru-Guzik.
\newblock ``{Noisy intermediate-scale quantum algorithms}''.
\newblock \href{https://dx.doi.org/10.1103/RevModPhys.94.015004}{Reviews of Modern Physics {\bf 94}, 015004}~(2022).

\bibitem{Cerezo2021a}
M.~Cerezo, Andrew Arrasmith, Ryan Babbush, Simon~C. Benjamin, Suguru Endo, Keisuke Fujii, Jarrod~R. McClean, Kosuke Mitarai, Xiao Yuan, Lukasz Cincio, and Patrick~J. Coles.
\newblock ``{Variational quantum algorithms}''.
\newblock \href{https://dx.doi.org/10.1038/s42254-021-00348-9}{Nature Reviews Physics {\bf 3}, 625--644}~(2021).
\newblock  \href{http://arxiv.org/abs/2012.09265}{arXiv:2012.09265}.

\bibitem{McClean2018}
Jarrod~R. McClean, Sergio Boixo, Vadim~N. Smelyanskiy, Ryan Babbush, and Hartmut Neven.
\newblock ``{Barren plateaus in quantum neural network training landscapes}''.
\newblock \href{https://dx.doi.org/10.1038/s41467-018-07090-4}{Nature Communications {\bf 9}, 1--7}~(2018).
\newblock  \href{http://arxiv.org/abs/1803.11173}{arXiv:1803.11173}.

\bibitem{Cerezo2021}
M.~Cerezo, Akira Sone, Tyler Volkoff, Lukasz Cincio, and Patrick~J. Coles.
\newblock ``{Cost function dependent barren plateaus in shallow parametrized quantum circuits}''.
\newblock \href{https://dx.doi.org/10.1038/s41467-021-21728-w}{Nature Communications {\bf 12}, 1791}~(2021).
\newblock  \href{http://arxiv.org/abs/2001.00550}{arXiv:2001.00550}.

\bibitem{Uvarov2021a}
A.~V. Uvarov and J.~D. Biamonte.
\newblock ``{On barren plateaus and cost function locality in variational quantum algorithms}''.
\newblock Journal of Physics A: Mathematical and Theoretical{\bf 54}~(2021).
\newblock  \href{http://arxiv.org/abs/2011.10530}{arXiv:2011.10530}.

\bibitem{Pesah2021}
Arthur Pesah, M~Cerezo, Samson Wang, Tyler Volkoff, Andrew~T Sornborger, and Patrick~J Coles.
\newblock ``{Absence of Barren Plateaus in Quantum Convolutional Neural Networks}''.
\newblock \href{https://dx.doi.org/10.1103/PhysRevX.11.041011}{Physical Review X {\bf 11}, 041011}~(2021).
\newblock  \href{http://arxiv.org/abs/2011.02966}{arXiv:2011.02966}.

\bibitem{Arrasmith2021}
Andrew Arrasmith, Zo{\"{e}} Holmes, M.~Cerezo, and Patrick~J. Coles.
\newblock ``{Equivalence of quantum barren plateaus to cost concentration and narrow gorges}''.
\newblock \href{https://dx.doi.org/10.1088/2058-9565/ac7d06}{Quantum Science and Technology {\bf 7}, 045015}~(2022).
\newblock  \href{http://arxiv.org/abs/2104.05868}{arXiv:2104.05868}.

\bibitem{Holmes2021}
Zo{\"{e}} Holmes, Kunal Sharma, M~Cerezo, and Patrick~J Coles.
\newblock ``{Connecting Ansatz Expressibility to Gradient Magnitudes and Barren Plateaus}''.
\newblock \href{https://dx.doi.org/10.1103/PRXQuantum.3.010313}{PRX Quantum {\bf 3}, 010313}~(2022).
\newblock  \href{http://arxiv.org/abs/2101.02138}{arXiv:2101.02138}.

\bibitem{Larocca2022}
Martin Larocca, Piotr Czarnik, Kunal Sharma, Gopikrishnan Muraleedharan, Patrick~J. Coles, and M.~Cerezo.
\newblock ``{Diagnosing Barren Plateaus with Tools from Quantum Optimal Control}''.
\newblock \href{https://dx.doi.org/10.22331/q-2022-09-29-824}{Quantum {\bf 6}, 824}~(2022).
\newblock  \href{http://arxiv.org/abs/2105.14377}{arXiv:2105.14377}.

\bibitem{Liu2024}
Xia Liu, Geng Liu, Hao~Kai Zhang, Jiaxin Huang, and Xin Wang.
\newblock ``{Mitigating Barren Plateaus of Variational Quantum Eigensolvers}''.
\newblock \href{https://dx.doi.org/10.1109/TQE.2024.3383050}{IEEE Transactions on Quantum Engineering {\bf PP}, 1--19}~(2024).
\newblock  \href{http://arxiv.org/abs/2205.13539}{arXiv:2205.13539}.

\bibitem{Zhang2024a}
Hao-Kai Zhang, Chengkai Zhu, Geng Liu, and Xin Wang.
\newblock ``{Exponential Hardness of Optimization from the Locality in Quantum Neural Networks}''.
\newblock \href{https://dx.doi.org/10.1609/aaai.v38i15.29614}{Proceedings of the AAAI Conference on Artificial Intelligence {\bf 38}, 16741--16749}~(2024).

\bibitem{Cerezo2023}
M~Cerezo, Martin Larocca, Diego Garc{\'{i}}a-Mart{\'{i}}n, N~L Diaz, Paolo Braccia, Enrico Fontana, Manuel~S. Rudolph, Pablo Bermejo, Aroosa Ijaz, Supanut Thanasilp, Eric~R Anschuetz, and Zo{\"{e}} Holmes.
\newblock ``{Does provable absence of barren plateaus imply classical simulability? Or, why we need to rethink variational quantum computing}''~(2023).
\newblock  \href{http://arxiv.org/abs/2312.09121}{arXiv:2312.09121}.

\bibitem{Larocca2024}
Martin Larocca, Supanut Thanasilp, Samson Wang, Kunal Sharma, Jacob Biamonte, Patrick~J. Coles, Lukasz Cincio, Jarrod~R. McClean, Zo{\"{e}} Holmes, and M.~Cerezo.
\newblock ``{A Review of Barren Plateaus in Variational Quantum Computing}''~(2024).
\newblock  \href{http://arxiv.org/abs/2405.00781}{arXiv:2405.00781}.

\bibitem{Zhang2024}
Hao-Kai Zhang, Shuo Liu, and Shi-Xin Zhang.
\newblock ``{Absence of Barren Plateaus in Finite Local-Depth Circuits with Long-Range Entanglement}''.
\newblock \href{https://dx.doi.org/10.1103/PhysRevLett.132.150603}{Physical Review Letters {\bf 132}, 150603}~(2024).
\newblock  \href{http://arxiv.org/abs/2311.01393}{arXiv:2311.01393}.

\bibitem{Liu2023a_z}
Shuo Liu, Shi-Xin Zhang, Shao-Kai Jian, and Hong Yao.
\newblock ``{Training variational quantum algorithms with random gate activation}''.
\newblock Physical Review Research {\bf 5}, L032040~(2023).
\newblock  \href{http://arxiv.org/abs/2303.08154}{arXiv:2303.08154}.

\bibitem{Ragone2024}
Michael Ragone, Bojko~N. Bakalov, Fr{\'{e}}d{\'{e}}ric Sauvage, Alexander~F. Kemper, Carlos~Ortiz Marrero, Martin Larocca, and M.~Cerezo.
\newblock ``{A Lie Algebraic Theory of Barren Plateaus for Deep Parameterized Quantum Circuits}''.
\newblock \href{https://dx.doi.org/10.1038/s41467-024-49909-3}{Nature Communications {\bf 15}, 7172}~(2023).
\newblock  \href{http://arxiv.org/abs/2309.09342}{arXiv:2309.09342}.

\bibitem{Anschuetz2022}
Eric~R. Anschuetz and Bobak~T. Kiani.
\newblock ``{Quantum variational algorithms are swamped with traps}''.
\newblock Nature Communications{\bf 13}~(2022).
\newblock  url:~\url{https://www.nature.com/articles/s41467-022-35364-5}.

\bibitem{Bittel2021}
Lennart Bittel and Martin Kliesch.
\newblock ``{Training Variational Quantum Algorithms Is NP-Hard}''.
\newblock \href{https://dx.doi.org/10.1103/PhysRevLett.127.120502}{Physical Review Letters {\bf 127}, 120502}~(2021).
\newblock  \href{http://arxiv.org/abs/2101.07267}{arXiv:2101.07267}.

\bibitem{Zhang2023}
Hao-Kai Zhang, Chenghong Zhu, Mingrui Jing, and Xin Wang.
\newblock ``{Statistical Analysis of Quantum State Learning Process in Quantum Neural Networks}''.
\newblock Advances in Neural Information Processing Systems~(2023).
\newblock  \href{http://arxiv.org/abs/2309.14980}{arXiv:2309.14980}.

\bibitem{Cong2019}
Iris Cong, Soonwon Choi, and Mikhail~D. Lukin.
\newblock ``{Quantum convolutional neural networks}''.
\newblock \href{https://dx.doi.org/10.1038/s41567-019-0648-8}{Nature Physics {\bf 15}, 1273--1278}~(2019).
\newblock  \href{http://arxiv.org/abs/1810.03787}{arXiv:1810.03787}.

\bibitem{Wiersema2020}
Roeland Wiersema, Cunlu Zhou, Yvette de~Sereville, Juan~Felipe Carrasquilla, Yong~Baek Kim, and Henry Yuen.
\newblock ``{Exploring entanglement and optimization within the Hamiltonian Variational Ansatz}''.
\newblock \href{https://dx.doi.org/10.1103/PRXQuantum.1.020319}{PRX Quantum {\bf 1}, 1}~(2020).
\newblock  \href{http://arxiv.org/abs/2008.02941}{arXiv:2008.02941}.

\bibitem{Larocca2023}
Mart{\'{i}}n Larocca, Nathan Ju, Diego Garc{\'{i}}a-Mart{\'{i}}n, Patrick~J. Coles, and Marco Cerezo.
\newblock ``{Theory of overparametrization in quantum neural networks}''.
\newblock \href{https://dx.doi.org/10.1038/s43588-023-00467-6}{Nature Computational Science {\bf 3}, 542--551}~(2023).
\newblock  \href{http://arxiv.org/abs/2109.11676}{arXiv:2109.11676}.

\bibitem{Miao2024}
Qiang Miao and Thomas Barthel.
\newblock ``{Equivalence of cost concentration and gradient vanishing for quantum circuits: an elementary proof in the Riemannian formulation}''~(2024).
\newblock  \href{http://arxiv.org/abs/2402.07883}{arXiv:2402.07883}.

\bibitem{Perez-Salinas2024}
Adri{\'{a}}n P{\'{e}}rez-Salinas, Hao Wang, and Xavier Bonet-Monroig.
\newblock ``{Analyzing variational quantum landscapes with information content}''.
\newblock \href{https://dx.doi.org/10.1038/s41534-024-00819-8}{npj Quantum Information {\bf 10}, 27}~(2024).
\newblock  \href{http://arxiv.org/abs/2303.16893}{arXiv:2303.16893}.

\bibitem{Haug2021a}
Tobias Haug, Kishor Bharti, and M.S. Kim.
\newblock ``{Capacity and Quantum Geometry of Parametrized Quantum Circuits}''.
\newblock \href{https://dx.doi.org/10.1103/prxquantum.2.040309}{PRX Quantum {\bf 2}, 1--13}~(2021).
\newblock  \href{http://arxiv.org/abs/2102.01659}{arXiv:2102.01659}.

\bibitem{Heyraud2023}
Valentin Heyraud, Zejian Li, Kaelan Donatella, Alexandre {Le Boit{\'{e}}}, and Cristiano Ciuti.
\newblock ``{Efficient Estimation of Trainability for Variational Quantum Circuits}''.
\newblock \href{https://dx.doi.org/10.1103/PRXQuantum.4.040335}{PRX Quantum {\bf 4}, 040335}~(2023).
\newblock  \href{http://arxiv.org/abs/2302.04649}{arXiv:2302.04649}.

\bibitem{Chen2011}
Xie Chen, Zheng-Cheng Gu, and Xiao-Gang Wen.
\newblock ``{Complete classification of one-dimensional gapped quantum phases in interacting spin systems}''.
\newblock \href{https://dx.doi.org/10.1103/PhysRevB.84.235128}{Physical Review B {\bf 84}, 235128}~(2011).
\newblock  \href{http://arxiv.org/abs/1008.3745}{arXiv:1008.3745}.

\bibitem{Ostaszewski2019}
Mateusz Ostaszewski, Edward Grant, and Marcello Benedetti.
\newblock ``{Structure optimization for parameterized quantum circuits}''.
\newblock \href{https://dx.doi.org/10.22331/q-2021-01-28-391}{Quantum {\bf 5}, 391}~(2021).
\newblock  \href{http://arxiv.org/abs/1905.09692}{arXiv:1905.09692}.

\bibitem{Zhang2020b_z}
Shi-Xin Zhang, Chang-Yu Hsieh, Shengyu Zhang, and Hong Yao.
\newblock ``{Differentiable quantum architecture search}''.
\newblock Quantum Science and Technology {\bf 7}, 045023~(2022).
\newblock  \href{http://arxiv.org/abs/2010.08561}{arXiv:2010.08561}.

\bibitem{Du2020a_z}
Yuxuan Du, Tao Huang, Shan You, Min-Hsiu Hsieh, and Dacheng Tao.
\newblock ``{Quantum circuit architecture search for variational quantum algorithms}''.
\newblock npj Quantum Information {\bf 8}, 62~(2022).
\newblock  \href{http://arxiv.org/abs/2010.10217}{arXiv:2010.10217}.

\bibitem{Zhang2021_z}
Shi-Xin Zhang, Chang-Yu Hsieh, Shengyu Zhang, and Hong Yao.
\newblock ``{Neural predictor based quantum architecture search}''.
\newblock Machine Learning: Science and Technology {\bf 2}, 045027~(2021).
\newblock  \href{http://arxiv.org/abs/2103.06524}{arXiv:2103.06524}.

\bibitem{Du2021}
Yuxuan Du, Min-Hsiu Hsieh, Tongliang Liu, Shan You, and Dacheng Tao.
\newblock ``{Learnability of Quantum Neural Networks}''.
\newblock \href{https://dx.doi.org/10.1103/PRXQuantum.2.040337}{PRX Quantum {\bf 2}, 040337}~(2021).

\bibitem{Zhang2022_z}
Shi-Xin Zhang, Jonathan Allcock, Zhou-Quan Wan, Shuo Liu, Jiace Sun, Hao Yu, Xing-Han Yang, Jiezhong Qiu, Zhaofeng Ye, Yu-Qin Chen, Chee-Kong Lee, Yi-Cong Zheng, Shao-Kai Jian, Hong Yao, Chang-Yu Hsieh, and Shengyu Zhang.
\newblock ``{TensorCircuit: a Quantum Software Framework for the NISQ Era}''.
\newblock \href{https://dx.doi.org/10.22331/q-2023-02-02-912}{Quantum {\bf 7}, 912}~(2023).
\newblock  \href{http://arxiv.org/abs/2205.10091}{arXiv:2205.10091}.

\bibitem{Hu2023}
Hong~Ye Hu, Soonwon Choi, and Yi~Zhuang You.
\newblock ``{Classical shadow tomography with locally scrambled quantum dynamics}''.
\newblock \href{https://dx.doi.org/10.1103/PhysRevResearch.5.023027}{Physical Review Research {\bf 5}, 1--26}~(2023).
\newblock  \href{http://arxiv.org/abs/2107.04817}{arXiv:2107.04817}.

\bibitem{Ceperley1986}
David Ceperley and Berni Alder.
\newblock ``{Quantum Monte Carlo}''.
\newblock \href{https://dx.doi.org/10.1126/science.231.4738.555}{Science {\bf 231}, 555--560}~(1986).

\bibitem{White1992}
Steven~R. White.
\newblock ``{Density matrix formulation for quantum renormalization groups}''.
\newblock \href{https://dx.doi.org/10.1103/PhysRevLett.69.2863}{Physical Review Letters {\bf 69}, 2863--2866}~(1992).

\bibitem{Verstraete2006}
F.~Verstraete, M.~M. Wolf, D.~Perez-Garcia, and J.~I. Cirac.
\newblock ``{Criticality, the Area Law, and the Computational Power of Projected Entangled Pair States}''.
\newblock \href{https://dx.doi.org/10.1103/PhysRevLett.96.220601}{Physical Review Letters {\bf 96}, 220601}~(2006).
\newblock  \href{http://arxiv.org/abs/0601075}{arXiv:0601075}.

\bibitem{Vidal2007}
G.~Vidal.
\newblock ``{Entanglement Renormalization}''.
\newblock \href{https://dx.doi.org/10.1103/PhysRevLett.99.220405}{Physical Review Letters {\bf 99}, 220405}~(2007).
\newblock  \href{http://arxiv.org/abs/0512165}{arXiv:0512165}.

\bibitem{Schollwock2011}
Ulrich Schollw{\"{o}}ck.
\newblock ``{The density-matrix renormalization group in the age of matrix product states}''.
\newblock \href{https://dx.doi.org/10.1016/j.aop.2010.09.012}{Annals of Physics {\bf 326}, 96--192}~(2011).
\newblock  \href{http://arxiv.org/abs/1008.3477}{arXiv:1008.3477}.

\bibitem{Hastings2007}
M~B Hastings.
\newblock ``{An area law for one-dimensional quantum systems}''.
\newblock \href{https://dx.doi.org/10.1088/1742-5468/2007/08/P08024}{Journal of Statistical Mechanics: Theory and Experiment {\bf 2007}, P08024--P08024}~(2007).
\newblock  \href{http://arxiv.org/abs/0705.2024}{arXiv:0705.2024}.

\bibitem{Markov2008}
Igor~L Markov and Yaoyun Shi.
\newblock ``{Simulating Quantum Computation by Contracting Tensor Networks}''.
\newblock \href{https://dx.doi.org/10.1137/050644756}{SIAM Journal on Computing {\bf 38}, 963--981}~(2008).

\bibitem{Schuch2007}
Norbert Schuch, Michael~M. Wolf, Frank Verstraete, and J.~Ignacio Cirac.
\newblock ``{Computational Complexity of Projected Entangled Pair States}''.
\newblock \href{https://dx.doi.org/10.1103/PhysRevLett.98.140506}{Physical Review Letters {\bf 98}, 140506}~(2007).
\newblock  \href{http://arxiv.org/abs/0611050}{arXiv:0611050}.

\bibitem{Orus2019}
Rom{\'{a}}n Or{\'{u}}s.
\newblock ``{Tensor networks for complex quantum systems}''.
\newblock \href{https://dx.doi.org/10.1038/s42254-019-0086-7}{Nature Reviews Physics {\bf 1}, 538--550}~(2019).
\newblock  \href{http://arxiv.org/abs/1812.04011}{arXiv:1812.04011}.

\bibitem{Brandao2016}
Fernando G. S.~L. Brand{\~{a}}o, Aram~W. Harrow, and Micha{\l} Horodecki.
\newblock ``{Local Random Quantum Circuits are Approximate Polynomial-Designs}''.
\newblock \href{https://dx.doi.org/10.1007/s00220-016-2706-8}{Communications in Mathematical Physics {\bf 346}, 397--434}~(2016).
\newblock  \href{http://arxiv.org/abs/1208.0692}{arXiv:1208.0692}.

\bibitem{Meyer2021}
Johannes~Jakob Meyer.
\newblock ``{Fisher Information in Noisy Intermediate-Scale Quantum Applications}''.
\newblock \href{https://dx.doi.org/10.22331/q-2021-09-09-539}{Quantum {\bf 5}, 539}~(2021).
\newblock  \href{http://arxiv.org/abs/2103.15191}{arXiv:2103.15191}.

\bibitem{Mari2021}
Andrea Mari, Thomas~R. Bromley, and Nathan Killoran.
\newblock ``{Estimating the gradient and higher-order derivatives on quantum hardware}''.
\newblock \href{https://dx.doi.org/10.1103/PhysRevA.103.012405}{Physical Review A {\bf 103}, 012405}~(2021).
\newblock  \href{http://arxiv.org/abs/2008.06517}{arXiv:2008.06517}.

\bibitem{Garcia-Martin2023}
Diego Garc{\'{i}}a-Mart{\'{i}}n, Martin Larocca, and M.~Cerezo.
\newblock ``{Deep quantum neural networks form Gaussian processes}''~(2023).
\newblock  \href{http://arxiv.org/abs/2305.09957}{arXiv:2305.09957}.

\end{thebibliography}
\end{document}